\def\eqref#1{equation~\ref{#1}}
\def\1{\bm{1}}
\DeclareMathAlphabet{\mathsfit}{\encodingdefault}{\sfdefault}{m}{sl}
\SetMathAlphabet{\mathsfit}{bold}{\encodingdefault}{\sfdefault}{bx}{n}
\newcommand{\E}{\mathbb{E}}
\newcommand{\filtmeas}{\Phi}
\newcommand{\filtdens}{\phi}
\newcommand{\XinitIS}[2][]
{\ifthenelse{\equal{#1}{}}{\ensuremath{\rho_{#2}}}{\ensuremath{\check{\rho}_{#2}}}}
\newcommand{\rmd}{\ensuremath{\mathrm{d}}}
\newcommand{\eqsp}{\;}
\newcommand{\filt}[2][]%
{%
\ifthenelse{\equal{#1}{}}{\ensuremath{\phi_{#2}}}{\ensuremath{\phi_{#1,#2}}}%
}
\newcommand{\sumwght}[2][]{%
\ifthenelse{\equal{#1}{}}{\ensuremath{\Omega_{#2}}}{\ensuremath{\Omega_{#2}^{(#1)}}}}
\newcommand{\sumwghthat}[2][]{%
\ifthenelse{\equal{#1}{}}{\ensuremath{\widehat{\Omega}_{#2}}}{\ensuremath{\widehat{\Omega}_{#2}^{(#1)}}}}
\newcommand{\udlow}{\sigma_-}
\newcommand{\udup}{\sigma_+}
\newcommand{\udlowvar}{\vartheta_-}
\newcommand{\udupvar}{\vartheta_+}
\newcommand{\parvec}{\theta}
\newcommand{\parspace}{\Theta}
\newcommand{\parvar}{\varphi}
\newcommand{\parvarspace}{\Phi}
\newcommand{\vard}[1]{q_{\parvar,#1}} 
\newcommand{\Xset}{\mathsf{X}}
\def\Yset{\mathsf{Y}}
\newcommand{\dimt}{d_\theta}
\newcommand{\dimv}{d_\varphi}
\newcounter{hypH}
\newenvironment{hypH}{\refstepcounter{hypH}\begin{itemize}
\item[{\bf H\arabic{hypH}}]}{\end{itemize}}
\newcounter{hypA}
\newenvironment{hypA}{\refstepcounter{hypA}\begin{itemize}
\item[{\bf A\arabic{hypA}}]}{\end{itemize}}
\newtheorem{theorem}{Theorem}[section]
\newtheorem{lemma}[theorem]{Lemma}
\newtheorem{proposition}[theorem]{Proposition}
\newtheorem{corollary}[theorem]{Corollary}
\title{Variational excess risk bound  for general state space models}
\author[*]{\'Elisabeth Gassiat}
\author[$\dag$]{Sylvain Le Corff}
\affil[*]{{\small  Universit\'e Paris-Saclay, CNRS. Laboratoire de mathématiques d’Orsay, 91405, Orsay, France.}}
\affil[$\dag$]{{\small LPSM, 
       Sorbonne Universit\'e, UMR CNRS 8001, Paris, France.}}
\begin{document}

\maketitle

\begin{abstract}
 In this paper, we consider variational autoencoders (VAE) for general state space models. 
We consider a backward factorization of the variational distributions to analyze the excess risk associated with VAE. Such backward factorizations were recently proposed to perform online variational learning \cite{campbell2021online} and to obtain upper bounds on the variational estimation error \cite{mathisJMLR}. When independent trajectories of sequences are observed and under strong mixing assumptions on the state space model and on the variational distribution, we provide an oracle inequality explicit in the number of samples and in the length of the observation sequences. 
We then derive consequences of this theoretical result. In particular, when the data distribution is given by a state space model, we provide an upper bound for the Kullback-Leibler divergence between the data distribution and its estimator and between the variational posterior and the estimated state space posterior distributions.
Under classical assumptions, we prove that our results can be applied to Gaussian backward kernels built with dense and recurrent neural networks.
\end{abstract}

\section{Introduction}

Deep generative models have been increasingly used and analyzed for the past few years. In this setting, Variational autoencoders (VAEs) offer the possibility to simultaneously model and train (i) the conditional distribution of the observation given latent variables referred to as the decoder, and (ii) a variational approximation of the conditional distribution of the latent variable given the observation referred to as the encoder. They have been successfully applied in many contexts such as image generation
\cite{vahdat2020nvae}, text generation \cite{bowman2015generating}, state estimation and image reconstruction \cite{pmlr-v162-cohen22b}.

Variational inference has been widely and satisfactorily used for many practical applications but its theoretical properties has been analyzed only very recently.
Theoretical guarantees have been mostly proposed for variational inference procedures in settings where datasets are based on independent data and for mean-field approximations. In \cite{huggins2020validated}, the authors provided variational error bounds, in particular for the estimation of the posterior mean and covariance. In \cite{10.1214/18-EJS1475}, the authors established the concentration of variational approximations of posterior distributions for mixtures of general laws using PAC-Bayesian theory. The PAC-Bayesian theory has also been used 
in \cite{mbackeEtal2023} where the authors controlled in particular the $\mathrm{L}^2$ reconstruction loss under the true data distribution for VAEs. In addition, \cite{tang21a} provided a theoretical analysis of the excess risk for Empirical Bayes Variational Autoencoders for both parametric and nonparametric settings. They derived a set of generic assumptions to obtain an oracle inequality explicit in the number of samples and  proposed an upper bound for the total variation distance
between the true distribution of the observations and a variational approximation combining the empirical distribution of the dataset and the proposed VAE architecture.

In this paper, we aim at extending the theoretical results on variational inference procedures in two directions. First, we set the focus on the use of VAEs for general state space models, i.e. settings where the 
decoding distribution $P^Y_{\theta}$ of the observations depends on an unobserved Markov chain.  
In addition, instead of using mean-field approximations, we consider variational encoding  distributions $Q_{\varphi}$ satisfying a backward factorization as proposed in \cite{campbell2021online,mathisJMLR}. In \cite{mathisJMLR}, the authors derived the first theoretical results providing upper bounds on the state decoding estimation error when using variational inference with backward factorization and no such results were proposed for state space models using a mean-field approximations. This factorization was used in \cite{campbell2021online} to define new online variational estimation algorithms, where observations are processed on-the-fly.

In this paper, we provide the first (up to our knowledge) theoretical guarantees on the trained variational approximation in the setting of independent copies of sequences with distribution $P_{\cal D}$ when using a backward variational factorization. 
\begin{itemize}
    \item 
We
provide assumptions on the 
decoding
and variational encoding kernels  under which we prove an oracle inequality for the risk explicit in particular in the number of samples and in the length of the observation sequences, see Theorem~\ref{th:oracle}. This result is established using an alternative formulation of \cite[Theorem 3]{tang21a} in our state space setting and with an explicit dependency on  some constants to track all terms depending on the number of observations. This allows to understand when the procedure leads to a decoding distribution that approximates well the data distribution together with a coding distribution which approximates well the decoding state distribution.
\item 
In particular, when data are generated from a general state space model, and when $P_{\cal D}$ belongs to the decoding family of distributions, we give an upper bound also explicit in the way the backward coding kernels approximate the backward decoding kernels, see Corollary \ref{cor:kl}.
\item
We analyse settings in which our results hold, in particular settings with Gaussian backward kernels based on Multi-Layer Perceptrons (MLPs) and on Recurrent Neural Networks (RNNs).
\end{itemize}
The paper  is organised as follows. The general setting and notations for state space models and variational learning are given in Section~\ref{sec:setting}. Assumptions and theoretical results are proposed in Section~\ref{sec:results} along with discussions on specific deep architectures used in practice. A discussion with insights for future works is given in Section~\ref{sec:discussion}. Detailed proofs of theoretical results are given in Appendices \ref{app:A} and \ref{app:B}. Additional proofs to highlight that when the state and observation spaces are compact our main results hold are given in  Appendix \ref{app:C}.

\section{State space model and variational estimation}
\label{sec:setting}

Let $\Theta\subset \mathbb{R}^{\dimt}$ be a parameter space. In this paper, we consider a  state-space model depending on $\parvec\in\parspace$, i.e. a bivariate discrete-time process $\{(X_t,Y_t)\}_{t\geq 0}$ where $\{X_t\}_{t\geq 0}$ is a hidden Markov chain  in a measurable space $(\Xset,\mathcal{X})$ with initial distribution $\chi$ with density $\zeta$ with respect to a reference measure $\mu$ and for all $t \geqslant 0$, the conditional distribution of $X_{t+1} $ given $X_{0:t}$ is written $M_\theta(X_t,\cdot)$ and  has density $m_{\parvec}(X_{t},\cdot)$, where $a_{u:v}$ is a short-hand notation for $(a_u,\ldots,a_v)$ for $0\leqslant u \leqslant v$ and any sequence $(a_\ell)_{\ell\geqslant 0}$. The observations $\{Y_t\}_{0\leqslant t \leqslant T}$ take values in a measurable space $(\Yset,\mathcal{Y})$ and they are assumed to be independent conditionally on $X_{0:T}$ and, for all $0\leqslant t \leqslant T$, the distribution of $Y_t$ given $X_{0:T}$ depends on $X_t$ only, is written $G_{\parvec}(X_t,\cdot)$, and has density $y\mapsto g^y_{\parvec}(X_t)$ with respect to a reference measure $\nu$. 

In this context, the joint probability distribution $P_{\theta}$ of $(X_{0:T},Y_{0:T})$ has density with respect to $\mu^{\otimes (T+1)}\otimes \nu^{\otimes (T+1)}$ given, for all $\theta\in\Theta$, $x_{0:T}\in \Xset^{T+1}$ and all $y_{0:T}\in\Yset^{T+1}$, by
$$
p_{\theta,0:T}^{}(x_{0:T},y_{0:T}) =  \zeta(x_{0})g_\theta^{y_0}(x_0) \prod_{t=1}^Tm_\theta(x_{t-1},x_t)g_\theta^{y_t}(x_t)\eqsp,
$$
and the joint smoothing distribution, i.e. the conditional distribution of $X_{0:T}$ given $Y_{0:T}$, is given for all measurable function $h$ by 
$$
\filtmeas^{y_{0:T}}_{\theta,0:T|T}(h)  = \frac{\int\chi(\rmd x_{0})g_\theta^{y_0}(x_0) \prod_{t=1}^TM_\theta(x_{t-1},\rmd x_t)g_\theta^{y_t}(x_t)h(x_{0:T})}{\int\chi(\rmd x_{0})g_\theta^{y_0}(x_0) \prod_{t=1}^TM_\theta(x_{t-1},\rmd x_t)g_\theta^{y_t}(x_t)}.
$$
The probability density of $\filtmeas^{y_{0:T}}_{\theta,0:T|T}$ is denoted by $\filtdens^{y_{0:T}}_{\theta,0:T|T}$.
In the following, we use the notation $\Phi^{y_{0:t}}_{\theta,t} = \Phi^{y_{0:t}}_{\theta,0:t|t}$ to denote the the filtering distribution at time $t$, i.e. the conditional distribution of $X_t$ given $Y_{0:t}$, with a similar convention for the probability densities. The joint smoothing distribution can also be written
$$
\filtmeas^{y_{0:T}}_{\theta,0:T|T}(\rmd x_{0:T}) = \filtmeas^{y_{0:T}}_{\theta,T}(\rmd x_{T}) \prod_{t=0}^{T-1} B^{y_{0:T-t-1}}_{\theta,T-t-1|T-t}(x_{T-t},\rmd x_{T-t-1})\,,
$$
where $B^{y_{0:T-t-1}}_{\theta,T-t-1|T-t}(x_{T-t},\rmd x_{T-t-1})$ is the backward kernel at time $T-t$ defined by 
$$
B^{y_{0:T-t-1}}_{\theta,T-t-1|t}(x_{T-t},\rmd x_{T-t-1})\propto \filtmeas^{y_{0:T-t-1}}_{\theta,T-t-1}(\rmd x_{T-t-1})m_\theta(x_{T-t-1},x_{T-t})\eqsp,
$$ 
with a probability density with respect to $\mu$ denoted by $b^{y_{0:T-t-1}}_{\theta,T-t-1|T-t}(x_{T-t},\cdot)$. 
 For all $T$, $\theta$, $y_{0:T}\in\Yset^{T+1}$, the loglikelihood of the observations is:
$$
\ell^{y_{0:T}}_{T}(\theta) =  \log L^{y_{0:T}}_{T}(\theta)\eqsp,
$$
where
$$
L^{y_{0:T}}_{T}(\theta) =\int p_{\theta,0:T}(x_{0:T},y_{0:T}) \mu(\rmd x_{0:T})\eqsp.
$$
The joint smoothing distribution is usually intractable and we focus in this paper on variational learning to perform approximate maximum likelihood. Following  \cite{campbell2021online,mathisJMLR}, we propose a backward variational formulation:
$$
Q_{\varphi,0:T}^{y_{0:T}}(\rmd x_{0:T}) = Q_{\varphi,T}^{y_{0:T}}(\rmd x_T)\prod_{t=0}^{T-1} Q^{y_{0:T}}_{\varphi,T-t-1|T-t}(x_{T-t},\rmd x_{T-t-1})\,,
$$
where $\varphi\in\Phi\subset \mathbb{R}^{\dimv}$, and where $Q^{y_{0:T}}_{\varphi,T-t-1|T-t}(x_{T-t},\cdot)$ (resp. $Q_{\varphi,T}^{y_{0:T}}$) has probability density $q^{y_{0:T}}_{\varphi,T-t-1|T-t}(x_{T-t},\cdot)$ (resp. $q_{\varphi,T}^{y_{0:T}}$) with respect to the reference measure $\mu$.
In this setting, the ELBO writes, for all $\parvec\in\parspace$, $\parvar\in\parvarspace$, and for a sequence of observations $Y_{0:T}$,
$$
\mathrm{ELBO}^{Y_{0:T}}_T(\theta,\varphi) 
= \ell^{Y_{0:T}}_{T}(\theta)- \mathrm{KL}\left(Q^{Y_{0:T}}_{\varphi,0:T}\middle\| \filtmeas_{\theta,0:T|T}^{Y_{0:T}}\right)\,.
$$
Let $(Y^i_{0:T})_{1\leq i \leq n}$ be i.i.d. sequences with distribution $P_{\mathcal{D}}$ with density $p_{\mathcal{D}}$. Maximizing 
$(\theta,\varphi)\mapsto\sum_{i=1}^{n}\mathrm{ELBO}^{Y^i_{0:T}}_T(\theta,\varphi)$
is equivalent to minimizing the following loss function 
$$
\mathcal{L}_{n,T}(\theta,\varphi) = \frac{1}{n}\sum_{i=1}^nm(\theta,\varphi,Y^i_{0:T})\,,
$$
where
$$
  m(\theta,\varphi,Y^i_{0:T}) = \log \frac{p_{\mathcal{D}}(Y^i_{0:T})}{L^{Y^i_{0:T}}_{T}(\theta)} 
  + \mathrm{KL}\left(Q^{Y^i_{0:T}}_{\varphi,0:T}\middle\| \filtmeas_{\theta,0:T|T}^{Y^i_{0:T}}\right)\eqsp.
  $$
Define
$$
(\widehat\theta_{n,T},\widehat\varphi_{n,T}) \in\mathrm{argmin}_{\theta\in\parspace,\varphi\in\parvarspace} \;\mathcal{L}_{n,T}(\theta,\varphi)\eqsp.
$$
Such a procedure is a so-called $M$-estimation method in the statistical literature. The intuition is that with large data sets, that is when $n$ is large, the ELBO is closed to the expected  value of $m$ under the unknown distribution of the data, and the estimated decoding and coding parameters are close to minimize this expected value. An important body of work in the statistical community has been devoted to develop very general settings in which non asymptotic bounds on the risk of $M$-estimators, referred to as  oracle inequalities, can be given, see \cite{saraM} as early reference, or \cite{MR3967104} and the references therein  for more recent results. Moreover, oracle inequalities are obviously the only property one can hope for such estimators, the other properties being consequences of the oracle inequality.
In the following section, we thus first provide assumptions under which we obtain an oracle inequality and then discuss consequences.

\section{Main results}
\label{sec:results}

\subsection{Notations. }
In the following, for all measures $\lambda$ and $\eta$ on $(\Xset,\mathcal{X})$ and all transition kernels 
$K$ 
we consider the following notations. For all measurable sets $A\subset \Xset \times \Xset$,
$\lambda\otimes \eta(A) = \int \mathds{1}_A(x,x')\lambda(\rmd x)\eta(\rmd x')$
and $\lambda\otimes K(A) =  \int \mathds{1}_A(x,x')\lambda(\rmd x)K(x,\rmd x')$, 
for all measurable sets
$B\subset \Xset$,  $\lambda K(B) =  \int \lambda(\rmd x)\mathds{1}_B(x')K(x,\rmd x')$, and for all  real-valued measurable functions $h$ on $(\Xset,\mathcal{X})$, $\lambda(h) = \int\lambda(\rmd x)h(x)$. For all measurable functions $h_1, h_2$, we write $h_1\otimes h_2 :(x,x')\mapsto h_1(x) h_2(x')$. For all $\alpha>0$, define on $\mathbb{R}_+$ the function $\psi_{\alpha}: x\mapsto \exp(x^\alpha)-1$.  For all real-valued random variables $X$, define the Orlicz norm of order $\alpha$ 
by
$$
\|X\|_{\psi_\alpha} = \mathrm{inf}_{\lambda>0}\left\{\mathbb{E}\left[\psi_\alpha(|X|/\lambda)\right]\leq 1\right\}\eqsp.
$$
For all probability measures $P$ and $Q$ defined on the same probability space, $\|P-Q\|_{\mathrm{tv}}$ will denote the total variation norm between $P$ and $Q$, and $\mathrm{KL}\left(Q\middle\| P\right)$ their Kullback-Leibler divergence, that is $\mathrm{KL}\left(Q\middle\| P\right)=\E_{Q}[\log (\rmd Q/\rmd P)]$.

\subsection{Assumptions}
In this section, we propose a set of 
assumptions on the kernel densities $m_\theta$ and $\vard{t\vert t+1}^{y_{0:T}}$, $0\leq t\leq T-1$, and on the conditional densities $g_{\theta}^{y}$, under which we are able to prove
an oracle inequality. In the state space model literature, Assumption H\ref{assum:strong:mixing} is usual 
to control smoothing expectations and  H\ref{assum:bound:likelihood} for the study of asymptotic properties of maximum likelihood estimators. More assumptions are needed to manage the complexity of the models and to get a nonasymptotic control of the risk of the estimators. These controls are obtained with Assumptions H\ref{assum:lip}-\ref{assum:pdata}. We discuss in Section \ref{sec:appli} how they can be applied to specific architectures used in practice.
Additional discussions on the assumptions are provided in Appendix~\ref{sec:check:assum} where we prove that usual compact state space models are covered by our theory. 

\begin{hypH}
\label{assum:strong:mixing}
There exist probability measures $\eta_-$ and $\eta_+$ on $(\Xset,\mathcal{X})$ and constants $0 < \udlow < \udup < \infty$ such that for all $\parvec\in\Theta$, $x\in \Xset$, all measurable set $A$,
$$
\udlow \eta_-(A)\leq \chi(A) \leq \udup \eta_+(A)
$$
and
$$
\udlow\eta_-(A) \leq M_{\parvec}(x, A) \leq \udup \eta_+ (A)\eqsp.
$$ 
There exist probability measures $\lambda_-$ and $\lambda_+$ on $(\Xset,\mathcal{X})$ such that for all $y_{0:T}\in\Yset^{T+1}$, there exist $\udlowvar^{y_{0:T}}>0$ and $\udupvar^{y_{0:T}}>0$ such that for all $\varphi\in\Phi$, $t\geq 0$, $x\in \Xset$, all measurable set $A$,
$$ \udlowvar^{y_{0:T}}\lambda_-(A) \leq  Q_{\varphi,t\vert t+1}^{y_{0:T}}(x, A) \leq \udupvar^{y_{0:T}}\lambda_+(A)\eqsp. 
$$ 
In addition, for all $\varphi\in\Phi$, all $y_{0:T}\in \Yset^{T+1}$, and all measurable set $A$,
$$
\udlowvar^{y_{0:T}}\lambda_-(A) \leq  Q_{\varphi,T}^{y_{0:T}}(A) \leq \udupvar^{y_{0:T}}\lambda_+(A).
$$
\end{hypH}

\begin{hypH}
\label{assum:bound:likelihood}
For all $y\in\Yset$, $\mathrm{inf}_{\theta\in\Theta}\int g_\theta^y(x)\eta_-(\rmd x) = c_-(y)>0$ and $\mathrm{sup}_{\theta\in\Theta}\int g_\theta^y(x) \eta_+(\rmd x) = c_+(y)<\infty$. 

\end{hypH}
We consider also the following notation $\mathrm{sup}_{\theta\in\Theta} g_\theta^{y_t} = \bar g^{y_t}$ and $\mathrm{inf}_{\theta\in\Theta} g_\theta^{y_t} = \underline g^{y_t}$. We constrain the kernels and the conditional densities to be Lipschitz in the parameters with a Lipschitz coefficient depending on the variables.
\begin{hypH}
\label{assum:lip}
There exists $M$ such that for all $\theta, \theta'\in\Theta$ and $x,x'\in\Xset$,
$$
\left| m_{\theta}(x,x') - m_{\theta'}(x,x')\right| \leq  M(x,x')\|\theta-\theta'\|_2\eqsp.
$$
For all $1\leq t \leq T$, $y_{0:T}\in\Yset^{T+1}$, there exists $K^{y_{0:T}}_{t-1|t}$ such that for all $\varphi, \varphi'\in\Phi$ and $x,x'\in\Xset$,
$$
    \left| q^{y_{0:T}}_{\varphi,t-1|t}(x,x') - q^{y_{0:T}}_{\varphi',t-1|t}(x,x')\right| 
    \leq  K^{y_{0:T}}_{t-1|t}(x',x)\|\varphi-\varphi'\|_2\eqsp.
$$
In addition, there exists $K^{y_{0:T}}_{T}$ such that for all $\varphi, \varphi'\in\Phi$ and $x\in\Xset$,
$$
\left| q^{y_{0:T}}_{\varphi,T}(x) - q^{y_{0:T}}_{\varphi',T}(x)\right| \leq  K^{y_{0:T}}_{T}(x)\|\varphi-\varphi'\|_2\eqsp.
$$
For all 
$y\in\Yset$, there exists $G^{y}$ such that for all $\theta, \theta'\in\Theta$ and $x\in\Xset$,
$$
\left| g_\theta^{y}(x) - g_{\theta'}^{y}(x)\right| \leq  G^{y}(x)\|\theta-\theta'\|_2\eqsp.
$$
\end{hypH} 
Define, for $1\leq t \leq T-1$, 
\begin{equation}
\label{eq:def:addfunc}
h^{y_{0:T}}_{t,\theta,\varphi}(x_{t-1},x_t) = \log q^{y_{0:T}}_{\varphi,t-1|t}(x_t,x_{t-1}) 
 - \log b^{y_{0:t-1}}_{\theta,t-1|t}(x_t,x_{t-1})
\end{equation} 
and, by convention, $h^{y_{0:T}}_{T,\theta,\varphi}(x_{T-1},x_T) = \log q^{y_{0:T}}_{\varphi,T-1|T}(x_T,x_{T-1}) - \log b^{y_{0:T-1}}_{\theta,T-1|T}(x_T,x_{T-1}) + \log q^{y_{0:T}}_{\varphi,T}(x_T) - \log \phi^{y_{0:T}}_{\theta,T}(x_T)$.
\begin{hypH}
\label{assum:boundH}
For all $y_{0:T}\in \Yset^{T+1}$ and all $0\leq t\leq T$,
$$
\sup_{\theta\in\Theta,\varphi\in\Phi}\left\|\int \lambda_+(\rmd x)\left|h^{y_{0:T}}_{t,\theta,\varphi}(x,\cdot)\right|\right\|_\infty = \upsilon_t^{y_{0:T}}<\infty\eqsp,
$$
and for all $\theta,\theta'\in\Theta$, $\varphi,\varphi'\in\Phi$, $1\leq t\leq T$,
\begin{align*}
    \int \lambda_+\otimes\lambda_+(\rmd x\rmd x')\left|\log q^{y_{0:T}}_{\varphi,t-1|t}(x,x') - \log q^{y_{0:T}}_{\varphi',t-1|t}(x,x')\right| &\leq c_{1,t}^{y_{0:T}}\left\|\varphi-\varphi'\right\|_2\eqsp, \\
    \int \lambda_+\otimes\lambda_+(\rmd x\rmd x')\left|\log b^{y_{0:t-1}}_{\theta,t-1|t}(x,x') - \log b^{y_{0:t-1}}_{\theta',t-1|t}(x,x')\right| &\leq c_{2,t}^{y_{0:t-1}}\left\|\theta-\theta'\right\|_2\eqsp,\\
    \int \lambda_+(\rmd x)\left|\log q^{y_{0:T}}_{\varphi,T}(x) - \log q^{y_{0:T}}_{\varphi',T}(x)\right| &\leq c_{3,T}^{y_{0:T}}\left\|\varphi-\varphi'\right\|_2\eqsp,\\
    \int \lambda_+(\rmd x)\left|\log \phi^{y_{0:T}}_{\theta,T}(x) - \log \phi^{y_{0:T}}_{\theta',T}(x)\right| &\leq c_{4,T}^{y_{0:T}}\left\|\theta-\theta'\right\|_2\eqsp,
\end{align*}
where $\lambda_+$ is defined in H\ref{assum:strong:mixing}.
\end{hypH}
Our upper bounds require to prove that $m$ is a Lipschitz function of the parameters, and we need an upper bound on the $\mathrm{L}^2$-norm of the Lipschitz coefficient. For this, we consider the following moment assumptions. 
\begin{hypH}
\label{assum:moments}
There exists $A$ such that  the following inequalities are satisfied.
$$
\E\left[\left(\udupvar^{Y_{0:T}}c_{3,T}^{Y_{0:T}}\right)^2\right]\leq A\eqsp,\eqsp
\E\left[\left(\udupvar^{Y_{0:T}}c_{4,T}^{Y_{0:T}}\right)^2\right]\leq A\eqsp,
$$
for all $0\leq t\leq T$,
$$
\mathbb{E}\left[\frac{\mu(G^{Y_t})^2}{c_-(Y_t)^2}\right]\leq A\eqsp,\eqsp
\E\left[\left((\udupvar^{Y_{0:T}})^2 c_{1,t}^{Y_{0:T}}\right)^2\right]\leq A\eqsp,
$$
for all $1\leq t\leq T$,
$$
\E\left[\left((\udupvar^{Y_{0:T}})^2 c_{2,t}^{Y_{0:t-1}}\right)^2\right]\leq A\eqsp,\eqsp
 \mathbb{E}\left[\frac{\eta_+\otimes\mu(M\otimes  \bar g^{Y_{t-1}}\bar g^{Y_t})^2}{c_-(Y_{t-1})^2c_-(Y_{t})^2}\right]\leq A\eqsp,
 $$
 $$
\mathbb{E}\left[\left(\udupvar^{Y_{0:T}}\sum_{s=t-1}^T \lambda_+\otimes \lambda_+   (K^{y_{0:T}}_{s|s+1}) \rho(Y_{0:T})^{s-t}\right)^2\right]\leq A\eqsp,
$$
where for all $y_{0:T}$, $\rho(y_{0:T}) = 1-\udlowvar^{y_{0:T}}$,
for all $0\leq s,t\leq T$,
$$
\mathbb{E}\left[\frac{c_+(Y_t)^2\mu(G^{Y_s})^2}{c_-(Y_t)^2c_-(Y_s)^2}\right]\leq A\eqsp,
$$
and for all $0\leq t\leq T$, all $1\leq s\leq T$,
$$
 \mathbb{E}\left[\left(\frac{c_+(Y_t)\eta_+\otimes\mu(M\otimes  \bar g^{Y_{s-1}}\bar g^{Y_s})}{c_-(Y_{s-1})c_-(Y_{s})c_-(Y_{t})}\right)^2\right]\leq A\eqsp.
$$
\end{hypH}
The following assumption is used to have concentration properties, as usual in the statistical literature to get theoretical guarantees with finite samples.

\begin{hypH}
    \label{assum:pdata}
    There exists $\alpha_*$ and $B>0$ such that for all $T\geq 1$,
    $$
\left\|\log p_{\mathcal{D}}(Y_{0:T}) \right\|_{\psi_{\alpha_*}}\leq B T\quad \mathrm{and}\quad 
\left\|(\udupvar^{Y_{0:T}})^2\cdot \mathrm{sup}_{\theta,\varphi,\chi}\sum_{t=1}^T\lambda_+\otimes\lambda_+\left(\left|h^{y_{0:T}}_{t,\theta,\varphi}\right|\right) \right\|_{\psi_{\alpha_*}}\leq BT\eqsp,
    $$
and for all $0\leq t\leq T$,
$$
 \||\log c_+(Y_t)|\vee|\log c_-(Y_t)|\|_{\psi_{\alpha_*}}\leq B\eqsp.
$$
\end{hypH}

\subsection{Oracle inequalities and consequences}
Our main result is an oracle inequality for the risk. The so-called variance term has the usual rate $1/n$ up to $\log n$ terms in the sample size $n$. It is proved to grow as much as $T^3$ in the length $T$ of the sample sequences.  
We assume that $\Theta$ and $\Phi$ are compact spaces, and that the sum of their diameters is bounded by $d_0$.
\begin{theorem}
    \label{th:oracle}
    Assume that H\ref{assum:strong:mixing}-H\ref{assum:pdata} hold. Then, there exist constants $c_0$, $c_1$, $c_2$, $\tilde{D}$ which depend on $\udup$, $\udlow$, $\alpha_*$, $A$, $B$ and $d_0$  only, such that  with probability at least $1 - c_0 \mathrm{exp}(-c_1\{d_*\log n\}^{1\wedge \alpha_*})$,
$$
\int m(\widehat\theta_{n,T},\widehat\varphi_{n,T},y_{0:T})p_{\mathcal{D}}(y_{0:T})\rmd\mu(y_{0:T})
\leq \mathrm{inf}_{\gamma>0}\left\{(1+\gamma)
\mathsf{E}_T 
+ c_2(1+\gamma^{-1})\frac{\tilde{D}d_* T^3}{n}\log(d_* n)(\log n)^{1/\alpha_*}\right\}\eqsp,    
$$
where  $\mathsf{E}_T = \mathrm{min}_{\theta\in\parspace,\varphi\in\parvarspace}
\int m(\theta,\varphi,y_{0:T})p_{\mathcal{D}}(y_{0:T})\rmd\mu(y_{0:T})
$ 
and $d_*=\dimt+\dimv$.
\end{theorem}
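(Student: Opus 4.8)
The plan is to recognize this as an M-estimation problem and apply a generic oracle-inequality machinery of the kind developed in \cite{tang21a,MR3967104}, adapted to our state-space setting. The estimator $(\widehat\theta_{n,T},\widehat\varphi_{n,T})$ minimizes the empirical criterion $\mathcal{L}_{n,T}=n^{-1}\sum_i m(\theta,\varphi,Y^i_{0:T})$, and we wish to control the population risk $P_{\mathcal{D}}m(\widehat\theta_{n,T},\widehat\varphi_{n,T})$ against the oracle risk $\mathsf{E}_T$. The backbone is a one-sided uniform deviation inequality: with high probability, simultaneously over $(\theta,\varphi)\in\Theta\times\Phi$,
$$
P_{\mathcal{D}}m(\theta,\varphi) \leq (1+\gamma)\,\mathcal{L}_{n,T}(\theta,\varphi) + \text{(complexity term)}\eqsp,
$$
and a matching bound in the other direction at the oracle minimizer. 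Combining these at the empirical minimizer and the population minimizer, together with $\mathcal{L}_{n,T}(\widehat\theta_{n,T},\widehat\varphi_{n,T})\leq \mathcal{L}_{n,T}(\theta^\star,\varphi^\star)$, yields the stated inequality after optimizing over $\gamma>0$.

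To make the complexity term explicit I would proceed in three steps. First, establish that $(\theta,\varphi)\mapsto m(\theta,\varphi,y_{0:T})$ is Lipschitz in $(\theta,\varphi)$ with an $\mathrm{L}^2(P_{\mathcal{D}})$-integrable Lipschitz coefficient. This is exactly where the backward factorization pays off: writing the KL term via the additive functionals $h^{y_{0:T}}_{t,\theta,\varphi}$ of \eqref{eq:def:addfunc}, the increment of $m$ decomposes into a likelihood part (controlled by H\ref{assum:lip} and H\ref{assum:bound:likelihood} through $G^y$ and $c_-(y)$) and a sum of per-time-step backward-kernel differences. The strong mixing assumption H\ref{assum:strong:mixing} gives geometric forgetting with rate $\rho(y_{0:T})=1-\udlowvar^{y_{0:T}}$, so that the smoothing/backward expectations contract and the Lipschitz coefficients $c_{1,t},c_{2,t},c_{3,T},c_{4,T}$ of H\ref{assum:boundH} can be summed with geometrically decaying weights. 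The moment assumption H\ref{assum:moments} is tailored precisely so that the second moment of the resulting Lipschitz constant is bounded by $A$ uniformly, with the sum over $t$ producing a factor that grows polynomially—here $O(T)$—in the sequence length. Second, use the $\psi_{\alpha_*}$-Orlicz control of H\ref{assum:pdata} to obtain sub-Weibull concentration of the centered summands $m(\theta,\varphi,Y^i)-P_{\mathcal{D}}m(\theta,\varphi)$; a Bernstein-type or Orlicz maximal inequality then yields a tail of the form $c_0\exp(-c_1\{d_*\log n\}^{1\wedge\alpha_*})$.

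Third, pass from pointwise concentration to the uniform bound over $\Theta\times\Phi$ by a chaining/covering argument. Since these are compact with total diameter $d_0$, a $\delta$-net has $\log$-cardinality $O(d_*\log(1/\delta))$ with $d_*=\dimt+\dimv$; choosing $\delta\asymp 1/n$ and using the Lipschitz bound to control the discretization error contributes the $\log(d_* n)$ factor, while the dimension $d_*$ enters linearly. Collecting the $T$-dependence is the delicate bookkeeping step: each of the $T$ backward increments carries its own Lipschitz and variance contribution, and tracking how these accumulate through the geometric sums is what produces the $T^3$ growth (one power from the number of additive terms, and further powers from the variance-of-a-sum and the Orlicz-norm scaling $\|\cdot\|_{\psi_{\alpha_*}}\leq BT$). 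I expect the main obstacle to be exactly this explicit tracking of the $T$-dependence through the forgetting estimates—ensuring that the contraction furnished by H\ref{assum:strong:mixing} dominates the accumulation of per-step errors so that no exponential-in-$T$ term survives—rather than the abstract oracle-inequality scaffolding, which is by now standard. The final step is purely algebraic: assemble the Lipschitz constant, the covering number, and the Orlicz concentration into the displayed complexity term $c_2(1+\gamma^{-1})\tilde{D}d_*T^3 n^{-1}\log(d_*n)(\log n)^{1/\alpha_*}$ and optimize over $\gamma$.
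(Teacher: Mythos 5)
Your proposal follows essentially the same route as the paper: the paper also reduces the theorem to a generic oracle inequality in the style of \cite{tang21a} (restated as Theorem~\ref{th:tang} with constants tracked), verifies the Orlicz condition with $D=\tilde D T$ via H\ref{assum:pdata} and Propositions~\ref{prop:assumA:likelihood}--\ref{prop:assumA:kl}, and verifies the $\mathrm{L}^2$-Lipschitz condition with $a_1\leq CT^2$ by exactly the decomposition you describe (log-likelihood increment plus the backward-factorized KL increments $\Delta_1,\Delta_2$, controlled through the forgetting rate $\rho(y_{0:T})$ and H\ref{assum:moments}), so that the variance term scales as $a_1 D \asymp T^3$. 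The only difference is that you sketch re-deriving the covering/concentration machinery that the paper instead imports as a black box, which does not change the argument.
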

\begin{proof}
To prove Theorem \ref{th:oracle}, we use Theorem \ref{th:tang}, which is an alternative formulation of \cite[Theorem 3]{tang21a}, proved in Appendix \ref{app:tang}. First, Assumption A of Theorem \ref{th:oracle} holds with $D=\tilde{D}T$ for some positive constant $\tilde{D}$ depending on $B$.
This is a consequence of the first point in H\ref{assum:pdata}, Proposition~\ref{prop:assumA:likelihood} and Proposition~\ref{prop:assumA:kl}.

We now prove that Condition A of Theorem \ref{th:oracle} holds 
with 
$a_1 \leq C T^2$ for some $C>0$. Write, for all $\theta$, $\varphi_1$, $\varphi_2$, $y_{0:T}$, 
$$
\mathcal{E}^{y_{0:T}}(\theta,\varphi_1,\varphi_2) = \mathbb{E}_{q^{y_{0:T}}_{\varphi_1,0:T}}\left[\log \frac{q^{y_{0:T}}_{\varphi_2,0:T}(X_{0:T})}{\phi^{y_{0:T}}_{\theta,0:T|T}(X_{0:T})}\right]\eqsp.
$$
Note that
\begin{equation*}
\Delta(\theta, \theta', \varphi, \varphi', y_{0:T}) \leq \left|\ell_T^{y_{0:T}}(\theta) - \ell_T^{y_{0:T}}(\theta')\right| + 
\left|\mathcal{E}^{y_{0:T}}(\theta,\varphi,\varphi)- \mathcal{E}^{y_{0:T}}(\theta',\varphi',\varphi')\right|\,.
\end{equation*}
Write
\begin{equation*}
\left|\mathcal{E}^{y_{0:T}}(\theta,\varphi)- \mathcal{E}^{y_{0:T}}(\theta',\varphi')\right| \leq \Delta_1(\theta, \varphi, \varphi', y_{0:T}) 
+ \Delta_2(\theta, \theta', \varphi, \varphi', y_{0:T})\,,
\end{equation*}
where
\begin{align*}
\Delta_1(\theta, \varphi, \varphi', y_{0:T})  &= \left|\mathcal{E}^{y_{0:T}}(\theta,\varphi,\varphi) - \mathcal{E}^{y_{0:T}}(\theta,\varphi',\varphi)\right|\,,\\
\Delta_2(\theta, \theta', \varphi, \varphi', y_{0:T})&= \left|\mathcal{E}^{y_{0:T}}(\theta,\varphi',\varphi) - \mathcal{E}^{y_{0:T}}(\theta',\varphi',\varphi')\right|\eqsp.
\end{align*}
Therefore, 
\begin{equation*}
\Delta(\theta, \theta', \varphi, \varphi', y_{0:T}) \leq \left|\ell_T^{y_{0:T}}(\theta) - \ell_T^{y_{0:T}}(\theta')\right| + \Delta_1(\theta, \varphi, \varphi', y_{0:T}) + \Delta_2(\theta, \theta', \varphi, \varphi', y_{0:T})\eqsp.
\end{equation*}
By Proposition~\ref{prop:loglikelihood:lipschitz}, 
Proposition~\ref{prop:delta1} and Proposition~\ref{prop:delta2}, we get that
for all $\theta$, $\theta'$, $\varphi$, $\varphi'$, and all $y_{0:T}$,
$$
\Delta(\theta, \theta', \varphi, \varphi', y_{0:T}) \leq
\left(\kappa_{1}(y_{0:T})+\kappa_{4}(y_{0:T}) \right)\|\theta-\theta'\|_2 + 
\left(\kappa_{2}(y_{0:T})+\kappa_{3}(y_{0:T}) \right)\|\varphi-\varphi'\|_2,
$$
where
\begin{equation}
    \label{kappa1}
\kappa_1(y_{0:T}) = \frac{\udup \eta_+(G^{y_0}) }{\udlow c_-(y_0)} +\sum_{t=1}^T \frac{\udup}{\udlow c_-(y_t)}\left\{c_+(y_t)L_{t-1}(y_{0:t-1}) + \frac{\eta_+\otimes\mu(M \cdot \bar g^{y_{t-1}}\bar \otimes g^{y_t})}{\udlow c_-(y_{t-1})} + \eta_+(G^{y_t})\right\}\eqsp,
\end{equation}
with $M \cdot \bar g^{y_{t-1}} \otimes\bar g^{y_t}(x,x') = M(x,x')\bar g^{y_{t-1}}(x)\bar g^{y_t}(x')$, and for all $t$,
\begin{equation}
    \label{Lt}
L_t(y_{0:t}) =\frac{4\udup^2}{\udlow^2}\sum_{s=0}^t\varepsilon^{t-s} \frac{1}{c_-(y_s)}\left\{\frac{1}{\udlow c_-(y_{s-1})}\eta_+\otimes\mu\left(M \cdot\bar g^{y_{s-1}}\otimes\bar g^{y_{s}}\right) +  \mu(G^{y_s})\right\}\eqsp,
\end{equation}
with $\varepsilon = 1 - \udlow/\udup$,
\begin{equation}
    \label{kappa2}
    \kappa_2(y_{0:T}) = (\udupvar^{y_{0:T}})^3\sum_{t=1}^T\upsilon_t^{y_{0:T}}\sum_{s=t-1}^T \lambda_+\otimes \lambda_+   (K^{y_{0:T}}_{s|s+1}) \rho(y_{0:T})^{s-t}\eqsp,
\end{equation}
\begin{equation}
    \label{kappa3}
    \kappa_3(y_{0:T}) = \udupvar^{y_{0:T}}\left(\udupvar^{y_{0:T}}\sum_{t=1}^T c_{1,t}^{y_{0:T}} + c_{3,T}^{y_{0:T}} \right)\eqsp,
\end{equation}
and
\begin{equation}
    \label{kappa4}
    \kappa_4(y_{0:T}) = \udupvar^{y_{0:T}}\left(\udupvar^{y_{0:T}}\sum_{t=1}^T c_{2,t}^{y_{0:t-1}} + c_{4,t}^{y_{0:T}}\right)\eqsp,
\end{equation}
in which $\upsilon_t^{y_{0:T}}$, $c_{1,t}^{y_{0:T}}$, $c_{2,t}^{y_{0:t-1}}$, $c_{3,T}^{y_{0:T}}$ and $c_{4,t}^{y_{0:T}}$ are defined in H\ref{assum:boundH}.
Using H\ref{assum:moments}, it is easy to prove that 
$\E[\kappa_1(y_{0:T})^2]$, $\E[\kappa_2(y_{0:T})^2]$, $\E[\kappa_3(y_{0:T})^2]$, and $\E[\kappa_4(y_{0:T})^2]$ are upper bounded by $cT^2$ for  a constant $c$ that depends only on $\udup$, $\udlow$ and $A$, and  Theorem \ref{th:oracle} follows.
\end{proof}

Note that
$$
\int m(\widehat\theta_{n,T},\widehat\varphi_{n,T},y_{0:T})p_{\mathcal{D}}(y_{0:T})\rmd \mu(y_{0:T})
=\mathrm{KL}\left(P_\mathcal{D}\middle\|  P_{\widehat\theta_{n,T}}^Y\right) +  \E_{P_\mathcal{D}}\mathrm{KL}\left(Q^{Y^1_{0:T}}_{\widehat\varphi_{n,T},0:T}\middle\| \filtmeas_{\widehat\theta_{n,T},0:T|T}^{Y^1_{0:T}}\right).
$$
If the upper bound in Theorem \ref{th:oracle} is small, then the distribution $P_\mathcal{D}$ of the observations is well approximated by the decoding observational distribution $P_{\widehat\theta_{n,T}}^Y$, and the decoding distribution of the latent state distribution given data $\filtmeas_{\widehat\theta_{n,T},0:T|T}^{Y^1_{0:T}}$ is also in average well approximated by the coding distribution $Q^{Y^1_{0:T}}_{\widehat\varphi_{n,T},0:T}$.  Similarly,
$$\mathsf{E}_T = \mathrm{min}_{\theta\in\parspace,\varphi\in\parvarspace}
\left\{\mathrm{KL}\left(P_\mathcal{D}\middle\|  P_{\theta}^Y\right) +  \E_{P_\mathcal{D}}\mathrm{KL}\left(Q^{Y^1_{0:T}}_{\varphi,0:T}\middle\| \filtmeas_{\theta,0:T|T}^{Y^1_{0:T}}\right)\right\}.
$$
In case the data follows a state space distribution given by some decoding distribution, 
that is if there exists $\theta^*\in\Theta$ such that $P_\mathcal{D}=P_{\theta^*}^Y$, the oracle inequality in Theorem \ref{th:oracle} becomes, by taking $\theta=\theta^*$ to upper bound $\mathsf{E}_T$,
\begin{multline}
\mathrm{KL}\left(P_{\theta^*}^Y\middle\|  P_{\widehat\theta_{n,T}}^Y\right) +  \E_{P_{\theta^*}^Y}\mathrm{KL}\left(Q^{Y^1_{0:T}}_{\widehat\varphi_{n,T},0:T}\middle\| \filtmeas_{\widehat\theta_{n,T},0:T|T}^{Y^1_{0:T}}\right)
\leq (1+\gamma)
\mathrm{min}_{\varphi\in\parvarspace}\E_{P_{\theta^*}^Y}\mathrm{KL}\left(Q^{Y^1_{0:T}}_{\varphi,0:T}\middle\| \filtmeas_{\theta^*,0:T|T}^{Y^1_{0:T}}\right)
 \\
+ c_2(1+\gamma^{-1})\frac{Dd_* T^3}{n}\log(d_* n)(\log n)^{1/\alpha_*}\label{eq:in}  
\end{multline}
for any $\gamma>0$.
In the following corollary, we assume that the coding backward kernels are chosen such that they are good approximations of the backward decoding kernels in Kullback-Leibler divergence. 
\begin{hypH}
\label{assum:approx}
There exists $\epsilon >0$, such that for all $\theta\in\Theta$ there exists $\varphi\in\Phi$ such that for all $y_{0:T}\in\Yset^{T+1}$, $$\mathrm{KL}\left(Q^{y_{0:T}}_{\varphi,T}\middle\| \filtmeas_{\theta^*,T}^{y_{0:T}}\right)\leq \epsilon$$ and for all $1\leq t\leq T$, 
$$\mathrm{KL}\left(Q^{y_{0:T}}_{\varphi,t-1|t}\middle\| B^{y_{0:t-1}}_{\theta,t-1|t}\right)\leq \epsilon
\eqsp.
$$
\end{hypH}
\begin{corollary}
    \label{cor:kl}
Assume there exists $\theta^*\in\Theta$ such that $P_\mathcal{D}=P_{\theta^*}^Y$. 
Assume moreover H\ref{assum:approx}.
Then under the same assumptions as in Theorem \ref{th:oracle},  for the constants 
$c_0$, $c_1$, $c_2$, $D$ in Theorem \ref{th:oracle}, with probability at least $1 - c_0 \mathrm{exp}(-c_1\{d_*\log n\}^{1\wedge \alpha_*})$, for any $\gamma>0$,
\begin{multline*}
\mathrm{KL}\left(P_{\theta^*}^Y\middle\|  P_{\widehat\theta_{n,T}}^Y\right) +  \E_{P_{\theta^*}^Y}\mathrm{KL}\left(Q^{Y^1_{0:T}}_{\widehat\varphi_{n,T},0:T}\middle\| \filtmeas_{\widehat\theta_{n,T},0:T|T}^{Y^1_{0:T}}\right)\\
\leq (1+\gamma)
 (T+1)\epsilon
+ c_2(1+\gamma^{-1})\frac{Dd_* T^3}{n}\log(d_* n)(\log n)^{1/\alpha_*}\eqsp.   
\end{multline*}
\end{corollary}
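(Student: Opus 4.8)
The plan is to start from~\eqref{eq:in}, which is the specialization of Theorem~\ref{th:oracle} to the well-specified regime $P_{\mathcal{D}}=P_{\theta^*}^Y$ assumed here and which holds on the same event of probability at least $1 - c_0 \mathrm{exp}(-c_1\{d_*\log n\}^{1\wedge \alpha_*})$. Comparing~\eqref{eq:in} with the claimed bound, the only quantity left to control is the oracle (approximation) term
$$
\mathrm{min}_{\varphi\in\parvarspace}\E_{P_{\theta^*}^Y}\mathrm{KL}\left(Q^{Y^1_{0:T}}_{\varphi,0:T}\middle\| \filtmeas_{\theta^*,0:T|T}^{Y^1_{0:T}}\right)\eqsp,
$$
which I aim to bound by $(T+1)\epsilon$. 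To that end I would invoke H\ref{assum:approx} at $\theta=\theta^*$: it furnishes a single $\varphi\in\parvarspace$ for which the terminal bound and all $T$ backward-kernel bounds hold against $\theta^*$, uniformly over $y_{0:T}$. Bounding the minimum by the value attained at this particular $\varphi$ then reduces the task to estimating $\E_{P_{\theta^*}^Y}\mathrm{KL}(Q^{Y^1_{0:T}}_{\varphi,0:T}\| \filtmeas_{\theta^*,0:T|T}^{Y^1_{0:T}})$ for this fixed $\varphi$.

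The key step is a chain-rule decomposition of this joint Kullback--Leibler divergence along the two backward factorizations of $Q^{y_{0:T}}_{\varphi,0:T}$ and $\filtmeas_{\theta^*,0:T|T}^{y_{0:T}}$ recalled in Section~\ref{sec:setting}. Writing the joint density ratio as the terminal ratio at time $T$ times the successive backward-kernel ratios, taking logarithms and integrating under $Q^{y_{0:T}}_{\varphi,0:T}$, the backward Markov structure of both measures yields, for every fixed $y_{0:T}$,
\begin{align*}
\mathrm{KL}\left(Q^{y_{0:T}}_{\varphi,0:T}\middle\| \filtmeas_{\theta^*,0:T|T}^{y_{0:T}}\right)
&= \mathrm{KL}\left(Q^{y_{0:T}}_{\varphi,T}\middle\| \filtmeas_{\theta^*,T}^{y_{0:T}}\right)\\
&\quad+ \sum_{t=1}^{T}\E_{Q^{y_{0:T}}_{\varphi,0:T}}\left[\mathrm{KL}\left(Q^{y_{0:T}}_{\varphi,t-1|t}(x_t,\cdot)\middle\| B^{y_{0:t-1}}_{\theta^*,t-1|t}(x_t,\cdot)\right)\right]\eqsp,
\end{align*}
each summand being the expectation, under the marginal of $x_t$ induced by $Q^{y_{0:T}}_{\varphi,0:T}$, of the pointwise KL divergence between the two backward kernels issued from $x_t$.

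By H\ref{assum:approx} the terminal term and each of the $T$ inner divergences is at most $\epsilon$, uniformly in the conditioning state, so the right-hand side is at most $(T+1)\epsilon$ for every $y_{0:T}$. As this bound is free of the observations, taking $\E_{P_{\theta^*}^Y}$ preserves it, whence $\mathrm{min}_{\varphi}\E_{P_{\theta^*}^Y}\mathrm{KL}(Q^{Y^1_{0:T}}_{\varphi,0:T}\| \filtmeas_{\theta^*,0:T|T}^{Y^1_{0:T}})\leq (T+1)\epsilon$. Substituting this into~\eqref{eq:in} then delivers the announced inequality on the same high-probability event.

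The main obstacle will be the chain-rule decomposition itself: one must verify that the backward Markov structure shared by the two factorizations makes the joint divergence telescope exactly into the terminal marginal KL plus a sum of expected conditional KL divergences, with the inner expectations taken against the correct $Q$-marginals of the conditioning states $x_t$. The decisive feature is that H\ref{assum:approx} controls these conditional divergences \emph{pointwise} in $x_t$; this is precisely what lets the $Q$-expectations be bounded term by term, without any further integrability or mixing argument. Once this reading of H\ref{assum:approx} is adopted, the remaining steps are elementary.
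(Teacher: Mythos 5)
Your proposal is correct and follows essentially the same route as the paper: starting from \eqref{eq:in}, the oracle term is bounded by choosing the $\varphi$ furnished by H\ref{assum:approx} at $\theta^*$ and applying the backward chain-rule decomposition of the joint Kullback--Leibler divergence into the terminal marginal term plus the $T$ expected conditional terms, each at most $\epsilon$. Your version is in fact slightly more careful than the paper's, which writes the summands as divergences between kernels without displaying the expectation over the conditioning state; your pointwise-in-$x_t$ reading of H\ref{assum:approx} is exactly what makes that shorthand rigorous.
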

When the data distribution is given by a state space model, Corollary~\ref{cor:kl} provides an upper bound for the Kullback-Leibler divergence between the data distribution
and its estimator and between the variational posterior and the estimated state space posterior distributions. This result sheds additional light on the quality of variational reconstruction in state space models with respect to  \cite[Proposition 3]{mathisJMLR}. In \cite[Proposition 3]{mathisJMLR}, the authors provided upper bounds on the error between conditional expectations of state functionals under the true posterior distribution and under its variational approximation. In both settings, designing coding backward kernels that are good
approximations of the true backward decoding kernels is enough to obtain quantitative controls on the reconstruction error. 
\begin{proof}
The result follows from \eqref{eq:in}, H\ref{assum:approx} and the fact that for any $\theta\in\Theta$ and $\varphi\in\Phi$, for any $y_{0:T}$,
$$
\mathrm{KL}\left(Q^{y_{0:T}}_{\varphi,0:T}\middle\| \filtmeas_{\theta^*,0:T|T}^{y_{0:T}}\right)=\sum_{t=1}^{T} \mathrm{KL}\left(Q^{y_{0:T}}_{\varphi,t-1|t}\middle\| B^{y_{0:t-1}}_{\theta,t-1|t}\right)
+\mathrm{KL}\left(Q^{y_{0:T}}_{\varphi,T}\middle\| \filtmeas_{\theta^*,T}^{y_{0:T}}\right).
$$

\end{proof}

\subsection{Applications}
\label{sec:appli}

In this section, we consider generative models where the  transition kernels and emission distributions are Gaussian in various classical settings. We show that under weak assumptions on these models, some assumptions of our main results hold. Establishing that all assumptions are satisfied in general settings, i.e. without very specific assumptions on the architectures, is a more challenging problem. 

We prove in Appendix~\ref{sec:check:assum} that H\ref{assum:strong:mixing} holds in particular for compact state spaces. We also prove that the functions $h_{t,\theta,\varphi}^{y_{0:T}}$ are upper-bounded explicitly,  and that $\phi_{\theta,t}^{y_{0:t}}$ and $b_{\theta,t-1|t}^{y_{0:t-1}}$ are lower and upper-bounded explicitly. This allows to obtain explicit constants in H\ref{assum:boundH}. Providing additional comments on the assumptions requires  assumptions on the observation space or on the dependency of the variational distributions on the observations. When the observation space is compact we can also obtain a uniform control with respect to the observations of these upper bounds which is crucial to check H\ref{assum:moments} and H\ref{assum:pdata}.

\paragraph{Gaussian backward kernels with dense networks. } We consider a generative model where the  transition kernels and emission distributions are Gaussian and parameterized by dense networks. 
\begin{itemize}
    \item For all $x\in\Xset$, $x'\mapsto m_\theta(x,x')$ is the Gaussian probability density function with mean $\mu_\theta(x)$, and variance $\Sigma_\theta(x)$ where
     $(\mu_\theta(x), \Sigma_\theta(x)) = \mathsf{MLP}^\theta(x)$ with $\mathsf{MLP}^\theta$ a dense Multi-layer network with input $x$ and weights given by $\theta$. In this case, if the output layer of $\mathsf{MLP}^\theta$  is such that $\mu_\theta$ is bounded and $\underline \Sigma \leq \Sigma_\theta^{-1}(x) \leq \overline \Sigma$ (i.e. $\Sigma_\theta^{-1}(x) - \underline \Sigma$ and $\overline \Sigma - \Sigma_\theta^{-1}(x)$  are positive semi-definite matrices) for all $x\in\Xset$, then there exist  constants $\underline c$, $\overline c$ such that for all $x,x'\in\Xset$,
     $$
\underline c\exp\left(-\overline\lambda x^\top x \right) \leq m_\theta(x',x) \leq \overline c\exp\left(-\underline\lambda \alpha(x) \right)\eqsp,
     $$
     where $\underline \lambda$ is the smallest eigenvalue of $\underline \Sigma$ and $\overline \lambda$ is the largest eigenvalue of $\overline \Sigma$ and where
     $$
\alpha(x) = \frac{1}{2}\left((\|x\|-M)^2\mathds{1}_{\|x\|\geq M} + (\|x\|-m)^2\mathds{1}_{\|x\|\leq m} + (M-m)^2\mathds{1}_{m\leq \|x\|\leq M}\right)\eqsp,
     $$
     with $m = \inf_{x\in\Xset, \theta\in\Theta} \|\mu_\theta(x)\|$ and $M = \sup_{x\in\Xset, \theta\in\Theta} \|\mu_\theta(x)\|$. This implies that H\ref{assum:strong:mixing} holds. In order to check H\ref{assum:lip}, if we assume also that for all $x\in\Xset$, $\theta \mapsto \mu_\theta(x)$ and  $\theta \mapsto \Sigma^{-1}_\theta(x)$ are continuously differentiable and that $\Theta$ is compact then there exists $M$ such that for all $\theta, \theta'\in\Theta$ and $x,x'\in\Xset$,
$$
\left| m_{\theta}(x,x') - m_{\theta'}(x,x')\right| \leq  M(x,x')\|\theta-\theta'\|_2\eqsp.
$$
We can check H\ref{assum:boundH} for $\log b^{y_{0:t-1}}_{\theta,t-1|t}$, as other items can be verified following the same steps. Assuming that $b^{y_{0:t-1}}_{\theta,t-1|t}(x,\cdot)$ is a Gaussian probability density with mean $\mu^{y_{0:t-1}}_{\theta,t-1|t}(x)$ and variance $\Sigma^{y_{0:t-1}}_{\theta,t-1|t}(x)$. Under similar regularity assumptions  on the networks providing $\mu^{y_{0:t-1}}_{\theta,t-1|t}(x)$ and $\Sigma^{y_{0:t-1}}_{\theta,t-1|t}(x)$, when $\Theta$ is compact,  H\ref{assum:boundH} holds. 
    \item For all $1\leq t \leq T$, $x\in\Xset$, $x'\mapsto q^{y_{0:T}}_{\varphi, t-1|t}(x,x')$ is the Gaussian probability density function with mean $\mu^{y_{0:T}}_{\varphi,t-1|t}(x)$, and variance $\Sigma^{y_{0:T}}_{\varphi,t-1|t}(x)$ where
     $(\mu^{y_{0:T}}_{\varphi,t-1|t}(x), \Sigma^{y_{0:T}}_{\varphi,t-1|t}(x)) = \mathsf{MLP}^{y_{0:T},\varphi}_{t-1|t}(x)$ with $\mathsf{MLP}^{y_{0:T},\varphi}_{t-1|t}$ a dense Multi-layer network with input $x$ and weights depending on $\varphi$. In this case, is the output layer of $\mathsf{MLP}^{y_{0:T},\varphi}_{t-1|t}$  is such that $ \mu_{\varphi,t-1|t}^{y_{0:T}}$ is bounded and $\underline \Sigma_{t-1|t}^{y_{0:T}} \leq (\Sigma_{\varphi,t-1|t}^{y_{0:T}}(x))^{-1} \leq \overline \Sigma_{t-1|t}^{y_{0:T}}$ (i.e. $(\Sigma^{y_{0:T}}_{\varphi,t-1|t}(x))^{-1} - \underline \Sigma^{y_{0:T}}_{t-1|t}$ and  $\overline \Sigma^{y_{0:T}}_{t-1|t}-(\Sigma^{y_{0:T}}_{\varphi,t-1|t}(x))^{-1}$  are positive semi-definite matrices) for all $x\in\Xset$, then there exist constants $\underline c^{y_{0:T}}_{t-1|t}$, $\overline c^{y_{0:T}}_{t-1|t}$ such that for all $x,x'\in\Xset$,
     $$
\underline c^{y_{0:T}}_{t-1|t}\exp\left(-\overline\lambda^{y_{0:T}}_{t-1|t} x^\top x \right) \leq q^{y_{0:T}}_{\varphi, t-1|t}(x',x) \leq \overline c^{y_{0:T}}_{t-1|t}\exp\left(-\underline\lambda^{y_{0:T}}_{t-1|t} \beta(x) \right)\eqsp,
     $$
      where $\underline \lambda^{y_{0:T}}_{t-1|t}$ is the smallest eigenvalue of $\underline \Sigma^{y_{0:T}}_{t-1|t}$ and $\overline \lambda^{y_{0:T}}_{t-1|t}$ is the largest eigenvalue of $\overline \Sigma^{y_{0:T}}_{t-1|t}$ and where
     \begin{multline*}
\beta(x) = \frac{1}{2}\left((\|x\|-M^{y_{0:T}}_{t-1|t})^2\mathds{1}_{\|x\|\geq M^{y_{0:T}}_{t-1|t}} + (\|x\|-m^{y_{0:T}}_{t-1|t})^2\mathds{1}_{\|x\|\leq m^{y_{0:T}}_{t-1|t}}  \right. \\
\left. + (M^{y_{0:T}}_{t-1|t}-m^{y_{0:T}}_{t-1|t})^2\mathds{1}_{m^{y_{0:T}}_{t-1|t}\leq \|x\|\leq M^{y_{0:T}}_{t-1|t}}\right)\eqsp,
     \end{multline*}
     with $m^{y_{0:T}}_{t-1|t} = \inf_{x\in\Xset} \|\mu^{y_{0:T}}_{t-1|t}(x)\|$ and $M^{y_{0:T}}_{t-1|t} = \sup_{x\in\Xset} \|\mu^{y_{0:T}}_{t-1|t}(x)\|$.
     Similar assumptions can be used for $q^{y_{0:T}}_{\varphi, T}$ using dense neural networks with bounded output. Under similar regularity assumptions on $\mu^{y_{0:T}}_{\varphi,t-1|t}$, and $\Sigma^{y_{0:T}}_{\varphi,t-1|t}$ than for $\mu_{\theta}$, and variance $\Sigma_{\theta}$, we may prove that H\ref{assum:lip} holds when $\Phi$ is compact.
\end{itemize}
 \paragraph{Gaussian backward kernels with recurrent networks. }  A natural parameterization is also to  use a recurrent neural network which updates an internal state $(s_t)_{t \geq 0}$ from which the backward variational kernels and filtering density are built. For all $t\geq 0$, define $s_t = \mathsf{RNN}^\varphi(s_{t-1}, y_t)$ where $ \mathsf{RNN}^\varphi$ is a recurrent neural network, and let  $x'\mapsto q^{y_{0:T}}_{\varphi, t-1|t}(x,x')$ be the Gaussian probability density function with mean $\mu^{y_{0:T}}_{t-1|t}$, and variance $\Sigma^{y_{0:T}}_{t-1|t}$  where $(\mu_t, \Sigma_t) = \mathsf{MLP}^\varphi(s_t)$. If the network $\mathsf{MLP}^\varphi$ is bounded similarly as in the dense neural network case, then the backward variational kernels satisfy H\ref{assum:strong:mixing}. 

\paragraph{Functional autoregressive models. } The discussion on neural networks also indicates that the assumptions can be verified for some classical statistical models. Assume for instance that $\Xset = \mathbb{R}$ and that for all $\theta\in\Theta$, $x\in\Xset$, $x'\mapsto m_\theta(x,x')$  is the Gaussian probability density function with mean $f_\theta(x)$, and variance $\sigma^2_\theta(x)$. Then,  H\ref{assum:strong:mixing} holds for $m_\theta$ when $-\infty < \inf_{x\in\Xset, \theta\in\Theta} f_\theta(x) \leq \sup_{x\in\Xset, \theta\in\Theta} f_\theta(x) <\infty$ and $-\infty < \inf_{x\in\Xset, \theta\in\Theta} \sigma_\theta(x) \leq \sup_{x\in\Xset, \theta\in\Theta} \sigma_\theta(x) <\infty$.

\paragraph{Gaussian emission densities. } Assume that at each time $t\geq 0$, $Y_t = h_\theta(X_t) + \varepsilon_t$, where $\{\varepsilon_t\}_{t\geq 0}$ are independent Gaussian random variables. Assume also that $h_\theta(X_t) = \mathsf{MLP}^\theta(X_t)$ where $\mathsf{MLP}^\theta$ is a dense neural network with bounded output layer, then H\ref{assum:bound:likelihood} holds. Assume that for all $x\in\Xset$, $\theta \mapsto h_\theta(x)$ is continuously differentiable and that $\Theta$ is compact, for all  $y\in\Yset$, there exists $G^{y}$ such that for all $\theta, \theta'\in\Theta$ and $x\in\Xset$,
$$
\left| g_\theta^{y}(x) - g_{\theta'}^{y}(x)\right| \leq  G^{y}(x)\|\theta-\theta'\|_2\eqsp,
$$
which means that H\ref{assum:lip} holds for the emission distributions.

\section{Discussion}
\label{sec:discussion}

 In this paper, we used a backward decomposition of variational posterior distributions to propose the first theoretical results for variational autoencoders (VAE) applied to general state space models.  Under strong mixing assumptions on the state space model and on the variational distribution, we provide in particular an oracle inequality and an upper bound for the Kullback-Leibler divergence between the data distribution and its estimator.

 Although these results are the first theoretical guarantees for VAE in the context of state space models, we believe that this is the first step to solve challenging open problems. First, in order to cover a wider variety of applications, weakening the strong mixing assumptions, for instance using local Doeblin assumptions, would be very interesting although it is a challenge when analyzing the stability of smoothing distributions. Another research direction is to understand how our results can be extended in settings where the observations are processed online, i.e. in cases where the parameters are updated when new observations are received but never stored. To the best of our knowledge, online variational estimation has recently been explored with new methodologies but without any theoretical guarantees.

\subsubsection*{Acknowledgements}
The research work of \'Elisabeth Gassiat was supported by the Institut Universitaire de France (IUF) and the French National Research Agency (ANR) with the projects ASCAI ANR-21-CE23-0035-02 and ANR BACKUP ANR-23-CE40-0018-02.

\bibliography{variationalhmm}
\bibliographystyle{apalike}

\appendix
\section{An oracle inequality adapted from \cite{tang21a}}
\label{app:tang}
We  propose an alternative formulation of Theorem 3 in \cite{tang21a} in which we provide the precise behavior of the constant in the variance term. To avoid introducing too many new notations, we formulate the results of \cite{tang21a} choosing the observation to be $Y_{0:T}$, the latent variables to be $X_{0:T}$ in our setting.
 \paragraph{Condition A. } There exist $a_1>0$ and a function $b$ such that for all $\theta\in\Theta$, $\theta'\in\Theta$, $\varphi\in\Phi$, $\varphi'\in\Phi$, $y_{0:T}\in\Yset^{T+1}$,
$$
\left|m(\theta,\varphi,y_{0:T})  - m(\theta',\varphi',y_{0:T}) \right| 
\leq b(y_{0:T})\|(\theta,\varphi)-(\theta',\varphi')\|_2 \,,
$$
with $\mathbb{E}[b^2(Y_{0:T})]\leq a_1$.

 \paragraph{Assumption A. } There exist $\alpha_*>0$ and $D>0$ such that
\begin{equation}
\label{eq:assumptionA:app}
\left\|\mathrm{sup}_{\theta,\varphi}\left\{\left|\log \frac{L_T^{Y_{0:T}}(\theta)}{p_{\mathcal{D}}(Y_{0:T})}\right|  + \mathrm{KL}\left(Q^{Y_{0:T}}_{\varphi,0:T}\middle\| \phi_{\theta,T}^{Y_{0:T}}\right)\right\}\right\|_{\psi_{\alpha_*}}
\leq D\eqsp.
\end{equation}

\begin{theorem}
    \label{th:tang}
    Assume that $\Theta$ and $\Phi$ are compact spaces and that the sum of their diameter is upper bounded by $d_0$. Assume moreover that Condition A and Assumption A hold. Then, there exist constants $c_0$, $c_1$, which depend on $d_0$, $a_1$ and $\alpha_*$, and a universal constant $c_2$, such that  with probability at least $1 - c_0 \mathrm{exp}(-c_1\{d_*\log n\}^{1\wedge \alpha_*})$,
$$
\int m(\widehat\theta_{n,T},\widehat\varphi_{n,T},y_{0:T})p_{\mathcal{D}}(y_{0:T})\rmd\mu(y_{0:T})
\leq \mathrm{inf}_{\gamma>0}\left\{(1+\gamma)
\mathsf{E}_T 
+ c_2(1+\gamma^{-1})\frac{a_{1} Dd_*}{n}\log(d_* n)(\log n)^{1/\alpha_*}\right\}\eqsp,    
$$
where  $\mathsf{E}_T = \mathrm{min}_{\theta\in\parspace,\varphi\in\parvarspace}
\int m(\theta,\varphi,y_{0:T})p_{\mathcal{D}}(y_{0:T})\rmd\mu(y_{0:T})
$ 
and $d_*=\dimt+\dimv$.
\end{theorem}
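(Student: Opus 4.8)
The plan is to treat $(\widehat\theta_{n,T},\widehat\varphi_{n,T})$ as an $M$-estimator and to extract the oracle inequality from a uniform control of the empirical process indexed by $(\theta,\varphi)$. Write $P_n m(\theta,\varphi) = \mathcal{L}_{n,T}(\theta,\varphi)$ and $Pm(\theta,\varphi) = \int m(\theta,\varphi,y_{0:T})p_{\mathcal{D}}(y_{0:T})\rmd\mu(y_{0:T})$, and let $(\theta^\circ,\varphi^\circ)$ attain $\mathsf{E}_T = Pm(\theta^\circ,\varphi^\circ)$. The starting point is the \textbf{basic inequality} $P_n m(\widehat\theta_{n,T},\widehat\varphi_{n,T}) \le P_n m(\theta^\circ,\varphi^\circ)$, which after adding and subtracting $P$ rearranges to
\begin{equation*}
Pm(\widehat\theta_{n,T},\widehat\varphi_{n,T}) - \mathsf{E}_T \le (P-P_n)m(\widehat\theta_{n,T},\widehat\varphi_{n,T}) + (P_n-P)m(\theta^\circ,\varphi^\circ)\eqsp.
\end{equation*}
The second term is a centered average of i.i.d.\ variables, controlled directly by the concentration supplied by Assumption A. The whole difficulty is the first term, which I would bound by a \emph{localized} supremum of the empirical process over $\{(\theta,\varphi):Pm(\theta,\varphi)-\mathsf{E}_T\le r\}$ and then resolve the resulting fixed-point inequality in $r$.

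To control this (localized) supremum I would discretize the compact set $\Theta\times\Phi$ by a $\delta$-net, whose cardinality is at most $(d_0/\delta)^{d_*}$ since the diameters sum to $d_0$ and $d_*=\dimt+\dimv$. Condition A turns the net into a genuine approximation of the function class: for two parameters at distance $\le\delta$ the losses differ pointwise by at most $\delta\,b(y_{0:T})$, so the off-net remainder of $(P_n-P)$ is at most $\delta\,(P_n+P)b$, handled with $\E[b^2]\le a_1$ after one further concentration step. At each net point $m(\theta_j,\varphi_j,\cdot)$ is a single function whose envelope has Orlicz norm at most $D$ by Assumption A (note $|m|\le|\log(L_T^{y_{0:T}}/p_{\mathcal{D}})|+\mathrm{KL}(Q^{y_{0:T}}_{\varphi,0:T}\|\filtmeas^{y_{0:T}}_{\theta,0:T|T})$ because the divergence is nonnegative); a sub-Weibull concentration inequality for sums of $\psi_{\alpha_*}$ variables then yields an exponential tail for $(P_n-P)m(\theta_j,\varphi_j)$. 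A union bound over the $\le(d_0/\delta)^{d_*}$ points, with $\delta$ taken polynomially small in $n$, produces the $\log(d_* n)$ factor, the extra $(\log n)^{1/\alpha_*}$ coming from the heavier-than-exponential tail when $\alpha_*<1$, and the exponent $\{d_*\log n\}^{1\wedge\alpha_*}$ in the confidence level.

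The crucial and hardest point is that these two ingredients alone would only give the slow rate $\sqrt{d_*/n}$. To reach the advertised $1/n$ rate one must exploit that $m$ is a log-likelihood ratio against a \emph{sub}-probability density: setting $q_{\theta,\varphi}(y_{0:T}) = L_T^{y_{0:T}}(\theta)\exp\!\big(-\mathrm{KL}(Q^{y_{0:T}}_{\varphi,0:T}\|\filtmeas^{y_{0:T}}_{\theta,0:T|T})\big)$ one has $m=\log(p_{\mathcal{D}}/q_{\theta,\varphi})$, $q_{\theta,\varphi}$ integrates to at most one, hence $Pm=\mathrm{KL}(P_{\mathcal{D}}\|q_{\theta,\varphi})\ge 0$, and the variance of the increments of $m$ is bounded by a constant times the excess population risk — a Bernstein/self-bounding property that is automatic for likelihood-type losses once the envelope Orlicz bound from Assumption A is available. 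Feeding this variance-to-mean comparison into the localized maximal inequality lets the stochastic term be absorbed as $\gamma\,(Pm(\widehat\theta_{n,T},\widehat\varphi_{n,T})-\mathsf{E}_T)$ plus a remainder of order $(1+\gamma^{-1})a_1 Dd_*\log(d_* n)(\log n)^{1/\alpha_*}/n$; rearranging and optimizing over $\gamma>0$ gives exactly the stated bound. The main obstacle is precisely this fast-rate step: establishing the variance-to-mean inequality for the increments of $m$ from Condition A and Assumption A alone, and then carrying $a_1$, $D$, $d_0$ and $\alpha_*$ faithfully through the sub-Weibull concentration and the peeling over $r$ so that they appear in the final bound in the precise multiplicative form $a_1 Dd_*$ with the correct powers of $\log n$.
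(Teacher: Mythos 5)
Your proposal should first be measured against what the paper actually does here: it does \emph{not} prove Theorem~\ref{th:tang} from scratch. The paper's proof is a citation argument --- it reruns the proof of Theorem~3 of \cite{tang21a} verbatim and only tracks where the constant $a_1$ from Condition~A enters (their Lemma~14 and one inequality on their page~24), concluding that the constant $c_2$ is proportional to $a_1$. Your reconstruction is therefore a genuinely different presentation, but its skeleton (basic inequality for the $M$-estimator, $\delta$-net of $\Theta\times\Phi$ of cardinality $(d_0/\delta)^{d_*}$ controlled through the Lipschitz Condition~A, sub-Weibull concentration from the $\psi_{\alpha_*}$ envelope in Assumption~A, localization/peeling, and the observation that $m=\log(p_{\mathcal D}/q_{\theta,\varphi})$ with $q_{\theta,\varphi}$ a sub-probability density so that $Pm\ge 0$) is essentially the architecture of the proof being imported. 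If carried out, it would make the result self-contained, which the paper's version is not.

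The gap is the fast-rate step, which you flag yourself but then dismiss as ``automatic.'' It is not. For a general loss with $Pm\ge 0$ and a $\psi_{\alpha_*}$ envelope there is no variance-to-mean comparison (a function can have negligible mean and order-one variance), and the version you state --- that the variance of the \emph{increments} $m(\theta,\varphi,\cdot)-m(\theta',\varphi',\cdot)$ is bounded by the excess population risk --- is not the inequality that is available: increments between arbitrary pairs, or relative to the best-in-class element in a misspecified model, need not satisfy a Bernstein condition. What the argument actually requires is a truncated second-moment-to-first-moment comparison for the nonnegative-mean loss itself, of the form $\E[m^2]\lesssim D(\log n)^{1/\alpha_*}\,\E[m] + (\text{negligible})$, obtained by splitting on the event $|m|\le C D(\log n)^{1/\alpha_*}$ and using both the sub-density structure ($\int q_{\theta,\varphi}\le 1$, hence the classical inequality relating $\E_p[(\log(p/q))^2]$ to $\E_p[\log(p/q)]$ on the truncation event) and the Orlicz tail off it. This is precisely where the product $D(\log n)^{1/\alpha_*}$ in the final bound originates, so it cannot be waved through: without it your two other ingredients only deliver the slow rate $\sqrt{a_1 d_*/n}$, as you note. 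To complete the proof you would need to state and prove this comparison lemma and then verify that $a_1$ enters the final constant only linearly (through the off-net remainder $\delta\,(P_n+P)b$ and the net resolution $\delta\sim 1/(n\sqrt{a_1})$), which is exactly the bookkeeping the paper performs on the cited proof.
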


\begin{proof}
 We follow 
 the proof of \cite[Theorem 3]{tang21a}, in which we track the dependencies of the constants with respect to $a_1$. 
 In \cite[Lemma 14]{tang21a}, a multiplicative term $\sqrt{a_1}$ is required on the r.h.s. of the inequality. Then on page 24 third line the inequality needs again $\sqrt{a_1}$ on the r.h.s., and the end of the proof follows by multiplying $\delta_n$ by $\sqrt{a_1}$. We obtain that in  \cite[Theorem 3]{tang21a}, their constant $c_2$ is proportional to $a_1$. 
    \end{proof}

\section{Additional proofs}
\label{app:A}

\begin{proposition}
\label{prop:loglikelihood:lipschitz}
Assume that H\ref{assum:strong:mixing}-\ref{assum:lip} hold. For all $\theta$, $\theta'\in\Theta$, and all $y_{0:T}\in\Yset^{T+1}$,
$$
\left|\ell_T^{y_{0:T}}(\theta) - \ell_T^{y_{0:T}}(\theta')\right| \leq \kappa_1(y_{0:T})\|\theta-\theta'\|_2\eqsp,
$$
where 
$$
\kappa_1(y_{0:T}) = \frac{\udup \eta_+(G^{y_0}) }{\udlow c_-(y_0)} +\sum_{t=1}^T \frac{\udup}{\udlow c_-(y_t)}\left\{c_+(y_t)L_{t-1}(y_{0:t-1}) + \frac{\eta_+\otimes\mu(M \cdot \bar g^{y_{t-1}}\bar \otimes g^{y_t})}{\udlow c_-(y_{t-1})} + \eta_+(G^{y_t})\right\}\eqsp,
$$
with $M \cdot \bar g^{y_{t-1}} \otimes\bar g^{y_t}(x,x') = M(x,x')\bar g^{y_{t-1}}(x)\bar g^{y_t}(x')$, where $L_{t-1}$ is defined in Lemma~\ref{lem:filtering:lip}. 
\end{proposition}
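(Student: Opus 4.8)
The plan is to decompose the loglikelihood into a telescoping sum of one-step predictive log-densities and to control each increment separately, with the filter stability of Lemma~\ref{lem:filtering:lip} doing the heavy lifting. Set $c_0^{y_{0:T}}(\theta) = \int \chi(\rmd x_0) g_\theta^{y_0}(x_0)$ and, for $1\leq t\leq T$,
$$
c_t^{y_{0:T}}(\theta) = \int \filtmeas_{\theta,t-1}^{y_{0:t-1}}(\rmd x_{t-1})\,M_\theta(x_{t-1},\rmd x_t)\,g_\theta^{y_t}(x_t)\eqsp,
$$
so that $\ell_T^{y_{0:T}}(\theta) = \sum_{t=0}^T \log c_t^{y_{0:T}}(\theta)$ and $|\ell_T^{y_{0:T}}(\theta) - \ell_T^{y_{0:T}}(\theta')| \leq \sum_{t=0}^T |\log c_t^{y_{0:T}}(\theta) - \log c_t^{y_{0:T}}(\theta')|$. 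First I would establish a uniform lower bound on each increment: using $M_\theta(x,\cdot)\geq \udlow\eta_-$ from H\ref{assum:strong:mixing} and the definition of $c_-$ in H\ref{assum:bound:likelihood}, one gets $c_t^{y_{0:T}}(\theta)\geq \udlow c_-(y_t)$ for every $\theta$ (and likewise $c_0\geq \udlow c_-(y_0)$). Combined with $|\log a-\log b|\leq |a-b|/(a\wedge b)$, this reduces the problem to bounding the numerators $|c_t^{y_{0:T}}(\theta)-c_t^{y_{0:T}}(\theta')|$ and dividing by $\udlow c_-(y_t)$.

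The $t=0$ term is immediate: by the Lipschitz bound on $g_\theta^y$ in H\ref{assum:lip} and $\chi\leq \udup\eta_+$, one has $|c_0(\theta)-c_0(\theta')|\leq \udup\eta_+(G^{y_0})\|\theta-\theta'\|_2$, which after division gives the first summand of $\kappa_1$. For $t\geq 1$ I would split $c_t(\theta)-c_t(\theta')$ along the three places where $\theta$ enters the increment,
$$
c_t(\theta)-c_t(\theta') = \int \filtmeas_{\theta,t-1}M_\theta\,(g_\theta^{y_t}-g_{\theta'}^{y_t}) + \int \filtmeas_{\theta,t-1}(M_\theta-M_{\theta'})\,g_{\theta'}^{y_t} + \int (\filtmeas_{\theta,t-1}-\filtmeas_{\theta',t-1})M_{\theta'}\,g_{\theta'}^{y_t}\eqsp,
$$
and bound the first two with H\ref{assum:lip}. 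The emission term is at most $\udup\eta_+(G^{y_t})\|\theta-\theta'\|_2$ using $M_\theta(x,\cdot)\leq\udup\eta_+$; the transition term uses the Lipschitz coefficient $M(\cdot,\cdot)$ together with the pointwise filter upper bound $\filtmeas_{\theta,t-1}^{y_{0:t-1}}(\rmd x)\leq (\udup/(\udlow c_-(y_{t-1})))\bar g^{y_{t-1}}(x)\eta_+(\rmd x)$ --- itself read off the filtering recursion via H\ref{assum:strong:mixing}-H\ref{assum:bound:likelihood} --- and yields $(\udup/(\udlow c_-(y_{t-1})))\eta_+\otimes\mu(M\cdot\bar g^{y_{t-1}}\otimes\bar g^{y_t})\|\theta-\theta'\|_2$.

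The third, filter-difference, term is where the real work lies. Integrating out $x_t$ leaves the bounded test function $f(x_{t-1})=\int m_{\theta'}(x_{t-1},x_t)g_{\theta'}^{y_t}(x_t)\mu(\rmd x_t)$, for which $\|f\|_\infty\leq \udup c_+(y_t)$ by H\ref{assum:strong:mixing}-H\ref{assum:bound:likelihood}, so the term is at most $\udup c_+(y_t)$ times the $\theta$-variation of the time-$(t-1)$ filter. This is exactly what Lemma~\ref{lem:filtering:lip} provides, in the form $|\filtmeas_{\theta,t-1}^{y_{0:t-1}}(f)-\filtmeas_{\theta',t-1}^{y_{0:t-1}}(f)|\leq \|f\|_\infty L_{t-1}(y_{0:t-1})\|\theta-\theta'\|_2$, giving $\udup c_+(y_t)L_{t-1}(y_{0:t-1})\|\theta-\theta'\|_2$. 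Dividing the three contributions by $\udlow c_-(y_t)$ and summing over $t$ reproduces $\kappa_1$ term by term. The hard part is precisely the filter-stability estimate packaged in $L_{t-1}$: it is where the geometric forgetting factor $\varepsilon=1-\udlow/\udup$ appears, since establishing it requires propagating the one-step Lipschitz sensitivity of the filter backward through the normalized recursion while exploiting the contraction of the filtering map under the Doeblin-type bounds of H\ref{assum:strong:mixing}. Granting that lemma, everything else is a routine application of H\ref{assum:strong:mixing}-H\ref{assum:lip}.
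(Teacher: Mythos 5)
Your proposal is correct and follows essentially the same route as the paper: the telescoping decomposition into one-step predictive densities, the uniform lower bound $\udlow c_-(y_t)$ on each increment, the three-way split of $c_t(\theta)-c_t(\theta')$ bounded via H\ref{assum:strong:mixing}--H\ref{assum:lip} and the filter bound of Lemma~\ref{lem:bound:filter}, and finally Lemma~\ref{lem:filtering:lip} for the total-variation sensitivity of the filter. The only (immaterial) difference is which of $\theta,\theta'$ carries each factor in the telescoping of the integrand.
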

\begin{proof}
For all $\theta$, $\theta'\in\Theta$, and all $y_{0:T}\in\Yset^{T+1}$, with the convention $p_\theta(y_0|y_{-1}) = p_\theta(y_0)$,
$$
\ell_T^{y_{0:T}}(\theta) - \ell_T^{y_{0:T}}(\theta') = \sum_{t=0}^T \left(\log p_\theta(y_t|y_{0:t-1})-\log p_{\theta'}(y_t|y_{0:t-1})\right)\eqsp.
$$
For all $t>0$, 
$$
p_\theta(y_t|y_{0:t-1}) = \int \filtmeas^{y_{0:t-1}}_{\theta,t-1}(\rmd x_{t-1})M_\theta(x_{t-1},\rmd x_t)g_\theta^{y_t}(x_t)\eqsp.
$$
Note first that
$$
p_\theta(y_t|y_{0:t-1}) \geq \udlow c_-(y_t)\eqsp,
$$
so that
$$
\left|\ell_T^{y_{0:T}}(\theta) - \ell_T^{y_{0:T}}(\theta')\right| \leq \frac{\left| p_\theta(y_{0})- p_{\theta'}(y_0)\right|}{\udlow c_-(y_0)}+
\sum_{t=0}^T \frac{\left|p_\theta(y_t|y_{0:t-1})-p_{\theta'}(y_t|y_{0:t-1})\right|}{\udlow c_-(y_t)}\eqsp.
$$
For $t=0$, using that $p_\theta(y_{0}) = \int \chi(\rmd x_0)g^{y_0}_\theta(x_0)$,  Assumptions  H\ref{assum:strong:mixing} and H\ref{assum:lip} yield
$$
\left| p_\theta(y_{0})- p_{\theta'}(y_0)\right|\leq \udup \eta_+(G^{y_0}) \|\theta-\theta'\|_2\eqsp.
$$
In addition, 
\begin{multline*}
p_\theta(y_t|y_{0:t-1})-p_{\theta'}(y_t|y_{0:t-1})= \int \left(\filtmeas^{y_{0:t-1}}_{\theta,t-1}(\rmd x_{t-1})-\filtmeas^{y_{0:t-1}}_{\theta',t-1}(\rmd x_{t-1})\right)M_\theta(x_{t-1},\rmd x_t)g_\theta^{y_t}(x_t)\\
+
\int \filtmeas^{y_{0:t-1}}_{\theta',t-1}(\rmd x_{t-1})\left(M_\theta(x_{t-1},\rmd x_t)-M_{\theta'}(x_{t-1},\rmd x_t)\right)g_\theta^{y_t}(x_t)
+\int \filtmeas^{y_{0:t-1}}_{\theta',t-1}(\rmd x_{t-1})M_{\theta'}(x_{t-1},\rmd x_t)\left(g_\theta^{y_t}(x_t)-g_{\theta'}^{y_t}(x_t)\right)
\eqsp.
\end{multline*}
Using Lemma~\ref{lem:bound:filter}, Assumptions  H\ref{assum:strong:mixing} and H\ref{assum:lip}, we get
\begin{multline*}
\left|p_\theta(y_t|y_{0:t-1})-p_{\theta'}(y_t|y_{0:t-1})\right|\leq \left\{\sigma_+c_+(y_t)\left\|\filtmeas^{y_{0:t-1}}_{\theta,t-1}-\filtmeas^{y_{0:t-1}}_{\theta',t-1}\right\|_{\mathrm{tv}} + \right.\\
\left. \frac{\udup}{\udlow c_-(y_{t-1})}\int \eta_+\otimes\mu(\rmd x \rmd x')(M(x,x') \bar g^{y_{t-1}}(x)
\bar g^{y_t}(x')) + \udup \eta_+(G^{y_t})\right\}\|\theta-\theta'\|_2\eqsp.
\end{multline*}

The proof is completed by using  Lemma~\ref{lem:filtering:lip}.
\end{proof}

\begin{proposition}
\label{prop:delta1}
Assume that H\ref{assum:strong:mixing}-\ref{assum:boundH} hold. Then,
$$
\Delta_1(\theta, \varphi, \varphi', y_{0:T}) \leq \kappa_2(y_{0:T})\|\varphi-\varphi'\|_2 \eqsp,
$$
where 
$$
\Delta_1(\theta, \varphi, \varphi', y_{0:T}) = \left|\mathbb{E}_{q^{y_{0:T}}_{\varphi,0:T}}\left[\log \frac{q^{y_{0:T}}_{\varphi,0:T}(X_{0:T})}{\phi^{y_{0:T}}_{\theta,0:T|T}(X_{0:T})}\right] - \mathbb{E}_{q^{y_{0:T}}_{\varphi',0:T}}\left[\log \frac{q^{y_{0:T}}_{\varphi,0:T}(X_{0:T})}{\phi^{y_{0:T}}_{\theta,0:T|T}(X_{0:T})}\right]\right|\eqsp,
$$
with $\rho(y_{0:T}) = 1-\udlowvar^{y_{0:T}}$ and
$$
\kappa_2(y_{0:T}) = (\udupvar^{y_{0:T}})^3\sum_{t=1}^T\upsilon_t^{y_{0:T}}\sum_{s=t-1}^T \lambda_+\otimes \lambda_+   (K^{y_{0:T}}_{s|s+1}) \rho(y_{0:T})^{s-t}\eqsp.
$$
\end{proposition}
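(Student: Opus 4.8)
The plan is to exploit the additive structure of the log-ratio together with the geometric forgetting induced by the strong mixing assumption H\ref{assum:strong:mixing}. By the backward factorizations of $q^{y_{0:T}}_{\varphi,0:T}$ and $\phi^{y_{0:T}}_{\theta,0:T|T}$, the integrand in $\Delta_1$ telescopes into the local terms of \eqref{eq:def:addfunc},
$$
\log\frac{q^{y_{0:T}}_{\varphi,0:T}(x_{0:T})}{\phi^{y_{0:T}}_{\theta,0:T|T}(x_{0:T})} = \sum_{t=1}^{T} h^{y_{0:T}}_{t,\theta,\varphi}(x_{t-1},x_t)\eqsp.
$$
Since in $\Delta_1$ the integrand is frozen at $\varphi$ while only the sampling law changes from $Q_{\varphi,0:T}$ to $Q_{\varphi',0:T}$, and since $h^{y_{0:T}}_{t,\theta,\varphi}$ depends only on $(x_{t-1},x_t)$, I would first reduce the problem to the bivariate backward marginals: writing $q^{(t)}_\varphi$ for the marginal density of $(X_{t-1},X_t)$ under $Q_{\varphi,0:T}$,
$$
\Delta_1(\theta,\varphi,\varphi',y_{0:T}) \le \sum_{t=1}^T\left|\int h^{y_{0:T}}_{t,\theta,\varphi}(x_{t-1},x_t)\left(q^{(t)}_\varphi - q^{(t)}_{\varphi'}\right)(x_{t-1},x_t)\,\mu(\rmd x_{t-1})\mu(\rmd x_t)\right|\eqsp.
$$

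For each $t$ I would factorize $q^{(t)}_\varphi(x_{t-1},x_t) = q_{\varphi,t}(x_t)\,q^{y_{0:T}}_{\varphi,t-1|t}(x_t,x_{t-1})$, where $q_{\varphi,t}$ is the marginal of $X_t$ obtained by running the backward chain from time $T$ down to $t$, and split $q^{(t)}_\varphi - q^{(t)}_{\varphi'}$ into a \emph{direct} term carrying the perturbation $q^{y_{0:T}}_{\varphi,t-1|t} - q^{y_{0:T}}_{\varphi',t-1|t}$ of the last kernel (the $s=t-1$ contribution) and a \emph{marginal} term carrying $q_{\varphi,t}-q_{\varphi',t}$. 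In the marginal term the factor $h^{y_{0:T}}_{t,\theta,\varphi}$ is integrated against the backward kernel $q^{y_{0:T}}_{\varphi',t-1|t}(x_t,\cdot)$; bounding that kernel above by $\udupvar^{y_{0:T}}\lambda_+$ via H\ref{assum:strong:mixing} and invoking H\ref{assum:boundH} turns the inner integral into a function of $x_t$ bounded by $\udupvar^{y_{0:T}}\upsilon_t^{y_{0:T}}$ uniformly. The direct term is handled similarly, except that $h^{y_{0:T}}_{t,\theta,\varphi}$ is integrated against the signed kernel difference, which I would control through the density Lipschitz bound $K^{y_{0:T}}_{t-1|t}$ of H\ref{assum:lip}; this yields the boundary $s=t-1$ summand of $\kappa_2$.

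The core of the argument is the control of $\|Q_{\varphi,t}-Q_{\varphi',t}\|_{\mathrm{tv}}$ for the marginal term. Here I would use that each backward kernel satisfies the Doeblin minorization $Q^{y_{0:T}}_{\varphi,s|s+1}(x,\cdot)\ge\udlowvar^{y_{0:T}}\lambda_-$ from H\ref{assum:strong:mixing}, so that, as Markov operators, they contract the total variation of any zero-mass signed measure by the Dobrushin coefficient $\rho(y_{0:T})=1-\udlowvar^{y_{0:T}}$. Writing $Q_{\varphi,t}-Q_{\varphi',t}$ as a telescoping sum over the position $s\ge t$ at which a single kernel (or the terminal marginal $Q^{y_{0:T}}_{\varphi,T}$) is switched from $\varphi'$ to $\varphi$, each summand is a one-step perturbation pushed through $s-t$ genuine backward kernels and is therefore damped by $\rho(y_{0:T})^{s-t}$. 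The one-step perturbation at position $s$ is bounded, using the density Lipschitz bound $K^{y_{0:T}}_{s|s+1}$ of H\ref{assum:lip} and the majorization of the intermediate marginal by $\udupvar^{y_{0:T}}\lambda_+$, by a constant times $\udupvar^{y_{0:T}}\,\lambda_+\otimes\lambda_+(K^{y_{0:T}}_{s|s+1})\,\|\varphi-\varphi'\|_2$. Summing the geometric series over $s$ from $t-1$ to $T$, multiplying by the factor $\udupvar^{y_{0:T}}\upsilon_t^{y_{0:T}}$ coming from the previous paragraph, summing over $t$, and collecting the three majorization factors $\udupvar^{y_{0:T}}$ then gives exactly $\kappa_2(y_{0:T})\|\varphi-\varphi'\|_2$.

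The main obstacle is the forgetting step: making the telescoping decomposition rigorous and checking that the Dobrushin contraction genuinely applies to the kernels acting after the switch point, which requires verifying that the hybrid intermediate marginals still obey the two-sided bounds of H\ref{assum:strong:mixing} so that each one-step increment is a zero-mass signed measure dominated by $\udupvar^{y_{0:T}}\lambda_+$. A secondary difficulty is the bookkeeping: one must keep the local term $h^{y_{0:T}}_{t,\theta,\varphi}$ integrated against $\lambda_+$ (so as to invoke $\upsilon_t^{y_{0:T}}$) while the kernel Lipschitz coefficients are integrated against $\lambda_+\otimes\lambda_+$, and confirm that the accumulated constants assemble into precisely the power $(\udupvar^{y_{0:T}})^3$ and the geometric weights $\rho(y_{0:T})^{s-t}$ appearing in $\kappa_2$.
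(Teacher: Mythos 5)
Your proposal is correct and follows essentially the same route as the paper's proof: a telescoping decomposition over the position $s$ at which a single backward kernel (or the terminal marginal) is switched between $\varphi$ and $\varphi'$, the Dobrushin contraction $\rho(y_{0:T})^{s-t}$ supplied by the Doeblin minorization in H\ref{assum:strong:mixing}, the oscillation bound $\udupvar^{y_{0:T}}\upsilon_t^{y_{0:T}}$ for the kernel-smoothed $h^{y_{0:T}}_{t,\theta,\varphi}$, and the one-step perturbation controlled by $\lambda_+\otimes\lambda_+(K^{y_{0:T}}_{s|s+1})$ via H\ref{assum:lip}. The only cosmetic difference is that you first marginalize to the pair $(X_{t-1},X_t)$ before telescoping, whereas the paper telescopes the full joint density and observes that the terms with $t>s+1$ vanish; the constants assemble identically into $(\udupvar^{y_{0:T}})^3$.
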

\begin{proof}
For all $\varphi,\varphi'\in\Phi$, $0\leq t \leq T-1$, define
\begin{multline*}
\tilde q^{y_{0:T}}_{\varphi,\varphi',t|T}(x_{0:T}) = q^{y_{0:T}}_{\varphi,T}(x_T)\prod_{u=T}^{t+1}q^{y_{0:T}}_{\varphi,u-1|u}(x_u,x_{u-1})\prod_{u=t}^{1}q^{y_{0:T}}_{\varphi',u-1|u}(x_u,x_{u-1}) \\
- q^{y_{0:T}}_{\varphi,T}(x_T)\prod_{u=T}^{t+2}q^{y_{0:T}}_{\varphi,u-1|u}(x_u,x_{u-1})\prod_{u=t+1}^{1}q^{y_{0:T}}_{\varphi',u-1|u}(x_u,x_{u-1})
\end{multline*}
with the convention $\prod_{u=T}^{T+1}q^{y_{0:T}}_{\varphi,u-1|u}(x_u,x_{u-1}) = 1$ and $\prod_{u=0}^{1}q^{y_{0:T}}_{\varphi',u-1|u}(x_u,x_{u-1}) = 1$, and for $t=T$,
$$
\tilde q^{y_{0:T}}_{\varphi,\varphi',T|T}(x_{0:T}) = q^{y_{0:T}}_{\varphi,T}(x_T)\prod_{u=T}^1q^{y_{0:T}}_{\varphi',u-1|u}(x_u,x_{u-1}) 
- q^{y_{0:T}}_{\varphi',T}(x_T)\prod_{u=T}^1q^{y_{0:T}}_{\varphi',u-1|u}(x_u,x_{u-1})\eqsp.
$$
Therefore,
\begin{align*}
\Delta_1(\theta, \varphi, \varphi', y_{0:T})&= \left|\sum_{t=1}^T\mathbb{E}_{q^{y_{0:T}}_{\varphi,0:T}}\left[h^{y_{0:T}}_{t,\theta,\varphi}(X_{t-1},X_t)\right] - \mathbb{E}_{q^{y_{0:T}}_{\varphi',0:T}}\left[h^{y_{0:T}}_{t,\theta,\varphi}(X_{t-1},X_t)\right]\right|\eqsp,\\
&= \left|\sum_{t=1}^T\sum_{s=0}^T \mathbb{E}_{\tilde q^{y_{0:T}}_{\varphi,\varphi',s|T}}\left[h^{y_{0:T}}_{t,\theta,\varphi}(X_{t-1},X_t)\right]\right| \eqsp,
\end{align*}
where $h^{y_{0:T}}_{t,\theta,\varphi}$, $1\leq t\leq T$, are defined in \eqref{eq:def:addfunc}. 
Note first that if $t> s+1$, then $\mathbb{E}_{\tilde q^{y_{0:T}}_{\varphi,\varphi',s|T}}\left[h^{y_{0:T}}_{t,\theta,\varphi}(X_{t-1},X_t)\right] = 0$ so that
$$
\Delta_1(\theta, \varphi, \varphi', y_{0:T})= \left|\sum_{t=1}^T \sum_{s=t-1}^T \mathbb{E}_{\tilde q^{y_{0:T}}_{\varphi,\varphi',s|T}}\left[h^{y_{0:T}}_{t,\theta,\varphi}(X_{t-1},X_t)\right]\right| \eqsp.
$$
For all $t\leq s+1$, write for all measurable set $A$,
\begin{align*}
\mu^{y_{0:T}}_{\varphi,s}(A) &= \int \mathds{1}_A(x_s) q^{y_{0:T}}_{\varphi,T}(x_T)\mu(\rmd x_T)\prod_{u=T}^{s+1}q^{y_{0:T}}_{\varphi,u-1|u}(x_u,x_{u-1})\mu(\rmd x_{u-1})\eqsp,\\
\tilde\mu^{y_{0:T}}_{\varphi,\varphi',s}(A) &= \int \mathds{1}_A(x_s) q^{y_{0:T}}_{\varphi,T}(x_T)\mu(\rmd x_T)\prod_{u=T}^{s+2}q^{y_{0:T}}_{\varphi,u-1|u}(x_u,x_{u-1})\mu(\rmd x_{u-1})q^{y_{0:T}}_{\varphi',s|s+1}(x_{s+1},x_{s})\mu(\rmd x_{s})\eqsp.
\end{align*}
Therefore,
$$
\mathbb{E}_{\tilde q^{y_{0:T}}_{\varphi,\varphi',s|T}}\left[h^{y_{0:T}}_{t,\theta,\varphi}(X_{t-1},X_t)\right] = \left(\mu^{y_{0:T}}_{\varphi,s} - \tilde\mu^{y_{0:T}}_{\varphi,\varphi',s}\right)\left\{\prod_{u=s}^{t+1}Q^{y_{0:T}}_{\varphi',u-1|u}\right\}Q^{y_{0:T}}_{\varphi',t-1|t}h^{y_{0:T}}_{t,\theta,\varphi}\eqsp.
$$
Using H\ref{assum:strong:mixing}, the backward variational kernels satisfy a Doeblin condition, see \cite[Section~6.1.3]{douc2014nonlinear}, so that 
$$
\mathbb{E}_{\tilde q^{y_{0:T}}_{\varphi,\varphi',s|T}}\left[h^{y_{0:T}}_{t,\theta,\varphi}(X_{t-1},X_t)\right]  \leq \frac{1}{2}\|\mu^{y_{0:T}}_{\varphi,s}-\tilde\mu^{y_{0:T}}_{\varphi,\varphi',s}\|_{\mathrm{tv}}\rho(y_{0:T})^{s-t}\mathrm{osc}\left(Q^{y_{0:T}}_{\varphi',t-1|t}h^{y_{0:T}}_{t,\theta,\varphi}\right)\eqsp,
$$
where for all measurable functions $f$, $\mathrm{osc}(f) = \sup_{x,x'\in\Xset}|f(x) - f(x')|$. By H\ref{assum:strong:mixing} and H\ref{assum:boundH},
\begin{align*}
\mathrm{osc}\left(Q^{y_{0:T}}_{\varphi',t-1|t}h^{y_{0:T}}_{t,\theta,\varphi}\right) &\leq 2 \left\| \int q^{y_{0:T}}_{\varphi',t-1|t}(\cdot,x_{t-1})h^{y_{0:T}}_{t,\theta,\varphi}(x_{t-1},\cdot)\mu(\rmd x_{t-1})\right\|_\infty\eqsp,\\
&\leq 2 \udupvar^{y_{0:T}} \left\| \int \left|h^{y_{0:T}}_{t,\theta,\varphi}(x_{t-1},\cdot)\right|\lambda_+(\rmd x_{t-1})\right\|_\infty\eqsp,\\
&\leq 2 \udupvar^{y_{0:T}}\upsilon_t^{y_{0:T}}\eqsp.
\end{align*}
Noting that by H\ref{assum:lip},
$$
\|\mu^{y_{0:T}}_{\varphi,s}-\tilde\mu^{y_{0:T}}_{\varphi,\varphi',s}\|_{\mathrm{tv}} \leq Q^{y_{0:T}}_{\varphi,T}\prod_{s=T}^{t+1}Q^{y_{0:T}}_{\varphi,s-1|s}K^{y_{0:T}}_{s|s+1}\|\varphi-\varphi'\|_2\leq (\udupvar^{y_{0:T}})^2\lambda_+\otimes \lambda_+ (K^{y_{0:T}}_{s|s+1})\|\varphi-\varphi'\|_2\eqsp,
$$
concludes the proof.
\end{proof}

\begin{proposition}
\label{prop:delta2}
Assume that H\ref{assum:strong:mixing}-\ref{assum:boundH} hold. Then,
$$
\Delta_2(\theta, \theta', \varphi, \varphi', y_{0:T})
\leq \kappa_3(y_{0:T}) \left\|\varphi-\varphi'\right\|_2 + \kappa_4(y_{0:T}) \left\|\theta-\theta'\right\|_2\eqsp,
$$
where
$$
\Delta_2(\theta, \theta', \varphi, \varphi', y_{0:T}) = \left|\mathbb{E}_{q^{y_{0:T}}_{\varphi',0:T}}\left[\log \frac{q^{y_{0:T}}_{\varphi,0:T}(X_{0:T})}{\phi^{y_{0:T}}_{\theta,0:T|T}(X_{0:T})}\right] - \mathbb{E}_{q^{y_{0:T}}_{\varphi',0:T}}\left[\log \frac{q^{y_{0:T}}_{\varphi',0:T}(X_{0:T})}{\phi^{y_{0:T}}_{\theta',0:T|T}(X_{0:T})}\right]\right|\eqsp,
$$
with 
$$
\kappa_3(y_{0:T}) = \udupvar^{y_{0:T}}\left(\udupvar^{y_{0:T}}\sum_{t=1}^T c_{1,t}^{y_{0:T}} + c_{3,T}^{y_{0:T}} \right)\quad \mathrm{and} \quad
\kappa_4(y_{0:T}) = \udupvar^{y_{0:T}}\left(\udupvar^{y_{0:T}}\sum_{t=1}^T c_{2,t}^{y_{0:t-1}} + c_{4,t}^{y_{0:T}}\right)\eqsp,
$$
and where $c_{1,t}^{y_{0:T}}$, $c_{2,t}^{y_{0:t-1}}$, $c_{3,T}^{y_{0:T}}$ and $c_{4,t}^{y_{0:T}}$ are defined in H\ref{assum:boundH}.
\end{proposition}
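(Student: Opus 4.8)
The plan is to exploit the fact that in $\Delta_2$ both expectations are taken under the \emph{same} variational law $q^{y_{0:T}}_{\varphi',0:T}$, so that only the integrand changes. Cancelling the two log-ratios gives
$$
\Delta_2(\theta,\theta',\varphi,\varphi',y_{0:T}) = \left|\mathbb{E}_{q^{y_{0:T}}_{\varphi',0:T}}\left[\log \frac{q^{y_{0:T}}_{\varphi,0:T}(X_{0:T})}{q^{y_{0:T}}_{\varphi',0:T}(X_{0:T})} + \log \frac{\phi^{y_{0:T}}_{\theta',0:T|T}(X_{0:T})}{\phi^{y_{0:T}}_{\theta,0:T|T}(X_{0:T})}\right]\right|\eqsp,
$$
and a triangle inequality separates a term depending only on $(\varphi,\varphi')$ from a term depending only on $(\theta,\theta')$. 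I expect these to produce $\kappa_3\|\varphi-\varphi'\|_2$ and $\kappa_4\|\theta-\theta'\|_2$ respectively, so they can be handled independently by the same mechanism.

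Next I would telescope each log-ratio using the backward factorizations $q^{y_{0:T}}_{\varphi,0:T} = q^{y_{0:T}}_{\varphi,T}\prod_{t=1}^T q^{y_{0:T}}_{\varphi,t-1|t}$ and $\phi^{y_{0:T}}_{\theta,0:T|T} = \phi^{y_{0:T}}_{\theta,T}\prod_{t=1}^T b^{y_{0:t-1}}_{\theta,t-1|t}$. This turns the $\varphi$-part into $\mathbb{E}_{q_{\varphi'}}[\log q^{y_{0:T}}_{\varphi,T}(X_T)-\log q^{y_{0:T}}_{\varphi',T}(X_T)]$ plus the sum over $1\le t\le T$ of $\mathbb{E}_{q_{\varphi'}}[\log q^{y_{0:T}}_{\varphi,t-1|t}(X_t,X_{t-1})-\log q^{y_{0:T}}_{\varphi',t-1|t}(X_t,X_{t-1})]$, and analogously for the $\theta$-part with $b^{y_{0:t-1}}_{\theta,t-1|t}$ and $\phi^{y_{0:T}}_{\theta,T}$. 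Each summand now depends on at most the pair $(X_{t-1},X_t)$, or on $X_T$ alone, so only the corresponding marginal of $q^{y_{0:T}}_{\varphi',0:T}$ enters.

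The crucial step is then to dominate these marginals. Integrating out the variables below index $t-1$ uses that each backward kernel is a probability density and contributes a factor $1$, while integrating out the variables above index $t$ leaves the backward-propagated marginal $\mu^{y_{0:T}}_{\varphi',t}$ (in the notation of the proof of Proposition~\ref{prop:delta1}), so the law of $(X_{t-1},X_t)$ has density $\mu^{y_{0:T}}_{\varphi',t}(x_t)\,q^{y_{0:T}}_{\varphi',t-1|t}(x_t,x_{t-1})$. Since the last kernel applied in building $\mu^{y_{0:T}}_{\varphi',t}$ is $Q^{y_{0:T}}_{\varphi',t|t+1}$, the upper bound in H\ref{assum:strong:mixing} gives $\mu^{y_{0:T}}_{\varphi',t}\le \udupvar^{y_{0:T}}\lambda_+$, and a second application gives $q^{y_{0:T}}_{\varphi',t-1|t}(x_t,\cdot)\le \udupvar^{y_{0:T}}\lambda_+$; hence the pair marginal is dominated by $(\udupvar^{y_{0:T}})^2\lambda_+\otimes\lambda_+$ and the marginal of $X_T$ by $\udupvar^{y_{0:T}}\lambda_+$. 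Bounding each summand by its absolute value and inserting these dominations reduces everything to the integrals $\int\lambda_+\otimes\lambda_+\,|\log q_{\varphi,t-1|t}-\log q_{\varphi',t-1|t}|$, $\int\lambda_+\,|\log q_{\varphi,T}-\log q_{\varphi',T}|$ and their $\theta$-counterparts, which are exactly the quantities controlled by H\ref{assum:boundH} through $c_{1,t}^{y_{0:T}}$, $c_{2,t}^{y_{0:t-1}}$, $c_{3,T}^{y_{0:T}}$ and $c_{4,T}^{y_{0:T}}$. Summing over $t$ and collecting the factors $\udupvar^{y_{0:T}}$ and $(\udupvar^{y_{0:T}})^2$ yields precisely $\kappa_3$ and $\kappa_4$. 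The main obstacle is the marginal domination: one must argue carefully that dominating $\mu^{y_{0:T}}_{\varphi',t}$ and $q^{y_{0:T}}_{\varphi',t-1|t}$ separately by $\udupvar^{y_{0:T}}\lambda_+$ is legitimate, which is where H\ref{assum:strong:mixing} is indispensable, whereas the telescoping and the final application of H\ref{assum:boundH} are routine.
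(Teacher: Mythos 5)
Your argument is correct and follows essentially the same route as the paper: the paper also keeps both expectations under $q^{y_{0:T}}_{\varphi',0:T}$, telescopes the log-ratio over $t$ (organizing the terms through the functions $h^{y_{0:T}}_{t,\theta,\varphi}$ of \eqref{eq:def:addfunc} rather than separating the $\varphi$- and $\theta$-parts first, which is only a bookkeeping difference), dominates the law of $(X_{t-1},X_t)$ by $(\udupvar^{y_{0:T}})^2\lambda_+\otimes\lambda_+$ and that of $X_T$ by $\udupvar^{y_{0:T}}\lambda_+$ via H\ref{assum:strong:mixing}, and concludes with H\ref{assum:boundH}. Your marginal-domination step is exactly the one the paper uses, so there is no gap.
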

\begin{proof}
By definition, 
\begin{align*}
\Delta_2(\theta, \theta', \varphi, \varphi', y_{0:T}) &= \left|\mathbb{E}_{q^{y_{0:T}}_{\varphi',0:T}}\left[\log \frac{q^{y_{0:T}}_{\varphi,0:T}(X_{0:T})}{\phi^{y_{0:T}}_{\theta,0:T|T}(X_{0:T})}\right] - \mathbb{E}_{q^{y_{0:T}}_{\varphi',0:T}}\left[\log \frac{q^{y_{0:T}}_{\varphi',0:T}(X_{0:T})}{\phi^{y_{0:T}}_{\theta',0:T|T}(X_{0:T})}\right]\right|\,,\\
&\leq \mathbb{E}_{q^{y_{0:T}}_{\varphi',0:T}}\left[\left|\log \frac{q^{y_{0:T}}_{\varphi,0:T}(X_{0:T})}{\phi^{y_{0:T}}_{\theta,0:T|T}(X_{0:T})} - \log \frac{q^{y_{0:T}}_{\varphi',0:T}(X_{0:T})}{\phi^{y_{0:T}}_{\theta',0:T|T}(X_{0:T})}\right|\right]\,,\\
&\leq \sum_{t=1}^T\mathbb{E}_{q^{y_{0:T}}_{\varphi',0:T}}\left[\left|h^{y_{0:T}}_{t,\theta,\varphi}(X_{t-1},X_t)-h^{y_{0:T}}_{t,\theta',\varphi'}(X_{t-1},X_t)\right|\right]\,,
\end{align*}
where $h^{y_{0:T}}_{t,\theta,\varphi}$, $1\leq t\leq T$,  are defined in \eqref{eq:def:addfunc}. For $t<T$ and all $x_{t-1}$, $x_t\in\Xset$, 
\begin{multline*}
\left|h^{y_{0:T}}_{t,\theta,\varphi}(x_{t-1},x_t)-h^{y_{0:T}}_{t,\theta',\varphi'}(x_{t-1},x_t)\right| \leq \left|\log q^{y_{0:T}}_{\varphi,t-1|t}(x_t,x_{t-1}) - \log q^{y_{0:T}}_{\varphi',t-1|t}(x_t,x_{t-1})\right|\\
+ \left|\log b^{y_{0:t-1}}_{\theta,t-1|t}(x_t,x_{t-1})- \log b^{y_{0:t-1}}_{\theta',t-1|t}(x_t,x_{t-1})\right|\eqsp.
\end{multline*}
Using H\ref{assum:strong:mixing} and H\ref{assum:boundH},
\begin{align*}
&\mathbb{E}_{q^{y_{0:T}}_{\varphi',0:T}}\left[\left|\log q^{y_{0:T}}_{\varphi,t-1|t}(x_t,x_{t-1}) - \log q^{y_{0:T}}_{\varphi',t-1|t}(x_t,x_{t-1})\right|\right] \\
&\hspace{2cm}\leq (\udupvar^{y_{0:T}})^{2}\int \lambda_+\otimes \lambda_+ (\rmd x \rmd x' )\left|\log q^{y_{0:T}}_{\varphi,t-1|t}(x,x') - \log q^{y_{0:T}}_{\varphi',t-1|t}(x,x')\right|\eqsp,\\
&\hspace{2cm}\leq (\udupvar^{y_{0:T}})^{2}c_{1,t}^{y_{0:T}}\left\|\varphi-\varphi'\right\|_2\eqsp.
\end{align*}
Similarly,
\begin{align*}
&\mathbb{E}_{q^{y_{0:T}}_{\varphi',0:T}}\left[\left|\log b^{y_{0:t-1}}_{\theta,t-1|t}(x_t,x_{t-1}) - \log b^{y_{0:t-1}}_{\theta',t-1|t}(x_t,x_{t-1})\right|\right] \\
&\hspace{2cm}\leq (\udupvar^{y_{0:T}})^{2}\int \lambda_+\otimes \lambda_+ (\rmd x \rmd x' )\left|\log b^{y_{0:t-1}}_{\theta,t-1|t}(x,x') - \log b^{y_{0:t-1}}_{\theta',t-1|t}(x,x')\right|\eqsp,\\
&\hspace{2cm}\leq (\udupvar^{y_{0:T}})^{2}c_{2,t}^{y_{0:t-1}}\left\|\theta-\theta'\right\|_2\eqsp.
\end{align*}
For $t=T$, it remains to bound $\mathbb{E}_{q^{y_{0:T}}_{\varphi',0:T}}[|\log q_{\varphi,T}^{y_{0:T}}(X_T) - \log q_{\varphi',T}^{y_{0:T}}(X_T)| + |\log \phi_{\theta,T}^{y_{0:T}}(X_T) - \log \phi_{\theta',T}^{y_{0:T}}(X_T)|]$, which is straightforward by using H\ref{assum:strong:mixing} and H\ref{assum:boundH}.
\end{proof}

\begin{proposition}
    \label{prop:assumA:likelihood}
    Assume that H\ref{assum:strong:mixing}-\ref{assum:bound:likelihood} and H\ref{assum:pdata} hold. Then, there exists $c>0$ such that ,
    $$\left\|\sup_{\theta\in\Theta}\left|\log L_T^{Y_{0:T}}(\theta)\right|\right\|_{\psi_{\alpha_*}}\leq cT\eqsp.
    $$
\end{proposition}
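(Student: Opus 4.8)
The plan is to exploit the standard predictive (prequential) factorization of the likelihood, combine it with the two-sided bounds coming from H\ref{assum:strong:mixing} and H\ref{assum:bound:likelihood} to get a bound uniform in $\theta$, and then reduce the claim to a moment computation for $\log c_\pm(Y_t)$ controlled by H\ref{assum:pdata}. First I would write, exactly as in the proof of Proposition~\ref{prop:loglikelihood:lipschitz},
$$
\log L_T^{y_{0:T}}(\theta) = \sum_{t=0}^T \log p_\theta(y_t \mid y_{0:t-1})\eqsp,
$$
with $p_\theta(y_t\mid y_{0:t-1}) = \int \filtmeas^{y_{0:t-1}}_{\theta,t-1}(\rmd x_{t-1}) M_\theta(x_{t-1},\rmd x_t) g_\theta^{y_t}(x_t)$ for $t\ge1$ and $p_\theta(y_0) = \int\chi(\rmd x_0) g_\theta^{y_0}(x_0)$. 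Using that $\chi$ and $M_\theta(x,\cdot)$ are sandwiched between $\udlow\eta_-$ and $\udup\eta_+$ (H\ref{assum:strong:mixing}), that $\int g_\theta^{y}\eta_-\ge c_-(y)$ and $\int g_\theta^{y}\eta_+\le c_+(y)$ (H\ref{assum:bound:likelihood}), and that $\filtmeas^{y_{0:t-1}}_{\theta,t-1}$ is a probability measure, I obtain the two-sided bound, uniform in $\theta$,
$$
\udlow c_-(y_t) \le p_\theta(y_t\mid y_{0:t-1}) \le \udup c_+(y_t)\eqsp,\qquad 0\le t\le T\eqsp.
$$

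Taking logarithms and distinguishing the cases $p_\theta\ge1$ and $p_\theta<1$ then yields
$$
\sup_{\theta\in\Theta}\bigl|\log p_\theta(y_t\mid y_{0:t-1})\bigr| \le |\log\udlow| + |\log\udup| + |\log c_-(y_t)| + |\log c_+(y_t)|\eqsp,
$$
so that $\sup_{\theta\in\Theta} |\log L_T^{y_{0:T}}(\theta)| \le \sum_{t=0}^T Z_t$ with $Z_t := C_0 + |\log c_-(y_t)| + |\log c_+(y_t)|$ and $C_0 = |\log\udlow|+|\log\udup|$, a deterministic constant.

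The last and most delicate step controls the $\psi_{\alpha_*}$-norm of this sum. Since $\|\cdot\|_{\psi_{\alpha_*}}$ is not a genuine norm when $\alpha_*<1$, I would \emph{not} apply a term-by-term triangle inequality but instead route through polynomial moments, using the classical equivalence that there are constants $c_1(\alpha_*),c_2(\alpha_*)$ with $c_1(\alpha_*)\sup_{p\ge1}\|X\|_p/p^{1/\alpha_*} \le \|X\|_{\psi_{\alpha_*}} \le c_2(\alpha_*)\sup_{p\ge1}\|X\|_p/p^{1/\alpha_*}$ for any real $X$; this follows from the tail bound $\mathbb{P}(|X|>x)\le 2\exp(-(x/\|X\|_{\psi_{\alpha_*}})^{\alpha_*})$ together with a $\Gamma$-function estimate in one direction, and from a series expansion of $\E[\exp((|X|/\lambda)^{\alpha_*})]$ in the other. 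By Minkowski's inequality (valid in $L^p$ for every $p\ge1$) and the left inequality applied to each $Z_t$, using the third part of H\ref{assum:pdata} to get $\bigl\||\log c_-(Y_t)|\vee|\log c_+(Y_t)|\bigr\|_{\psi_{\alpha_*}}\le B$,
$$
\Bigl\|\sum_{t=0}^T Z_t\Bigr\|_p \le \sum_{t=0}^T \|Z_t\|_p \le (T+1)\bigl(C_0 + 2c_2(\alpha_*)B\bigr)p^{1/\alpha_*}\eqsp,
$$
where I used $C_0\le C_0\, p^{1/\alpha_*}$ for $p\ge1$. Dividing by $p^{1/\alpha_*}$, taking the supremum over $p\ge1$, and applying the right inequality of the equivalence gives $\bigl\|\sup_{\theta}|\log L_T^{Y_{0:T}}(\theta)|\bigr\|_{\psi_{\alpha_*}}\le cT$ for a constant $c$ depending only on $\alpha_*$, $B$, $\udlow$ and $\udup$, which is the claim.

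The main obstacle worth flagging is precisely the regime $\alpha_*<1$, where $\|\cdot\|_{\psi_{\alpha_*}}$ fails the triangle inequality: a naive iteration of a quasi-triangle inequality would produce an exponential-in-$T$ blow-up, whereas the passage through $L^p$ moments is exactly what turns the sum of $T+1$ subexponential terms into the desired linear-in-$T$ bound. Everything else is a routine bookkeeping of the mixing and likelihood bounds already used in Proposition~\ref{prop:loglikelihood:lipschitz}.
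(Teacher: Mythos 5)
Your proof is correct and follows essentially the same route as the paper's: the predictive factorization $\log L_T^{y_{0:T}}(\theta)=\sum_{t=0}^T\log p_\theta(y_t\mid y_{0:t-1})$, the uniform two-sided bounds $\udlow c_-(y_t)\leq p_\theta(y_t\mid y_{0:t-1})\leq \udup c_+(y_t)$ from H\ref{assum:strong:mixing}--H\ref{assum:bound:likelihood}, and then the Orlicz bound on $\log c_\pm(Y_t)$ from H\ref{assum:pdata} to sum the $T+1$ terms. The only difference is that where the paper simply invokes ``the triangular inequality'' for $\|\cdot\|_{\psi_{\alpha_*}}$, you justify the summation in the regime $\alpha_*<1$ (where $\psi_{\alpha_*}$ is not convex and the Luxemburg functional is only a quasi-norm) by passing through the equivalent $\sup_{p\geq 1}p^{-1/\alpha_*}\|\cdot\|_p$ characterization and Minkowski's inequality in $L^p$; this is a careful and welcome refinement of the paper's final step rather than a different argument.
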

\begin{proof}
    For all $\theta\in\Theta$, and all $y_{0:T}\in\Yset^{T+1}$, with the convention $p_\theta(y_0|y_{-1}) = p_\theta(y_0)$,
$$
\log L_T^{y_{0:T}}(\theta) = \ell_T^{y_{0:T}}(\theta) = \sum_{t=0}^T \log p_\theta(y_t|y_{0:t-1})\eqsp.
$$
As $p_\theta(y_0) = \int \chi(\rmd x_0)g_\theta^{y_0}(x_0) $, by H\ref{assum:strong:mixing}-\ref{assum:bound:likelihood}, $\udlow c_-(y_0) \leq p_\theta(y_0) \leq \udup c_+(y_0)$. For all $t>0$, 
$$
p_\theta(y_t|y_{0:t-1}) = \int \filtmeas^{y_{0:t-1}}_{\theta,t-1}(\rmd x_{t-1})M_\theta(x_{t-1},\rmd x_t)g_\theta^{y_t}(x_t)\eqsp,
$$
so  that by H\ref{assum:strong:mixing}-\ref{assum:bound:likelihood} $\udlow c_-(y_t) \leq p_\theta(y_t|y_{0:t-1}) \leq \udup c_+(y_t)$. Using  the second point in H\ref{assum:pdata} and the triangular inequality  concludes the proof.
\end{proof}

\begin{proposition}
    \label{prop:assumA:kl}
    Assume that H\ref{assum:strong:mixing} and H\ref{assum:pdata} hold. Then, there exists $B>0$ such that
    $$
\left\|\mathrm{sup}_{\theta\in\Theta,\varphi\in\Phi,\chi}\left| \mathrm{KL}\left(Q^{Y_{0:T}}_{\varphi,0:T}\middle\| \phi_{\theta,T}^{Y_{0:T}}\right)\right|\right\|_{\psi_{\alpha_*}}\leq B T\eqsp,
$$
\end{proposition}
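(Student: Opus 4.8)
The plan is to exploit the fact that both the variational law $Q^{y_{0:T}}_{\varphi,0:T}$ and the joint smoothing law $\filtmeas_{\theta,0:T|T}^{y_{0:T}}$ share the \emph{same} backward factorization, so that the log–density ratio between them telescopes into the additive functional built in \eqref{eq:def:addfunc}. Matching the two backward products factor by factor (the special convention for $h^{y_{0:T}}_{T,\theta,\varphi}$ is precisely what absorbs the terminal marginals $q^{y_{0:T}}_{\varphi,T}/\phi^{y_{0:T}}_{\theta,T}$ together with the first backward kernel), one obtains
$$
\log \frac{\rmd Q^{y_{0:T}}_{\varphi,0:T}}{\rmd \filtmeas_{\theta,0:T|T}^{y_{0:T}}}(x_{0:T}) = \sum_{t=1}^T h^{y_{0:T}}_{t,\theta,\varphi}(x_{t-1},x_t)\eqsp,
$$
whence $\mathrm{KL}(Q^{y_{0:T}}_{\varphi,0:T}\| \filtmeas_{\theta,0:T|T}^{y_{0:T}}) = \sum_{t=1}^T \mathbb{E}_{q^{y_{0:T}}_{\varphi,0:T}}[h^{y_{0:T}}_{t,\theta,\varphi}(X_{t-1},X_t)]$ and, by the triangle inequality, $|\mathrm{KL}(Q^{y_{0:T}}_{\varphi,0:T}\| \filtmeas_{\theta,0:T|T}^{y_{0:T}})| \leq \sum_{t=1}^T \mathbb{E}_{q^{y_{0:T}}_{\varphi,0:T}}[|h^{y_{0:T}}_{t,\theta,\varphi}(X_{t-1},X_t)|]$.

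The central step, and the one I expect to require the most care, is to dominate the two-dimensional marginal of $Q^{y_{0:T}}_{\varphi,0:T}$ on $(X_{t-1},X_t)$ uniformly in $\theta$, $\varphi$ and $\chi$. Reading off the backward factorization, the marginal of $X_t$ is obtained by composing the backward variational kernels from time $T$ down to $t$; since the last kernel applied, $Q^{y_{0:T}}_{\varphi,t|t+1}(x_{t+1},\cdot)$ (or $Q^{y_{0:T}}_{\varphi,T}$ when $t=T$), is dominated by $\udupvar^{y_{0:T}}\lambda_+$ by H\ref{assum:strong:mixing}, integrating against the remaining probability kernels shows that the marginal of $X_t$ is itself dominated by $\udupvar^{y_{0:T}}\lambda_+$. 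Composing once more with $Q^{y_{0:T}}_{\varphi,t-1|t}(x_t,\cdot) \leq \udupvar^{y_{0:T}}\lambda_+$ shows that the joint marginal of $(X_{t-1},X_t)$ is dominated by $(\udupvar^{y_{0:T}})^2\,\lambda_+\otimes\lambda_+$. Consequently, for each $t$,
$$
\mathbb{E}_{q^{y_{0:T}}_{\varphi,0:T}}\left[\left|h^{y_{0:T}}_{t,\theta,\varphi}(X_{t-1},X_t)\right|\right] \leq (\udupvar^{y_{0:T}})^2\,\lambda_+\otimes\lambda_+\left(\left|h^{y_{0:T}}_{t,\theta,\varphi}\right|\right)\eqsp.
$$

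Summing over $t$ yields the pathwise bound $|\mathrm{KL}(Q^{y_{0:T}}_{\varphi,0:T}\| \filtmeas_{\theta,0:T|T}^{y_{0:T}})| \leq (\udupvar^{y_{0:T}})^2\sum_{t=1}^T \lambda_+\otimes\lambda_+(|h^{y_{0:T}}_{t,\theta,\varphi}|)$, valid uniformly over $\theta\in\Theta$, $\varphi\in\Phi$ and over admissible initial laws $\chi$. Taking the supremum over $\theta,\varphi,\chi$ and then the Orlicz norm of order $\alpha_*$, the right-hand side becomes exactly the quantity controlled by the second bound of H\ref{assum:pdata}; monotonicity of $\|\cdot\|_{\psi_{\alpha_*}}$ then delivers the claimed inequality with the same constant $B$. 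The only genuine difficulty is the marginal domination argument of the second paragraph, which is where the strong mixing assumption on the backward variational kernels is essential; the remaining steps are the factorization identity and a direct appeal to H\ref{assum:pdata}.
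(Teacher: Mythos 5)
Your proposal is correct and follows essentially the same route as the paper: decompose the Kullback--Leibler divergence as $\sum_{t=1}^T\mathbb{E}_{q^{y_{0:T}}_{\varphi,0:T}}[h^{y_{0:T}}_{t,\theta,\varphi}(X_{t-1},X_t)]$ via the shared backward factorization, bound each term by $(\udupvar^{y_{0:T}})^2\lambda_+\otimes\lambda_+(|h^{y_{0:T}}_{t,\theta,\varphi}|)$ using the domination of the $(X_{t-1},X_t)$-marginal from H\ref{assum:strong:mixing}, and conclude with the second bound in H\ref{assum:pdata}. Your marginal-domination argument is exactly the step the paper compresses into ``By H\ref{assum:strong:mixing}'', and you spell it out correctly.
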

\begin{proof}
For all $\theta\in\Theta$, $\varphi\in\Phi$, $y_{0:T}\in\Yset^{T+1}$,
$$
\mathrm{KL}\left(Q^{y_{0:T}}_{\varphi,0:T}\middle\| \phi_{\theta,T}^{y_{0:T}}\right) = \mathbb{E}_{q^{y_{0:T}}_{\varphi,0:T}}\left[\log \frac{q^{y_{0:T}}_{\varphi,0:T}(X_{0:T})}{\phi^{y_{0:T}}_{\theta,0:T|T}(X_{0:T})}\right] = \sum_{t=1}^T\mathbb{E}_{q^{y_{0:T}}_{\varphi,0:T}} \left[h^{y_{0:T}}_{t,\theta,\varphi}(X_{t-1},X_t)\right]\eqsp,
$$
where $h^{y_{0:T}}_{t,\theta,\varphi}$, $1\leq t\leq T$,  are defined in \eqref{eq:def:addfunc}. By H\ref{assum:strong:mixing}, for all $1\leq t \leq T$,
$$
\left|\mathbb{E}_{q^{y_{0:T}}_{\varphi,0:T}} \left[h^{y_{0:T}}_{t,\theta,\varphi}(X_{t-1},X_t)\right] \right| \leq (\udupvar^{y_{0:T}})^2\lambda_+\otimes\lambda_+\left(\left|h^{y_{0:T}}_{t,\theta,\varphi}\right|\right)\eqsp,
$$
which concludes the proof by H\ref{assum:pdata}.
\end{proof}


\section{Technical results}
\label{app:B}

\begin{lemma}
\label{lem:bound:filter}
Assume that H\ref{assum:strong:mixing} and H\ref{assum:bound:likelihood} hold. 
For all $\theta\in\Theta$,  all $t\geq 0$, all $y_{0:t}\in\Yset^{T+1}$, positive measurable function $h$, 
$$
\frac{\sigma_- \eta_-(g^{y_t}_\theta h)}{\sigma_+c_+(y_t)}\leq \filtmeas^{y_{0:t}}_{\theta,t}(h)\leq \frac{\sigma_+ \eta_+(g^{y_t}_\theta h)}{\sigma_-c_-(y_t)}\,.
$$
\end{lemma}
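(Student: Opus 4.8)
The plan is to reduce the two-sided bound to a single observation-update step via the filtering recursion, and then to control the resulting one-step predictive measure using the strong mixing assumption H\ref{assum:strong:mixing}. Recall that for $t\geq 1$ the filtering distribution satisfies
$$
\filtmeas^{y_{0:t}}_{\theta,t}(h) = \frac{(\filtmeas^{y_{0:t-1}}_{\theta,t-1}M_\theta)(g^{y_t}_\theta h)}{(\filtmeas^{y_{0:t-1}}_{\theta,t-1}M_\theta)(g^{y_t}_\theta)}\eqsp,
$$
the same identity that already underlies the formula for $p_\theta(y_t|y_{0:t-1})$ used in Proposition~\ref{prop:loglikelihood:lipschitz}, while for $t=0$ one has directly $\filtmeas^{y_0}_{\theta,0}(h) = \chi(g^{y_0}_\theta h)/\chi(g^{y_0}_\theta)$. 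In both cases the filtering distribution is a \emph{normalized} integral of the nonnegative function $g^{y_t}_\theta h$ against a predictive measure $\nu_t$, equal to $\filtmeas^{y_{0:t-1}}_{\theta,t-1}M_\theta$ when $t\geq 1$ and to $\chi$ when $t=0$.

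The first key step is to observe that, irrespective of $t$, the predictive measure $\nu_t$ is sandwiched between $\sigma_-\eta_-$ and $\sigma_+\eta_+$. For $t=0$ this is precisely the bound on $\chi$ stated in H\ref{assum:strong:mixing}. For $t\geq 1$, since $\filtmeas^{y_{0:t-1}}_{\theta,t-1}$ is a probability measure and the pointwise estimates $\sigma_-\eta_-(A)\leq M_\theta(x,A)\leq \sigma_+\eta_+(A)$ hold for every $x\in\Xset$, integrating against $\filtmeas^{y_{0:t-1}}_{\theta,t-1}$ gives $\sigma_-\eta_-(A)\leq \nu_t(A)\leq \sigma_+\eta_+(A)$ for all measurable $A$, hence $\sigma_-\eta_-(f)\leq \nu_t(f)\leq\sigma_+\eta_+(f)$ for every nonnegative measurable $f$.

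Applying this last inequality to $f=g^{y_t}_\theta h$ (nonnegative because $h$ is positive) bounds the numerator, and applying it to $f=g^{y_t}_\theta$ bounds the denominator. The second key step is to replace the $\eta_\pm(g^{y_t}_\theta)$ terms in the denominator by the constants of H\ref{assum:bound:likelihood}: by definition $\eta_-(g^{y_t}_\theta)\geq \inf_{\theta\in\Theta}\eta_-(g^{y_t}_\theta)=c_-(y_t)$ and $\eta_+(g^{y_t}_\theta)\leq \sup_{\theta\in\Theta}\eta_+(g^{y_t}_\theta)=c_+(y_t)$. Combining the upper bound on the numerator with the lower bound on the denominator yields $\filtmeas^{y_{0:t}}_{\theta,t}(h)\leq \sigma_+\eta_+(g^{y_t}_\theta h)/(\sigma_-c_-(y_t))$, and symmetrically the lower bound on the numerator with the upper bound on the denominator yields $\filtmeas^{y_{0:t}}_{\theta,t}(h)\geq \sigma_-\eta_-(g^{y_t}_\theta h)/(\sigma_+c_+(y_t))$, which is exactly the asserted inequality.

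There is no serious obstacle here: the estimate is a direct one-step computation and, importantly, requires no induction on $t$, because the entire conditioning history is absorbed into the single probability measure $\filtmeas^{y_{0:t-1}}_{\theta,t-1}$ against which $M_\theta$ is integrated. The only point demanding mild care is to check that the two-sided bound on $M_\theta$ transfers to $\nu_t$ uniformly in $\theta$ and in the earlier observations $y_{0:t-1}$; this is immediate since the bound in H\ref{assum:strong:mixing} is pointwise in $x$ and uniform in $\theta$, so that the constants $\sigma_\pm$ and $c_\pm(y_t)$ appearing in the final inequality depend only on $y_t$ and not on the rest of the trajectory.
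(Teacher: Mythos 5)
Your proof is correct and follows essentially the same route as the paper: write $\filtmeas^{y_{0:t}}_{\theta,t}(h)$ as the normalized integral of $g^{y_t}_\theta h$ against the predictive measure $\chi$ (for $t=0$) or $\filtmeas^{y_{0:t-1}}_{\theta,t-1}M_\theta$ (for $t\geq 1$), sandwich that measure between $\sigma_-\eta_-$ and $\sigma_+\eta_+$ via H\ref{assum:strong:mixing}, and control the normalizing constant with $c_\pm(y_t)$ from H\ref{assum:bound:likelihood}. Your observation that no induction on $t$ is needed, because the history is absorbed into the probability measure $\filtmeas^{y_{0:t-1}}_{\theta,t-1}$, is exactly the point the paper's terse ``similarly'' relies on.
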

\begin{proof}
At time 0, we have $\filtmeas^{y_{0}}_{\theta,0}(\rmd x_{0}) \propto \chi(\rmd x_0)g^{y_0}_\theta(x_0)$ so that by H\ref{assum:strong:mixing}-\ref{assum:bound:likelihood},
$$
\frac{\sigma_- \eta_-(g^{y_0}_\theta h)}{\sigma_+c_+(y_0)}\leq \filtmeas^{y_{0}}_{\theta,0}(h)\leq \frac{\sigma_+ \eta_+(g^{y_0}_\theta h)}{\sigma_-c_-(y_0)}\eqsp.
$$
Similarly, 
$$
\filtmeas^{y_{0:t}}_{\theta,t}(\rmd x_{t}) \propto g^{y_t}_\theta(x_t)\int \filtmeas^{y_{0:t-1}}_{\theta,t-1}(\rmd x_{t-1})M_\theta(x_{t-1},\rmd x_t)\,,
$$
so that by H\ref{assum:strong:mixing} and H\ref{assum:bound:likelihood},
$$
\frac{\sigma_- \eta_- (g^{y_t}_\theta h)}{\sigma_+c_+(y_t)}\leq \filtmeas^{y_{0:t}}_{\theta,t}(h)\leq \frac{\sigma_+ \eta_+ (g^{y_t}_\theta h)}{\sigma_-c_-(y_t)}\,.
$$
\end{proof}

\begin{lemma}
\label{lem:filtering:lip}
Assume that H\ref{assum:strong:mixing}, H\ref{assum:bound:likelihood} and H\ref{assum:lip} hold. Then, for all $\theta$, $\theta'\in\Theta$, $t\geq 1$,
$$
\left\|\filtmeas_{\theta,t}^{y_{0:t}} - \filtmeas_{\theta',t}^{y_{0:t}}\right\|_{\mathrm{tv}} \leq L_t(y_{0:t})\|\parvec-\parvec'\|_2\eqsp,
$$
where
$$
L_t(y_{0:t}) =\frac{4\udup^2}{\udlow^2}\sum_{s=0}^t\varepsilon^{t-s} \frac{1}{c_-(y_s)}\left\{\frac{1}{\udlow c_-(y_{s-1})}\eta_+\otimes\mu\left(\bar g^{y_{s-1}}\otimes\bar g^{y_{s}}\cdot M\right) +  \eta_+(G^{y_s})\right\}\eqsp,
$$
with $\varepsilon = 1 - \udlow/\udup$.

\end{lemma}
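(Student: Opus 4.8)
The plan is to derive the bound by combining a one-step parameter-perturbation estimate with the geometric forgetting of the optimal filter. First I would rewrite the filtering recursion in operator form: for a probability measure $\nu$ on $(\Xset,\mathcal{X})$, let $\mathsf{F}_\theta^{y_t}$ denote the one-step filter update,
$$
\mathsf{F}_\theta^{y_t}(\nu)(h) = \frac{\int \nu(\rmd x_{t-1}) M_\theta(x_{t-1},\rmd x_t) g_\theta^{y_t}(x_t) h(x_t)}{\int \nu(\rmd x_{t-1}) M_\theta(x_{t-1},\rmd x_t) g_\theta^{y_t}(x_t)}\eqsp,
$$
so that $\filtmeas_{\theta,t}^{y_{0:t}} = \mathsf{F}_\theta^{y_t}(\filtmeas_{\theta,t-1}^{y_{0:t-1}})$, with the obvious modification at $t=0$ where $\nu$ is replaced by $\chi$. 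The core device is a telescoping sum over the time index at which the parameter is switched from $\theta'$ to $\theta$: writing $H_s$ for the measure obtained by running $\mathsf{F}_{\theta'}$ for the updates $0,\ldots,s-1$ and $\mathsf{F}_\theta$ for the updates $s,\ldots,t$, one has $H_0 = \filtmeas_{\theta,t}^{y_{0:t}}$ and $H_{t+1} = \filtmeas_{\theta',t}^{y_{0:t}}$, whence
$$
\filtmeas_{\theta,t}^{y_{0:t}} - \filtmeas_{\theta',t}^{y_{0:t}} = \sum_{s=0}^t \left(\mathsf{F}_\theta^{y_t}\circ\cdots\circ\mathsf{F}_\theta^{y_{s+1}}\right)\!\left[\mathsf{F}_\theta^{y_s}(\filtmeas_{\theta',s-1}^{y_{0:s-1}}) - \mathsf{F}_{\theta'}^{y_s}(\filtmeas_{\theta',s-1}^{y_{0:s-1}})\right]\eqsp,
$$
in which each summand isolates a change of parameter at the single time $s$, followed by propagation under the $\theta$-filter.

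I would then control each summand as the product of a contraction factor and a single-step perturbation. Under H\ref{assum:strong:mixing}, the prediction kernels $M_\theta$ satisfy a Doeblin minorization, so successive filter updates act as a contraction for the total variation norm with Dobrushin coefficient bounded by $\varepsilon = 1 - \udlow/\udup$; invoking the forgetting property of the filter as in \cite[Section~6.1.3]{douc2014nonlinear} yields, for the zero-mass signed measure in the bracket,
$$
\left\|\left(\mathsf{F}_\theta^{y_t}\circ\cdots\circ\mathsf{F}_\theta^{y_{s+1}}\right)\!\left[\mathsf{F}_\theta^{y_s}(\nu_s) - \mathsf{F}_{\theta'}^{y_s}(\nu_s)\right]\right\|_{\mathrm{tv}} \leq \varepsilon^{t-s}\left\|\mathsf{F}_\theta^{y_s}(\nu_s) - \mathsf{F}_{\theta'}^{y_s}(\nu_s)\right\|_{\mathrm{tv}}\eqsp,
$$
with $\nu_s = \filtmeas_{\theta',s-1}^{y_{0:s-1}}$. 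For the one-step difference I would expand the numerators through the identity $M_\theta g_\theta^{y_s} - M_{\theta'}g_{\theta'}^{y_s} = M_\theta (g_\theta^{y_s}-g_{\theta'}^{y_s}) + (M_\theta - M_{\theta'}) g_{\theta'}^{y_s}$, bound the two normalizing constants below by $\udlow c_-(y_s)$ and $\udlow c_-(y_{s-1})$ via H\ref{assum:bound:likelihood}, apply the Lipschitz estimates of H\ref{assum:lip} ($|m_\theta - m_{\theta'}|\leq M\|\theta-\theta'\|_2$ and $|g_\theta^{y_s}-g_{\theta'}^{y_s}|\leq G^{y_s}\|\theta-\theta'\|_2$), and invoke Lemma~\ref{lem:bound:filter} at time $s-1$ to replace integration against $\nu_s$ by integration against $\eta_+$ weighted by $\bar g^{y_{s-1}}$. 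This produces exactly the two contributions $\udlow^{-1}c_-(y_{s-1})^{-1}\eta_+\otimes\mu(\bar g^{y_{s-1}}\otimes\bar g^{y_s}\cdot M)$ and $\eta_+(G^{y_s})$, with the normalization and total-variation factors collected into the prefactor $4\udup^2\udlow^{-2}c_-(y_s)^{-1}$. Summing over $s$ and recognizing the geometric weights $\varepsilon^{t-s}$ gives $L_t(y_{0:t})$.

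The main obstacle will be the second step: establishing the geometric contraction with the precise rate $\varepsilon$ while simultaneously carrying out the single-step perturbation bound with constants matching $L_t$ exactly. Because of the normalization, the filter map is nonlinear and the contraction cannot be read off from a single linear kernel; one must argue at the level of the prediction--update composition and exploit both the lower and upper minorizations in H\ref{assum:strong:mixing}. Equally delicate is the bookkeeping that turns the $M_\theta - M_{\theta'}$ term into $\eta_+\otimes\mu(\bar g^{y_{s-1}}\otimes\bar g^{y_s}\cdot M)$ with the correct power of $\udup/\udlow$ and the correct dependence on $c_-(y_{s-1})$, which requires combining Lemma~\ref{lem:bound:filter} with the likelihood lower bounds at times $s-1$ and $s$. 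Finally, the $s=0$ term, where the input is the prior $\chi$ rather than a filter, must be treated separately, but it follows the same pattern using the minorization of $\chi$ in H\ref{assum:strong:mixing}.
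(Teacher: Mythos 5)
Your overall architecture coincides with the paper's: the same telescoping over the time index at which the parameter is switched from $\theta'$ to $\theta$, a geometric contraction applied to each propagated one-step perturbation, and a one-step bound obtained from H\ref{assum:lip}, H\ref{assum:bound:likelihood} and Lemma~\ref{lem:bound:filter}. Your treatment of the one-step difference (splitting $M_\theta g_\theta^{y_s}-M_{\theta'}g_{\theta'}^{y_s}$ into two increments, lower-bounding the normalizers via $\udlow c_-(y_s)$, and invoking Lemma~\ref{lem:bound:filter} at time $s-1$ to pass to $\eta_+$ weighted by $\bar g^{y_{s-1}}$) is exactly the paper's handling of the terms $R_1$ and $R_2$, including the special case $s=0$.

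The genuine gap is the contraction step. The inequality you display,
$$
\left\|\left(\mathsf{F}_\theta^{y_t}\circ\cdots\circ\mathsf{F}_\theta^{y_{s+1}}\right)\!\left[\mathsf{F}_\theta^{y_s}(\nu_s)-\mathsf{F}_{\theta'}^{y_s}(\nu_s)\right]\right\|_{\mathrm{tv}}\leq \varepsilon^{t-s}\left\|\mathsf{F}_\theta^{y_s}(\nu_s)-\mathsf{F}_{\theta'}^{y_s}(\nu_s)\right\|_{\mathrm{tv}}\eqsp,
$$
is not meaningful as written, since the filter updates are nonlinear and do not act on the signed difference; and read as a Lipschitz bound for the composed filter map on probability measures, it is not justified with constant $1$, because the Dobrushin contraction argument applies to Markov kernels and the prediction--update composition is not one. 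You identify this yourself as ``the main obstacle'' but leave it unresolved. The paper's resolution is the reformulation $F_{\parvec,t}^{y_t}\cdots F_{\parvec,s+1}^{y_{s+1}}\gamma = \gamma_{\parvec,s|t}\,F_{\parvec,s+1|t}^{y_{s+1:t}}\cdots F_{\parvec,t|t}^{y_{t}}$, where $\gamma_{\parvec,s|t}\propto\beta_{\parvec,s|t}^{y_{s+1:t}}\gamma$ is the input law reweighted by the backward function and the forward smoothing kernels $F_{\parvec,u|t}^{y_{u:t}}$ \emph{are} Markov kernels satisfying a Doeblin minorization under H\ref{assum:strong:mixing}, yielding the rate $\varepsilon^{t-s}$ applied to $\|\gamma_{\parvec,\parvec',s|t}-\tilde\gamma_{\parvec,\parvec',s|t}\|_{\mathrm{tv}}$. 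Converting that reweighted difference back to the one-step filter perturbation costs an extra factor $2\udup/\udlow$ (the ratio $\|\beta_{\parvec,s|t}^{y_{s+1:t}}\|_\infty/\inf_x\beta_{\parvec,s|t}^{y_{s+1:t}}(x)$), and it is precisely this factor, multiplied by the $2\udup/(\udlow c_-(y_s))$ coming from the one-step bound, that produces the prefactor $4\udup^2/\udlow^2$ in $L_t$. Without this reformulation the contraction claim is unsupported and your prefactor bookkeeping cannot close.
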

\begin{proof}

The proof follows the same lines as the proof of \cite[Proposition~2.1]{de2017consistent}, which was in the setting of a discrete state space. For $t>0$, note that $\filtmeas_{\theta,t}^{y_{0:t}}(\rmd x_{t}) =  g_\theta^{y_t}(x_t)\int \filtmeas_{\theta,t-1}^{y_{0:t-1}}(\rmd x_{t-1})M_\theta(x_{t-1},\rmd x_t)/c_{\theta,t}(y_{0:t})$ where $c_{\theta,t}(y_{0:t}) = \int g_\theta^{y_t}(x_t) \filtmeas_{\theta,t-1}^{y_{0:t-1}}(\rmd x_{t-1})M_\theta(x_{t-1},\rmd x_t)$. Consider the forward kernel at time $t$ defined, for all $\parvec\in\parspace$, all $y_t\in\Yset$, $x\in\mathbb{R}^d$, and probability measure $\gamma$ by
$$
F_{\parvec,t}^{y_t}\gamma(x) = \frac{\int m_\parvec(x',x)g_\parvec^{y_t}(x)\gamma(\rmd x')}{\int m_\parvec(x',x'')g_\parvec^{y_t}(x'')\gamma(\rmd x')\mu(\rmd x'')}\eqsp.
$$
Therefore, $\filtmeas_{\theta,t}^{y_{0:t}} = F_{\parvec,t}^{y_t}\filtmeas_{\theta,t-1}^{y_{0:t-1}}$ and for all $\parvec, \parvec'\in\parspace$,
\begin{align*}
\filtmeas_{\theta,t}^{y_{0:t}} - \filtmeas_{\theta',t}^{y_{0:t}} &= F_{\parvec,t}^{y_t}\filtmeas_{\theta,t-1}^{y_{0:t-1}} - F_{\parvec',t}^{y_t}\filtmeas_{\theta',t-1}^{y_{0:t-1}}\eqsp,\\
&= \sum_{s=0}^{t-1}\Delta_{t,s}(y_{s:t}) + F_{\parvec,t}^{y_t}\filtmeas_{\theta',t-1}^{y_{0:t-1}} - F_{\parvec',t}^{y_t}\filtmeas_{\theta',t-1}^{y_{0:t-1}}\eqsp,
\end{align*}
where 
$$
\Delta_{t,s}(y_{s:t}) = F_{\parvec,t}^{y_t}\cdots F_{\parvec,s+1}^{y_{s+1}}F_{\parvec,s}^{y_{s}}\filtmeas_{\theta',s-1}^{y_{0:s-1}} - F_{\parvec,t}^{y_t}\cdots F_{\parvec,s+1}^{y_{s+1}}\filtmeas_{\theta',s}^{y_{0:s}}
$$
with the convention $F_{\parvec,0}^{y_{0}}\filtmeas_{\theta',-1}^{y_{-1}} =\filtmeas_{\theta,0}^{y_{0}}$. Consider also the backward function $\beta_{s|t}^{y_{s+1:t}}$ and the forward smoothing kernel $F_{s|t,\parvec}^{y_{s:t}}$ defined by
\begin{align*}
    \beta_{\parvec,s|t}^{y_{s+1:t}}(x_s) &= \int M_\parvec(x_s,\rmd x_{s+1})g_\parvec^{y_{s+1}}(x_{s+1})\cdots M_\parvec(x_{t-1},\rmd x_{t})g_\parvec^{y_{t}}(x_{t})\eqsp,\\
    F_{\parvec,s|t}^{y_{s:t}}(x_{s-1},x_s) &= \frac{\beta_{s|t}^{y_{s+1:t}}(x_s)m_\parvec(x_{s-1},x_{s})g_\parvec^{y_{s}}(x_{s})}{\int \beta_{s|t}^{y_{s+1:t}}(x)M_\parvec(x_{s-1},\rmd x)g_\parvec^{y_{s}}(x)}\eqsp.
\end{align*}
Following for instance \cite[Chapter~4]{10.5555/1088883}, we can write for all probability measure $\gamma$,
$$
F_{\parvec,t}^{y_t}\cdots F_{\parvec,s+1}^{y_{s+1}}\gamma = \gamma_{\parvec,s|t} F_{\parvec,s+1|t}^{y_{s+1:t}}\cdots F_{\parvec,t|t}^{y_{t}}\eqsp,
$$
where $\gamma_{\parvec,s|t} \propto \beta_{\parvec,s|t}^{y_{s+1:t}}\gamma$. 
Therefore,
$$
\filtmeas_{\theta,t}^{y_{0:t}} - \filtmeas_{\theta',t}^{y_{0:t}} = \sum_{s=0}^{t-1}\left(\gamma_{\parvec,\parvec',s|t} F_{\parvec,s+1|t}^{y_t}\cdots F_{\parvec,t|t}^{y_{s+1}} - \tilde\gamma_{\parvec,\parvec',s|t} F_{\parvec,s+1|t}^{y_t}\cdots F_{\parvec,t|t}^{y_{s+1}}\right)  + F_{\parvec,t}^{y_t}\filtmeas_{\theta',t-1}^{y_{0:t-1}} - F_{\parvec',t}^{y_t}\filtmeas_{\theta',t-1}^{y_{0:t-1}}\eqsp,
$$
where $\gamma_{\parvec,\parvec',s|t} \propto \beta_{\parvec,s|t}^{y_{s+1:t}} F_{\parvec,s}^{y_{s}}\filtmeas_{\theta',s-1}^{y_{0:s-1}}$ and $\tilde\gamma_{\parvec,\parvec',s|t} \propto \beta_{\parvec,s|t}^{y_{s+1:t}}\filtmeas_{\theta',s}^{y_{0:s}}$.  Note that by H\ref{assum:strong:mixing}, for all measurable sets $A$,
$$
F_{\parvec,s|t}^{y_{s:t}}(x_{s-1},A) \geq \frac{\udlow}{\udup}\frac{\int \eta_-(\rmd x_s)  \mathds{1}_A(x)\beta_{s|t}^{y_{s+1:t}}(x)g_\parvec^{y_{s}}(x)}{\int \eta_+(\rmd x)\beta_{s|t}^{y_{s+1:t}}(x)g_\parvec^{y_{s}}(x)}\eqsp,
$$
so that
$$
\left\|\gamma_{\parvec,\parvec',s|t} F_{\parvec,s+1|t}^{y_t}\cdots F_{\parvec,t|t}^{y_{s+1}} - \tilde\gamma_{\parvec,\parvec',s|t} F_{\parvec,s+1|t}^{y_t}\cdots F_{\parvec,t|t}^{y_{s+1}}\right\|_{\mathrm{tv}} \leq \epsilon^{t-s}\left\|\gamma_{\parvec,\parvec',s|t}-\tilde \gamma_{\parvec,\parvec',s|t}\right\|_{\mathrm{tv}}\eqsp,
$$
with $\epsilon = 1 - \udlow/\udup$. This yields
$$
\left\|\filtmeas_{\theta,t}^{y_{0:t}} - \filtmeas_{\theta',t}^{y_{0:t}}\right\|_{\mathrm{tv}} \leq \sum_{s=0}^{t-1}\epsilon^{t-s}\left\|\gamma_{\parvec,\parvec',s|t}-\tilde \gamma_{\parvec,\parvec',s|t}\right\|_{\mathrm{tv}} + \left\| F_{\parvec,t}^{y_t}\filtmeas_{\theta',t-1}^{y_{0:t-1}} - F_{\parvec',t}^{y_t}\filtmeas_{\theta',t-1}^{y_{0:t-1}}\right\|_{\mathrm{tv}}\eqsp.
$$
For all bounded measurable functions $h$,
\begin{align*}
\left|\gamma_{\parvec,\parvec',s|t}(h)-\tilde \gamma_{\parvec,\parvec',s|t}(h)\right| &= \left| \frac{\int \beta_{\parvec,s|t}^{y_{s+1:t}}(x_s) F_{\parvec,s}^{y_{s}}\filtmeas_{\theta',s-1}^{y_{0:s-1}}(x_s)h(x_s)\mu(\rmd x_s)}{\int  \beta_{\parvec,s|t}^{y_{s+1:t}}(x_s) F_{\parvec,s}^{y_{s}}\filtmeas_{\theta',s-1}^{y_{0:s-1}}(x_s)\mu(\rmd x_s)}- \frac{\int \beta_{\parvec,s|t}^{y_{s+1:t}}(x_s)\filtmeas_{\theta',s}^{y_{0:s}}(x_s)h(x_s)\mu(\rmd x_s)}{\int \beta_{\parvec,s|t}^{y_{s+1:t}}(x_s)\filtmeas_{\theta',s}^{y_{0:s}}(x_s)\mu(\rmd x_s)}\right|\eqsp,\\
&\leq \delta^{y_{0:t}}_{\parvec,\parvec',1}(h) + \delta^{y_{0:t}}_{\parvec,\parvec',2}(h)\eqsp,
\end{align*}
where
\begin{align*}
\delta^{y_{0:t}}_{\parvec,\parvec',1}(h) &= \frac{\int \beta_{\parvec,s|t}^{y_{s+1:t}}(x_s) \left|F_{\parvec,s}^{y_{s}}\filtmeas_{\theta',s-1}^{y_{0:s-1}}(x_s) - F_{\parvec',s}^{y_{s}}\filtmeas_{\theta',s-1}^{y_{0:s-1}}(x_s) \right|h(x_s)\mu(\rmd x_s)}{\int  \beta_{\parvec,s|t}^{y_{s+1:t}}(x_s) F_{\parvec,s}^{y_{s}}\filtmeas_{\theta',s-1}^{y_{0:s-1}}(x_s)\mu(\rmd x_s)} \eqsp,\\
\delta^{y_{0:t}}_{\parvec,\parvec',2}(h) &= \frac{\int \beta_{\parvec,s|t}^{y_{s+1:t}}(x_s)\filtmeas_{\theta',s}^{y_{0:s}}(x_s)h(x_s)\mu(\rmd x_s)}{\int \beta_{\parvec,s|t}^{y_{s+1:t}}(x_s)\filtmeas_{\theta',s}^{y_{0:s}}(x_s)\mu(\rmd x_s)}\frac{\int  \beta_{\parvec,s|t}^{y_{s+1:t}}(x_s) \left|F_{\parvec,s}^{y_{s}}\filtmeas_{\theta',s-1}^{y_{0:s-1}}(x_s)- F_{\parvec',s}^{y_{s}}\phi_{\theta',s-1}^{y_{0:s-1}}(x_s)\right|\mu(\rmd x_s)}{\int  \beta_{\parvec,s|t}^{y_{s+1:t}}(x_s) F_{\parvec,s}^{y_{s}}\filtmeas_{\theta',s-1}^{y_{0:s-1}}(x_s)\mu(\rmd x_s)}\eqsp.
\end{align*}
Note that for all $x_s\in\Xset$, by H\ref{assum:strong:mixing},
\begin{multline*}
\udlow \int \eta_-(\rmd x_{s+1})g_\parvec^{y_{s+1}}(x_{s+1})\cdots m_\parvec(x_{t-1},x_{t})g_\parvec^{y_{t}}(x_{t})\mu(\rmd x_{s+2:t})\leq\beta_{\parvec,s|t}^{y_{s+1:t}}(x_s)\\
\leq \udup\int \eta_+(\rmd x_{s+1})g_\parvec^{y_{s+1}}(x_{s+1})\cdots m_\parvec(x_{t-1},x_{t})g_\parvec^{y_{t}}(x_{t})\mu(\rmd x_{s+2:t})\eqsp,
\end{multline*}
so that 
\begin{multline*}
\delta^{y_{0:t}}_{\parvec,\parvec',1}(h) + \delta^{y_{0:t}}_{\parvec,\parvec',2}(h) \leq 2\|h\|_\infty  \|F_{\parvec,s}^{y_{s}}\filtmeas_{\theta',s-1}^{y_{0:s-1}}- F_{\parvec',s}^{y_{s}}\filtmeas_{\theta',s-1}^{y_{0:s-1}}\|_{\mathrm{tv}}\frac{\|\beta_{\parvec,s|t}^{y_{s+1:t}}\|_\infty}{\inf_{x\in\Xset}\beta_{\parvec,s|t}^{y_{s+1:t}}(x_s)} \\
\leq 2\frac{\udup}{\udlow}\|h\|_\infty  \|F_{\parvec,s}^{y_{s}}\filtmeas_{\theta',s-1}^{y_{0:s-1}}- F_{\parvec',s}^{y_{s}}\filtmeas_{\theta',s-1}^{y_{0:s-1}}\|_{\mathrm{tv}}\eqsp.
\end{multline*}
For all bounded measurable function $h$,
$$
\left|F_{\parvec,s}^{y_{s}}\filtmeas_{\theta',s-1}^{y_{0:s-1}}h- F_{\parvec',s}^{y_{s}}\filtmeas_{\theta',s-1}^{y_{0:s-1}}h\right|\leq R_1 + R_2\eqsp,
$$
where
\begin{align*}
    R_1 &= \left|\frac{\int \left(m_\parvec(x',x)g_\parvec^{y_s}(x) - m_{\parvec'}(x',x)g_{\parvec'}^{y_s}(x)\right)\filtmeas_{\theta',s-1}^{y_{0:s-1}}(\rmd x')h(x)\mu(\rmd x)}{\int m_\parvec(x',x'')g_\parvec^{y_s}(x'')\filtmeas_{\theta',s-1}^{y_{0:s-1}}(\rmd x')\mu(\rmd x'')}\right|\eqsp,\\
    R_2 &= \left|\frac{\int m_{\parvec'}(x',x'')g_{\parvec'}^{y_s}(x'')\filtmeas_{\theta',s-1}^{y_{0:s-1}}(\rmd x')h(x'')\mu(\rmd x'')}{\int m_{\parvec'}(x',x'')g_{\parvec'}^{y_s}(x'')\filtmeas_{\theta',s-1}^{y_{0:s-1}}(\rmd x')\mu(\rmd x'')}\right|\cdot\left|\frac{\int \left(m_{\parvec}(x',x'')g_{\parvec}^{y_s}(x'')- m_{\parvec'}(x',x'')g_{\parvec'}^{y_s}(x'')\right)\filtmeas_{\theta',s-1}^{y_{0:s-1}}(\rmd x')\mu(\rmd x'')}{\int m_{\parvec}(x',x'')g_{\parvec}^{y_s}(x'')\filtmeas_{\theta',s-1}^{y_{0:s-1}}(\rmd x')\mu(\rmd x'')}\right|\eqsp.\\
\end{align*}
By H\ref{assum:strong:mixing}-\ref{assum:lip} and Lemma~\ref{lem:bound:filter},
$$
R_1 \leq \frac{\udup}{\udlow c_-(y_s)}\left\{\frac{1}{\udlow c_-(y_{s-1})}\eta_+\otimes\mu\left(\bar g^{y_{s-1}}\otimes \bar g^{y_{s}}\cdot M\right) +  \eta_+(G^{y_s})\right\}\|\parvec-\parvec'\|_2\|h\|_\infty
$$
The same upper bound can be obtained for $R_2$ as
$$
\left|\frac{\int m_{\parvec'}(x',x'')g_{\parvec'}^{y_s}(x'')\filtmeas_{\theta',s-1}^{y_{0:s-1}}(\rmd x')h(x'')\mu(\rmd x'')}{\int m_{\parvec'}(x',x'')g_{\parvec'}^{y_s}(x'')\filtmeas_{\theta',s-1}^{y_{0:s-1}}(\rmd x')\mu(\rmd x'')}\right| \leq \|h\|_\infty\eqsp.
$$
This yields 
$$
\|F_{\parvec,s}^{y_{s}}\filtmeas_{\theta',s-1}^{y_{0:s-1}}- F_{\parvec',s}^{y_{s}}\filtmeas_{\theta',s-1}^{y_{0:s-1}}\|_{\mathrm{tv}} \leq \frac{2\udup}{\udlow c_-(y_s)}\left\{\frac{1}{\udlow c_-(y_{s-1})}\eta_+\otimes\mu\left(\bar g^{y_{s-1}}\otimes \bar g^{y_{s}}\cdot M\right) +  \eta_+(G^{y_s})\right\}\|\parvec-\parvec'\|_2\eqsp, 
$$
which concludes the proof.
\end{proof}

\section{Checking assumptions}
\label{app:C}

\label{sec:check:assum}
In this section, we provide additional assumptions on the state space model and on the variational family to support that our assumptions can be verified.

\begin{hypA}
    \label{assum:strongmixingbis}
    There exist  constants $0 < \udlow < \udup < \infty$ such that for all  $x\in \Xset$,
$$
\udlow \leq \zeta(x) \leq \udup 
$$
and for all $\theta\in\Theta$, $x,x'\in\Xset$,
$$
\udlow \leq m_{\parvec}(x, x') \leq \udup \eqsp.
$$ 
For all  $y_{0:T}\in\Yset^{T+1}$, there exist $\udlowvar^{y_{0:T}}>0$ and $\udupvar^{y_{0:T}}>0$ such that for all $\varphi\in\Phi$, $t\geq 0$, all $x,x'\in \Xset$,
$$ \udlowvar^{y_{0:T}} \leq  q_{\varphi,t\vert t+1}^{y_{0:T}}(x, x') \leq \udupvar^{y_{0:T}}\eqsp. 
$$ 
In addition, for all $\varphi\in\Phi$, all $y_{0:T}\in \Yset^{T+1}$, and all $x\in\Xset$,
$$
\udlowvar^{y_{0:T}} \leq  q_{\varphi,T}^{y_{0:T}}(x) \leq \udupvar^{y_{0:T}}\eqsp.
$$
\end{hypA}
 Assumption A\ref{assum:strongmixingbis} is known as a strong-mixing assumption and allows to verify H\ref{assum:strong:mixing}. It is classical to obtain quantitative bounds on  approximation of joint smoothing distributions, see for instance  \cite{10.3150/07-BEJ6150,10.3150/21-BEJ1431}. It  typically
requires the state space $\Xset$ to be compact. In settings where the bacwkard variartional kernels are Gaussian and obtained with neural networks which are uniformly bounded with respect to the time index and the observations, $\udupvar^{y_{0:T}}$ and $\udlowvar^{y_{0:T}}$ do not depend on the observations.
\begin{hypA}
\label{assum:bound:likelihoodbis}
For all $y\in\Yset$, $\mathrm{inf}_{\theta\in\Theta}\int g_\theta^y(x)\mu(\rmd x) = c_-(y)>0$ and $\mathrm{sup}_{\theta\in\Theta}\int g_\theta^y(x) \mu(\rmd x) = c_+(y)<\infty$. 
\end{hypA}

Lemma~\ref{lem:bound:filter2}, Lemma~\ref{lem:bound:addfunc} and Proposition~\ref{prop:checkH4} allow to obtain explicit constants in H\ref{assum:boundH}. We prove that the functions $h_{t,\theta,\varphi}^{y_{0:T}}$ are upper-bounded explicitly,  and that $\phi_{\theta,t}^{y_{0:t}}$ and $b_{\theta,t-1|t}^{y_{0:t-1}}$ are lower and upper-bounded explicitly, in particular with respect to the observation sequence. 

When the observation space is compact we can also obtain a uniform control with respect to the observations of these quantities which is crucial to check H\ref{assum:moments} and H\ref{assum:pdata}.

\begin{lemma}
\label{lem:bound:filter2}
Assume that A\ref{assum:strongmixingbis} and A\ref{assum:bound:likelihood} hold. For all $\theta\in\Theta$,  all $t\geq 0$, all $y_{0:t}$, $x_t$, 
$$
\frac{\sigma_- g^{y_t}_\theta(x_t)}{\sigma_+c_+(y_t)}\leq \phi^{y_{0:t}}_{\theta,t}(x_{t})\leq \frac{\sigma_+ g^{y_t}_\theta(x_t)}{\sigma_-c_-(y_t)}\,.
$$
\end{lemma}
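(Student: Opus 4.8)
The plan is to exploit the standard forward recursion satisfied by the filtering density and to bound its numerator and denominator separately, exactly as in the proof of Lemma~\ref{lem:bound:filter} but now pointwise rather than integrated against a test function. Recall that $\phi^{y_0}_{\theta,0}(x_0) \propto \zeta(x_0) g^{y_0}_\theta(x_0)$ and that, for $t \geq 1$,
$$
\phi^{y_{0:t}}_{\theta,t}(x_t) = \frac{g^{y_t}_\theta(x_t)\int \phi^{y_{0:t-1}}_{\theta,t-1}(x_{t-1})m_\theta(x_{t-1},x_t)\mu(\rmd x_{t-1})}{\int g^{y_t}_\theta(x'_t)\left(\int \phi^{y_{0:t-1}}_{\theta,t-1}(x_{t-1})m_\theta(x_{t-1},x'_t)\mu(\rmd x_{t-1})\right)\mu(\rmd x'_t)}\eqsp.
$$
I would establish the two-sided bound by treating the initial time and the recursion separately, the whole argument resting on the uniform density bounds of A\ref{assum:strongmixingbis} and the normalising bounds $c_\pm$ of A\ref{assum:bound:likelihoodbis}.

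For the base case $t = 0$, I write $\phi^{y_0}_{\theta,0}(x_0) = \zeta(x_0)g^{y_0}_\theta(x_0)/\int \zeta(x')g^{y_0}_\theta(x')\mu(\rmd x')$. Using $\sigma_- \leq \zeta \leq \sigma_+$ from A\ref{assum:strongmixingbis} to pinch numerator and denominator, and then $c_-(y_0) \leq \int g^{y_0}_\theta\,\rmd\mu \leq c_+(y_0)$ from A\ref{assum:bound:likelihoodbis}, gives directly $\sigma_- g^{y_0}_\theta(x_0)/(\sigma_+ c_+(y_0)) \leq \phi^{y_0}_{\theta,0}(x_0) \leq \sigma_+ g^{y_0}_\theta(x_0)/(\sigma_- c_-(y_0))$, which is the claim at $t = 0$.

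For $t \geq 1$, the key observation is that the one-step predictive density $I(x_t) := \int \phi^{y_{0:t-1}}_{\theta,t-1}(x_{t-1})m_\theta(x_{t-1},x_t)\mu(\rmd x_{t-1})$ satisfies $\sigma_- \leq I(x_t) \leq \sigma_+$: since $\phi^{y_{0:t-1}}_{\theta,t-1}$ is a probability density with respect to $\mu$ it integrates to one, and $m_\theta$ is pinched between $\sigma_-$ and $\sigma_+$ by A\ref{assum:strongmixingbis}. Consequently the numerator $g^{y_t}_\theta(x_t)I(x_t)$ of $\phi^{y_{0:t}}_{\theta,t}(x_t)$ lies between $\sigma_- g^{y_t}_\theta(x_t)$ and $\sigma_+ g^{y_t}_\theta(x_t)$, while the denominator $\int g^{y_t}_\theta(x'_t)I(x'_t)\mu(\rmd x'_t)$ lies between $\sigma_- c_-(y_t)$ and $\sigma_+ c_+(y_t)$ by A\ref{assum:bound:likelihoodbis}. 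Dividing the lower (resp.\ upper) bound of the numerator by the upper (resp.\ lower) bound of the denominator yields the announced inequality; note in particular that, because the bounds on $I$ hold for \emph{every} $x_t$, no induction hypothesis on $\phi^{y_{0:t-1}}_{\theta,t-1}$ beyond its being normalised is needed.

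The argument is essentially mechanical and there is no genuine obstacle. The only point worth flagging is that, in contrast with Lemma~\ref{lem:bound:filter}, where the mixing entered only through the domination of the kernels by the measures $\eta_\pm$ and the bound was stated against test functions, here I need the genuine pointwise density bounds $\sigma_- \leq m_\theta, \zeta \leq \sigma_+$ supplied by A\ref{assum:strongmixingbis} in order to pinch $I(x_t)$ uniformly in $x_t$, while the normalising constant must be controlled from below by $\sigma_- c_-(y_t) > 0$ through A\ref{assum:bound:likelihoodbis} to keep the upper bound finite.
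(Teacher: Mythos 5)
Your proof is correct and follows essentially the same route as the paper: both arguments use the forward recursion $\filtdens^{y_{0:t}}_{\theta,t}(x_t)\propto g^{y_t}_\theta(x_t)\int\filtdens^{y_{0:t-1}}_{\theta,t-1}(x_{t-1})m_\theta(x_{t-1},x_t)\mu(\rmd x_{t-1})$, pinch the predictive integral between $\sigma_-$ and $\sigma_+$ via A1 and the normalisation of the previous filter, and control the normalising constant by $c_\pm(y_t)$ via A2. Your write-up is merely more explicit than the paper's (which compresses the recursion step into a single ``so that''), and your observation that no induction hypothesis beyond normalisation is needed is accurate.
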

\begin{proof}
At time 0, we have $\filtdens^{y_{0}}_{\theta,0}( x_{0}) \propto \zeta( x_0)g^{y_0}_\theta(x_0)$ so that by A\ref{assum:strongmixingbis}-\ref{assum:bound:likelihoodbis},
$$
\frac{\sigma_- g^{y_0}_\theta (x_0)}{\sigma_+c_+(y_0)}\leq \filtdens^{y_{0}}_{\theta,0}(x_0)\leq \frac{\sigma_+ g^{y_0}_\theta (x_0)}{\sigma_-c_-(y_0)}\eqsp.
$$
Similarly, 
$$
\filtdens^{y_{0:t}}_{\theta,t}(x_{t}) \propto g^{y_t}_\theta(x_t)\int \filtmeas^{y_{0:t-1}}_{\theta,t-1}(\rmd x_{t-1})m_\theta(x_{t-1}, x_t)\mu(\rmd x_t)\,,
$$
so that by A\ref{assum:strongmixingbis} and A\ref{assum:bound:likelihoodbis},
$$
\frac{\sigma_- g^{y_t}_\theta (x_t)}{\sigma_+c_+(y_t)}\leq \filtdens^{y_{0:t}}_{\theta,t}(x_t)\leq \frac{\sigma_+ \eta g^{y_t}_\theta (x_t)}{\sigma_-c_-(y_t)}\,.
$$
\end{proof}

\begin{lemma}
\label{lem:bound:addfunc}
Assume that A\ref{assum:strongmixingbis} and A\ref{assum:bound:likelihood} hold. For all $\theta$, all $1\leq t\leq T$, all $y_{0:T}$, $x_{t-1}$, $x_t$,
$$
\frac{\sigma_-^2 g^{y_{t-1}}_\theta(x_{t-1})}{\sigma_+^2c_+(y_{t-1})}\leq b^{y_{0:t-1}}_{\theta,t-1|t}(x_t,x_{t-1}) \leq \frac{\sigma_+^2 g^{y_{t-1}}_\theta(x_{t-1})}{\sigma_-^2c_-(y_{t-1})}
$$
and for $1\leq t \leq T-1$, 
\begin{multline*}
\|h^{y_{0:T}}_{t,\theta,\varphi}\|_\infty \leq
 |\log\udlowvar(y_{0:T})|\vee |\log\udupvar(y_{0:T})| \\
 + \sup_{x_{t-1}\in\Xset}\left|\log \frac{\sigma_-^2 c_-(y_{t-1}) \underline g^{y_{t-1}}(x_{t-1})}{\sigma_+^2 c_+(y_{t-1})}\right|\vee \left|\log \frac{\sigma_+^2 c_+(y_{t-1})\bar g^{y_{t-1}}(x_{t-1})}{\sigma_-^2 c_-(y_{t-1})}\right|
\end{multline*}
and 
\begin{multline*}
\|h^{y_{0:T}}_{T,\theta,\varphi}\|_\infty \leq |\log 2\udlowvar^{y_{0:T}}|\vee |\log 2\udupvar^{y_{0:T}}| + \sup_{x_T\in\Xset}\left|\log \frac{\sigma_- \underline g^{y_T}(x_T)}{\sigma_+c_+(y_T)}\right| \vee\left|\log \frac{\sigma_+ \bar g^{y_T}(x_T)}{\sigma_-c_-(y_T)}\right| \\
 + \sup_{x_{T-1}\in\Xset}\left|\log \frac{\sigma_-^2 c_-(y_{T-1}) \underline g^{y_{T-1}}(x_{T-1})}{\sigma_+^2 c_+(y_{T-1})}\right|\vee \left|\log \frac{\sigma_+^2 c_+(y_{T-1})\bar g^{y_{T-1}}(x_{T-1})}{\sigma_-^2 c_-(y_{T-1})}\right|\eqsp,
\end{multline*}
where $h_{t,\theta,\varphi}$, $1\leq t\leq T$,  are defined in \eqref{eq:def:addfunc}.
\end{lemma}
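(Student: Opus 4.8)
The plan is to prove the three displayed bounds in order, the first (two-sided control of $b^{y_{0:t-1}}_{\theta,t-1|t}$) being the substantive step and the two $\|h\|_\infty$ bounds then following by the triangle inequality from pointwise bounds that are already available.

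For the backward density I start from its definition, which in density form reads
\begin{equation*}
b^{y_{0:t-1}}_{\theta,t-1|t}(x_t,x_{t-1}) = \frac{\filtdens^{y_{0:t-1}}_{\theta,t-1}(x_{t-1})\,m_\theta(x_{t-1},x_t)}{\int \filtdens^{y_{0:t-1}}_{\theta,t-1}(x')\,m_\theta(x',x_t)\,\mu(\rmd x')}\eqsp.
\end{equation*}
The crucial point is to avoid inserting the filtering bound of Lemma~\ref{lem:bound:filter2} directly, which would lose powers of $\udlow,\udup$ and of $c_\pm$; instead I use the filtering recursion to factor $\filtdens^{y_{0:t-1}}_{\theta,t-1}(x_{t-1}) \propto g^{y_{t-1}}_\theta(x_{t-1})\,P_{\theta,t-1}(x_{t-1})$, where $P_{\theta,t-1}$ is the one-step prediction density and satisfies $\udlow \le P_{\theta,t-1} \le \udup$ by A\ref{assum:strongmixingbis} (with $\udlow \le \zeta \le \udup$ used when $t-1=0$). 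The normalising constant of $\filtdens^{y_{0:t-1}}_{\theta,t-1}$ is common to numerator and denominator and cancels, leaving the emission factor $g^{y_{t-1}}_\theta(x_{t-1})$ intact in the numerator while only its $\mu$-integral, squeezed between $c_-(y_{t-1})$ and $c_+(y_{t-1})$ by A\ref{assum:bound:likelihoodbis}, enters the denominator. Bounding the two remaining occurrences of $m_\theta$ and $P_{\theta,t-1}$ between $\udlow$ and $\udup$ then gives precisely $\udlow^2 g^{y_{t-1}}_\theta(x_{t-1})/(\udup^2 c_+(y_{t-1})) \le b^{y_{0:t-1}}_{\theta,t-1|t} \le \udup^2 g^{y_{t-1}}_\theta(x_{t-1})/(\udlow^2 c_-(y_{t-1}))$. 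This cancellation is the main obstacle; every later step is bookkeeping.

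For $1\le t\le T-1$ I apply the triangle inequality to the definition \eqref{eq:def:addfunc} of $h^{y_{0:T}}_{t,\theta,\varphi}$, splitting it into $|\log q^{y_{0:T}}_{\varphi,t-1|t}|$ and $|\log b^{y_{0:t-1}}_{\theta,t-1|t}|$. The coding term is bounded by $|\log\udlowvar^{y_{0:T}}|\vee|\log\udupvar^{y_{0:T}}|$ straight from the strong-mixing control of $q$ in A\ref{assum:strongmixingbis}. For the decoding term I feed the two-sided bound on $b$ just obtained, together with $\underline g^{y_{t-1}}\le g^{y_{t-1}}_\theta\le \bar g^{y_{t-1}}$, into the elementary fact that $0<\alpha\le z\le\beta$ implies $|\log z|\le|\log\alpha|\vee|\log\beta|$, and then take the supremum over $x_{t-1}$; this produces the logarithmic terms displayed in the statement.

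For $t=T$ the additive convention for $h^{y_{0:T}}_{T,\theta,\varphi}$ yields four logarithms. The two coding terms $\log q^{y_{0:T}}_{\varphi,T-1|T}$ and $\log q^{y_{0:T}}_{\varphi,T}$ are each controlled by A\ref{assum:strongmixingbis} and together form the first summand. The term $\log b^{y_{0:T-1}}_{\theta,T-1|T}$ is treated exactly as in the previous paragraph, evaluated at the time index $T-1$. Finally $\log\filtdens^{y_{0:T}}_{\theta,T}$ is bounded by applying Lemma~\ref{lem:bound:filter2} at time $T$ and inserting $\underline g^{y_T}\le g^{y_T}_\theta\le\bar g^{y_T}$, which after a supremum over $x_T$ gives the middle summand. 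Adding the four contributions yields the stated bound.
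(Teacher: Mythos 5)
Your proof is correct and follows essentially the same route as the paper's: write $b^{y_{0:t-1}}_{\theta,t-1|t}(x_t,x_{t-1})$ as the ratio of $\filtdens^{y_{0:t-1}}_{\theta,t-1}(x_{t-1})m_\theta(x_{t-1},x_t)$ to its $\mu$-integral in $x_{t-1}$, sandwich numerator and denominator using A\ref{assum:strongmixingbis} and A\ref{assum:bound:likelihoodbis}, and then obtain the $\|h^{y_{0:T}}_{t,\theta,\varphi}\|_\infty$ bounds by the triangle inequality on the logarithms together with the uniform bounds on the variational kernels and on $\filtdens^{y_{0:T}}_{\theta,T}$. The one step you handle differently --- and more sharply --- is the normalising integral. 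The paper bounds $\filtdens^{y_{0:t-1}}_{\theta,t-1}(x_{t-1})m_\theta(x_{t-1},x_t)$ pointwise via Lemma~\ref{lem:bound:filter2} and then integrates that sandwich, which brings $c_-(y_{t-1})$ and $c_+(y_{t-1})$ into the denominator a second time; you instead observe that $\int\filtdens^{y_{0:t-1}}_{\theta,t-1}(x')m_\theta(x',x_t)\mu(\rmd x')\in[\udlow,\udup]$ because the filter has total mass one (equivalently, via your factorization $\filtdens^{y_{0:t-1}}_{\theta,t-1}\propto g^{y_{t-1}}_\theta\cdot P_{\theta,t-1}$ with the predictive density $P_{\theta,t-1}\in[\udlow,\udup]$). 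Your version yields exactly the first display of the lemma, whereas the paper's intermediate computation produces the looser sandwich $\sigma_-^2c_-(y_{t-1})g^{y_{t-1}}_\theta/(\sigma_+^2c_+(y_{t-1}))\le b^{y_{0:t-1}}_{\theta,t-1|t}\le\sigma_+^2c_+(y_{t-1})g^{y_{t-1}}_\theta/(\sigma_-^2c_-(y_{t-1}))$, and it is these extra $c_\pm(y_{t-1})$ factors that then reappear inside the logarithms of the printed $\|h\|_\infty$ bounds. Consequently your final logarithmic terms differ from the displayed ones precisely by those factors; yours are the ones consistent with the first display of the lemma, and nothing downstream (where only finiteness and integrability of $\upsilon_t^{y_{0:T}}$ matter) is affected. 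One minor point: for $t=T$, combining the two coding terms by the triangle inequality gives $2\bigl(|\log\udlowvar^{y_{0:T}}|\vee|\log\udupvar^{y_{0:T}}|\bigr)$ rather than the printed $|\log 2\udlowvar^{y_{0:T}}|\vee|\log 2\udupvar^{y_{0:T}}|$; the former is what your argument (correctly) delivers.
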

\begin{proof}
By Lemma~\ref{lem:bound:filter},
$$
\frac{\sigma_-^2 g^{y_{t-1}}_\theta(x_{t-1})}{\sigma_+c_+(y_{t-1})}\leq \phi_{\theta,t-1}^{y_{0:t-1}}(x_{t-1})m_\theta(x_{t-1},x_t)\leq \frac{\sigma_+^2 g^{y_{t-1}}_\theta(x_{t-1})}{\sigma_-c_-(y_{t-1})}\eqsp.
$$
Since
$$
b^{y_{0:t-1}}_{\theta,t-1|t}(x_t,x_{t-1}) = \frac{\phi_{\theta,t-1}^{y_{0:t-1}}(x_{t-1})m_\theta(x_{t-1},x_t)}{\int \phi_{\theta,t-1}^{y_{0:t-1}}(x_{t-1})m_\theta(x_{t-1},x_t)\mu(\rmd x_{t-1})}
$$
we get
$$
\frac{\sigma_-^2 c_-(y_{t-1}) g^{y_{t-1}}_\theta(x_{t-1})}{\sigma_+^2 c_+(y_{t-1})}\leq b^{y_{0:t-1}}_{\theta,t-1|t}(x_t,x_{t-1}) \leq \frac{\sigma_+^2 c_+(y_{t-1})g^{y_{t-1}}_\theta(x_{t-1})}{\sigma_-^2 c_-(y_{t-1})}.
$$
%
Now by \eqref{eq:def:addfunc}, for $1\leq t \leq T-1$,  $h^{y_{0:T}}_{t,\theta,\varphi}(x_{t-1},x_t) = \log q^{y_{0:T}}_{\varphi,t-1|t}(x_t,x_{t-1}) - \log b^{y_{0:t-1}}_{\theta,t-1|t}(x_t,x_{t-1})$ 
so that
\begin{multline*}
\left|h^{y_{0:T}}_{t,\theta,\varphi}(x_{t-1},x_t)\right| \leq 
 |\log\udlowvar(y_{0:T})|\vee |\log\udupvar(y_{0:T})| \\
 + \left|\log \frac{\sigma_-^2 c_-(y_{t-1}) g^{y_{t-1}}_\theta(x_{t-1})}{\sigma_+^2 c_+(y_{t-1})}\right|\vee \left|\log \frac{\sigma_+^2 c_+(y_{t-1})g^{y_{t-1}}_\theta(x_{t-1})}{\sigma_-^2 c_-(y_{t-1})}\right|\eqsp,
\end{multline*}
which concludes the proof. In addition,
using that
$$
h^{y_{0:T}}_{T,\theta,\varphi}(x_{T-1},x_T) = \log q^{y_{0:T}}_{\varphi,T-1|T}(x_T,x_{T-1}) - \log b^{y_{0:T-1}}_{\theta,T-1|T}(x_T,x_{T-1}) + \log q^{y_{0:T}}_{\varphi,T}(x_T) - \log \phi^{y_{0:T}}_{\theta,T}(x_T)
$$
yields
\begin{multline*}
\left|h^{y_{0:T}}_{T,\theta,\varphi}(x_{T-1},x_t)\right| \leq 
 |\log 2\udlowvar(y_{0:T})|\vee |\log 2\udupvar(y_{0:T})| + \left|\log \frac{\sigma_- g^{y_T}_\theta(x_T)}{\sigma_+c_+(y_T)}\right| \vee\left|\log \frac{\sigma_+ g^{y_T}_\theta(x_T)}{\sigma_-c_-(y_T)}\right| \\
 + \left|\log \frac{\sigma_-^2 c_-(y_{T-1}) g^{y_{T-1}}_\theta(x_{T-1})}{\sigma_+^2 c_+(y_{T-1})}\right|\vee \left|\log \frac{\sigma_+^2 c_+(y_{T-1})g^{y_{T-1}}_\theta(x_{T-1})}{\sigma_-^2 c_-(y_{T-1})}\right| \eqsp.
\end{multline*}

\end{proof}

\begin{proposition}
\label{prop:checkH4}
Assume that A\ref{assum:strongmixingbis}, A\ref{assum:bound:likelihoodbis} and H\ref{assum:lip} hold. Then H\ref{assum:boundH} holds. More precisely, for all $y_{0:T}\in \Yset^{T+1}$ and all $0\leq t\leq T$,
$$
\sup_{\theta\in\Theta,\varphi\in\Phi}\left\|\int \mu(\rmd x)\left|h^{y_{0:T}}_{t,\theta,\varphi}(x,\cdot)\right|\right\|_\infty = \upsilon_t^{y_{0:T}}<\infty\eqsp,
$$
where $\upsilon_t^{y_{0:T}} = \sup_{\theta\in\Theta,\varphi\in\Phi}\|h^{y_{0:T}}_{t,\theta,\varphi}\|_\infty$ is given in Lemma~\ref{lem:bound:addfunc}. For all $\theta,\theta'\in\Theta$, $\varphi,\varphi'\in\Phi$, $1\leq t\leq T$,
\begin{align*}
    \int \mu\otimes\mu(\rmd x\rmd x')\left|\log q^{y_{0:T}}_{\varphi,t-1|t}(x,x') - \log q^{y_{0:T}}_{\varphi',t-1|t}(x,x')\right| &\leq c_{1,t}^{y_{0:T}}\left\|\varphi-\varphi'\right\|_2\eqsp, \\
    \int \mu\otimes\mu(\rmd x\rmd x')\left|\log b^{y_{0:t-1}}_{\theta,t-1|t}(x,x') - \log b^{y_{0:t-1}}_{\theta',t-1|t}(x,x')\right| &\leq c_{2,t}^{y_{0:t-1}}\left\|\theta-\theta'\right\|_2\eqsp,\\
    \int \mu(\rmd x)\left|\log q^{y_{0:T}}_{\varphi,T}(x) - \log q^{y_{0:T}}_{\varphi',T}(x)\right| &\leq c_{3,T}^{y_{0:T}}\left\|\varphi-\varphi'\right\|_2\eqsp,\\
    \int \mu(\rmd x)\left|\log \phi^{y_{0:T}}_{\theta,T}(x) - \log \phi^{y_{0:T}}_{\theta',T}(x)\right| &\leq c_{4,T}^{y_{0:T}}\left\|\theta-\theta'\right\|_2\eqsp,
\end{align*}
where $c_{1,t}^{y_{0:T}} = (\udlowvar^{y_{0:T}})^{-1}\mu\otimes \mu(K^{y_{0:T}}_{t-1|t})$, $c_{2,t}^{y_{0:t-1}} = 2\udup L_{t-1}(y_{0:t-1})/(\udlow \inf_{x\in\Xset}\underline g^{y_{t-1}}(x))$, $c_{3,t}^{y_{0:T}} = (\udlowvar^{y_{0:T}})^{-1}\mu(K^{y_{0:T}}_{T})$, and $c_{4,T}^{y_{0:T}} = 2\udup  c_+(y_T)L_{T}(y_{0:T})/(\udlow \inf_{x\in\Xset}\underline g^{y_{T}}(x))$.
\end{proposition}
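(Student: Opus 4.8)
The plan is to establish the five estimates of H\ref{assum:boundH} separately, relying throughout on the elementary inequality $|\log a - \log b| \leq |a-b|/(a\wedge b)$ together with the uniform density bounds of A\ref{assum:strongmixingbis}, the emission bounds of A\ref{assum:bound:likelihoodbis}, the parameter-Lipschitz estimates of H\ref{assum:lip}, and the filter stability estimate of Lemma~\ref{lem:filtering:lip}. The bound on $\upsilon_t^{y_{0:T}}$ is immediate: since $\mu$ is a probability measure on the (compact) state space, $\int\mu(\rmd x)|h^{y_{0:T}}_{t,\theta,\varphi}(x,\cdot)| \leq \|h^{y_{0:T}}_{t,\theta,\varphi}\|_\infty$ pointwise in the free variable, so its $L^\infty$-norm is at most $\sup_{\theta,\varphi}\|h^{y_{0:T}}_{t,\theta,\varphi}\|_\infty$, which is finite and explicit by Lemma~\ref{lem:bound:addfunc}.

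For the two encoder estimates I would argue directly and pointwise. For all $x,x'\in\Xset$, the lower bound $q^{y_{0:T}}_{\varphi,t-1|t}(x,x')\geq\udlowvar^{y_{0:T}}$ of A\ref{assum:strongmixingbis} and the Lipschitz estimate of H\ref{assum:lip} give
$$
\left|\log q^{y_{0:T}}_{\varphi,t-1|t}(x,x') - \log q^{y_{0:T}}_{\varphi',t-1|t}(x,x')\right| \leq \frac{K^{y_{0:T}}_{t-1|t}(x',x)}{\udlowvar^{y_{0:T}}}\|\varphi-\varphi'\|_2\eqsp,
$$
and integrating against $\mu\otimes\mu$ yields $c_{1,t}^{y_{0:T}} = (\udlowvar^{y_{0:T}})^{-1}\mu\otimes\mu(K^{y_{0:T}}_{t-1|t})$; the argument for $q^{y_{0:T}}_{\varphi,T}$ is identical and produces $c_{3,T}^{y_{0:T}} = (\udlowvar^{y_{0:T}})^{-1}\mu(K^{y_{0:T}}_{T})$.

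The decoder estimates on $\filtdens^{y_{0:T}}_{\theta,T}$ and $b^{y_{0:t-1}}_{\theta,t-1|t}$ are where the work lies, since both depend on $\theta$ through the whole filtering recursion and through a $\theta$-dependent normaliser. For $\filtdens^{y_{0:T}}_{\theta,T}$ I would use the pointwise lower bound $\filtdens^{y_{0:T}}_{\theta,T}(x)\geq \udlow\,\inf_{x\in\Xset}\underline g^{y_T}(x)/(\udup c_+(y_T))$ of Lemma~\ref{lem:bound:filter2}, so that the elementary log inequality reduces the task to bounding $\int\mu(\rmd x)|\filtdens^{y_{0:T}}_{\theta,T}(x)-\filtdens^{y_{0:T}}_{\theta',T}(x)| = 2\|\filtmeas^{y_{0:T}}_{\theta,T}-\filtmeas^{y_{0:T}}_{\theta',T}\|_{\mathrm{tv}}$, which by Lemma~\ref{lem:filtering:lip} is at most $2L_T(y_{0:T})\|\theta-\theta'\|_2$; multiplying the two factors gives precisely $c_{4,T}^{y_{0:T}}$. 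For $b^{y_{0:t-1}}_{\theta,t-1|t}$ the same three steps apply: the log inequality, the lower bound on $b$ from Lemma~\ref{lem:bound:addfunc}, and the observation that $b^{y_{0:t-1}}_{\theta,t-1|t}(x_t,\cdot)$ is a probability density in its second argument, so that the $L^1(\mu\otimes\mu)$ distance between the backward kernels reduces to an integrated total variation distance; using the representation $b\propto\filtdens_{\theta,t-1}\,m_\theta$ together with H\ref{assum:lip} and Lemma~\ref{lem:filtering:lip}, this is controlled by $L_{t-1}(y_{0:t-1})$, giving $c_{2,t}^{y_{0:t-1}}$.

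The main obstacle is this last step. Unlike the encoder kernels, which are Lipschitz in $\varphi$ by assumption, the backward decoding kernel inherits its $\theta$-dependence from the filtering density $\filtdens_{\theta,t-1}$ and from its $x_t$-dependent normaliser, so the estimate genuinely requires the stability of the filter under parameter perturbations rather than a one-step Lipschitz bound. The geometric-forgetting mechanism behind Lemma~\ref{lem:filtering:lip}, itself a consequence of the strong mixing assumption A\ref{assum:strongmixingbis}, is what keeps $L_{t-1}(y_{0:t-1})$ finite and prevents the constant from degrading as $t$ grows; care is also needed to bound the normaliser $\int\filtdens_{\theta,t-1}(x')m_\theta(x',x_t)\mu(\rmd x')$ from below uniformly in $x_t$ and $\theta$ so that the log-ratio stays controlled.
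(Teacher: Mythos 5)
Your proof follows essentially the same route as the paper's: the elementary inequality $|\log a-\log b|\le|a-b|/(a\wedge b)$ combined with the uniform lower bounds of A\ref{assum:strongmixingbis}, Lemma~\ref{lem:bound:filter2} and Lemma~\ref{lem:bound:addfunc} handles the encoder terms and the log-ratios, while Lemma~\ref{lem:filtering:lip} controls the $\theta$-dependence of $\filtdens^{y_{0:T}}_{\theta,T}$ and $b^{y_{0:t-1}}_{\theta,t-1|t}$ through the total variation distance between filters, exactly as in the paper. The only loose end --- shared with the paper's own proof --- is that for $c_{2,t}^{y_{0:t-1}}$ both arguments fully detail only the filter-difference contribution in the three-term decomposition of $b_{\theta}-b_{\theta'}$ (the $m_\theta-m_{\theta'}$ and normaliser terms are left implicit), so your sketch is a faithful match.
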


\begin{proof}
For all $\varphi,\varphi'\in\Phi$, $1\leq t\leq T$, 
$$
\left|\log q^{y_{0:T}}_{\varphi,t-1|t}(x,x') - \log q^{y_{0:T}}_{\varphi',t-1|t}(x,x')\right| \leq \frac{\left| q^{y_{0:T}}_{\varphi,t-1|t}(x,x') - q^{y_{0:T}}_{\varphi',t-1|t}(x,x')\right|}{\left|q^{y_{0:T}}_{\varphi,t-1|t}(x,x') \wedge q^{y_{0:T}}_{\varphi',t-1|t}(x,x')\right|}\eqsp,
$$
so that by A\ref{assum:strongmixingbis} and H\ref{assum:lip},
$$
\left|\log q^{y_{0:T}}_{\varphi,t-1|t}(x,x') - \log q^{y_{0:T}}_{\varphi',t-1|t}(x,x')\right| \leq (\udlowvar^{y_{0:T}})^{-1}K^{y_{0:T}}_{t-1|t}(x',x)\|\varphi-\varphi'\|\eqsp,
$$
an we can choose $c_{1,t}^{y_{0:T}} = (\udlowvar^{y_{0:T}})^{-1}\mu\otimes \mu(K^{y_{0:T}}_{t-1|t})$. Similarly, for all $\varphi,\varphi'\in\Phi$, 
$$
\left|\log q^{y_{0:T}}_{\varphi,T}(x) - \log q^{y_{0:T}}_{\varphi',T}(x)\right| \leq \frac{\left| q^{y_{0:T}}_{\varphi,T}(x) - q^{y_{0:T}}_{\varphi',T}(x)\right|}{\left|q^{y_{0:T}}_{\varphi,T}(x) \wedge q^{y_{0:T}}_{\varphi',T}(x)\right|}\eqsp,
$$
so that by A\ref{assum:strongmixingbis} and H\ref{assum:lip},
$$
\left|\log q^{y_{0:T}}_{\varphi,T}(x) - \log q^{y_{0:T}}_{\varphi',T}(x)\right| \leq (\udlowvar^{y_{0:T}})^{-1}K^{y_{0:T}}_{T}(x)\|\varphi-\varphi'\|\eqsp,
$$
and we can choose $c_{3,t}^{y_{0:T}} = (\udlowvar^{y_{0:T}})^{-1}\mu(K^{y_{0:T}}_{T})$. For all $\theta,\theta'\in\Theta$, $1\leq t\leq T$, 
$$
\left|\log b^{y_{0:t-1}}_{\theta,t-1|t}(x,x') - \log b^{y_{0:t-1}}_{\theta',t-1|t}(x,x')\right| \leq \frac{\left| b^{y_{0:t-1}}_{\theta,t-1|t}(x,x') - b^{y_{0:t-1}}_{\theta',t-1|t}(x,x')\right|}{\left|b^{y_{0:t-1}}_{\theta,t-1|t}(x,x') \wedge b^{y_{0:t-1}}_{\theta',t-1|t}(x,x')\right|}\eqsp.
$$
By Lemma~\ref{lem:bound:addfunc},
$$
\left|\log b^{y_{0:t-1}}_{\theta,t-1|t}(x,x') - \log b^{y_{0:t-1}}_{\theta',t-1|t}(x,x')\right| \leq \frac{\sigma_+^2c_+(y_{t-1})}{\sigma_-^2 \underline g^{y_{t-1}}(x_{t-1})}\left| b^{y_{0:t-1}}_{\theta,t-1|t}(x,x') - b^{y_{0:t-1}}_{\theta',t-1|t}(x,x')\right|\eqsp.
$$
Then, noting that $b^{y_{0:t-1}}_{\theta,t-1|t}(x,x') = \phi^{y_{0:t-1}}_{\theta,t-1}(x')m_{\theta}(x',x)/c_\theta(x)$ where $c_\theta(x)  = \int \phi^{y_{0:t-1}}_{\theta,t-1}(x')m_{\theta}(x',x)\mu(\rmd x')$, write 
\begin{multline*}
\left| b^{y_{0:t-1}}_{\theta,t-1|t}(x,x') - b^{y_{0:t-1}}_{\theta',t-1|t}(x,x')\right| \leq \left|\frac{(\phi^{y_{0:t-1}}_{\theta,t-1}(x')-\phi^{y_{0:t-1}}_{\theta',t-1}(x'))m_{\theta}(x',x)}{c_\theta(x)}\right| \\
+ \left|\frac{\phi^{y_{0:t-1}}_{\theta',t-1}(x')(m_{\theta}(x',x)-m_{\theta'}(x',x))}{c_{\theta}(x)}\right| + \left|\frac{\phi^{y_{0:t-1}}_{\theta',t-1}(x')m_{\theta'}(x',x)}{c_{\theta'}(x)}\right|\left|\frac{c_{\theta'}(x)-c_{\theta}(x)}{c_{\theta}(x)}\right|
\end{multline*}
By A\ref{assum:strongmixingbis},
$$
\int \mu\otimes\mu(\rmd x\rmd x')\left|\frac{(\phi^{y_{0:t-1}}_{\theta,t-1}(x')-\phi^{y_{0:t-1}}_{\theta',t-1}(x'))m_{\theta}(x',x)}{\underline g^{y_{t-1}}(x')c_\theta(x)}\right|\leq 2\frac{\udup}{\udlow \inf_{x\in\Xset}\underline g^{y_{t-1}}(x)}\left\|\filtmeas_{\theta,t-1}^{y_{0:t-1}} - \filtmeas_{\theta',t-1}^{y_{0:t-1}}\right\|_{\mathrm{tv}}\eqsp,
$$
and by Lemma
~\ref{lem:bound:addfunc}, we can choose $c_{2,t}^{y_{0:t-1}} = 2\udup L_{t-1}(y_{0:t-1})/(\udlow \inf_{x\in\Xset}\underline g^{y_{t-1}}(x))$. For all $\theta,\theta'\in\Theta$,  
$$
\left|\log \filtdens^{y_{0:T}}_{\theta,T}(x) - \log \filtdens^{y_{0:T}}_{\theta',T}(x)\right| \leq \frac{\left| \filtdens^{y_{0:T}}_{\theta,T}(x) - \filtdens^{y_{0:T}}_{\theta',T}(x)\right|}{\left|\filtdens^{y_{0:T}}_{\theta,T}(x) \wedge \filtdens^{y_{0:T}}_{\theta',T}(x)\right|}\eqsp,
$$
By Lemma~\ref{lem:bound:filter2},
$$
\left|\log \filtdens^{y_{0:T}}_{\theta,T}(x) - \log \filtdens^{y_{0:T}}_{\theta',T}(x)\right| \leq \frac{\sigma_+c_+(y_T)}{\sigma_- \underline g^{y_T}(x)}\left| \filtdens^{y_{0:T}}_{\theta,T}(x) - \filtdens^{y_{0:T}}_{\theta',T}(x)\right|\eqsp.
$$
Therefore,
$$
\int \mu(\rmd x)\left|\log \filtdens^{y_{0:T}}_{\theta,T}(x) - \log \filtdens^{y_{0:T}}_{\theta',T}(x)\right| \leq 2\frac{\udup c_+(y_T)}{\udlow \inf_{x\in\Xset}\underline g^{y_{T}}(x)}\left\|\filtmeas_{\theta,T}^{y_{0:T}} - \filtmeas_{\theta',T}^{y_{0:T}}\right\|_{\mathrm{tv}}\eqsp,
$$
and by Lemma~\ref{lem:bound:addfunc}, we can choose $c_{4,T}^{y_{0:T}} = 2\udup  c_+(y_T)L_{T}(y_{0:T})/(\udlow \inf_{x\in\Xset}\underline g^{y_{T}}(x))$.
\end{proof}
 If the observation space is compact, under standard regularity assumptions, all upper bounds can be obtained uniformly with respect to the observations. 
\end{document}